  \let\oldparagraph\paragraph
  \renewcommand{\paragraph}{
    \@ifstar
      \xxxParagraphStar
      \xxxParagraphNoStar
  }
  \newcommand{\xxxParagraphStar}[1]{\oldparagraph*{#1}\mbox{}}
  \newcommand{\xxxParagraphNoStar}[1]{\oldparagraph{#1}\mbox{}}
  \let\oldsubparagraph\subparagraph
  \renewcommand{\subparagraph}{
    \@ifstar
      \xxxSubParagraphStar
      \xxxSubParagraphNoStar
  }
  \newcommand{\xxxSubParagraphStar}[1]{\oldsubparagraph*{#1}\mbox{}}
  \newcommand{\xxxSubParagraphNoStar}[1]{\oldsubparagraph{#1}\mbox{}}
\patchcmd\longtable{\par}{\if@noskipsec\mbox{}\fi\par}{}{}
\def\maxwidth{\ifdim\Gin@nat@width>\linewidth\linewidth\else\Gin@nat@width\fi}
\def\maxheight{\ifdim\Gin@nat@height>\textheight\textheight\else\Gin@nat@height\fi}
\def\fps@figure{htbp}
  \renewcommand*\contentsname{Table of contents}
  \newcommand\contentsname{Table of contents}
  \renewcommand*\listfigurename{List of Figures}
  \newcommand\listfigurename{List of Figures}
  \renewcommand*\listtablename{List of Tables}
  \newcommand\listtablename{List of Tables}
  \renewcommand*\figurename{Figure}
  \newcommand\figurename{Figure}
  \renewcommand*\tablename{Table}
  \newcommand\tablename{Table}
\newcommand{\anon}{1}
\theoremstyle{plain}
\newtheorem{theorem}{Theorem}
\newtheorem{proposition}{Proposition}
\newtheorem{lemma}{Lemma}
\newtheorem{definition}{Definition}
\newtheorem{assumption}{Assumption}
\begin{document}

\def\spacingset#1{\renewcommand{\baselinestretch}%
{#1}\small\normalsize} \spacingset{1}

%%%%%%%%%%%%%%%%%%%%%%%%%%%%%%%%%%%%%%%%%%%%%%%%%%%%%%%%%%%%%%%%%%%%%%%%%%%%%%

\if1\anon
{
  \title{\bf Multiple Testing of Partial Conjunction Hypotheses for Assessing Replicability Across Dependent Studies
}
  \author{Monitirtha Dey\thanks{Institute for Statistics, University of Bremen. Dey gratefully acknowledges financial support from the German Research Foundation (DFG) via Grant No. DI 1723/5-3.},~Trambak Banerjee\thanks{Analytics, Information and Operations, University of Kansas. Corresponding author: trambak@ku.edu},~Prajamitra Bhuyan\thanks{Operations Management, Indian Institute of Management, Calcutta.},\hspace{3cm}\\and Arunabha Majumdar\thanks{Department of Mathematics, Indian Institute of Technology, Hyderabad.}}
  \maketitle
} \fi

\if0\anon
{
  \bigskip
  \bigskip
  \bigskip
  \begin{center}
    {\LARGE\bf  Assessing Replicability Across Dependent Studies: A Framework for Testing Partial Conjunction Hypotheses with Application to GWAS}
\end{center}
  \medskip
} \fi
\vspace{-5pt}
\bigskip
\begin{abstract}
Replicability is central to scientific progress, and the partial conjunction (PC) hypothesis testing framework provides an objective tool to quantify it across disciplines. Existing PC methods assume independent studies. Yet many modern applications, such as genome-wide association studies (GWAS) with sample overlap, violate this assumption, leading to dependence among study-specific summary statistics. Failure to account for this dependence can drastically inflate type I errors when combining inferences. We propose e-Filter, a powerful procedure grounded on the theory of e-values. It involves a filtering step that retains a set of the most promising PC hypotheses, and a selection step where PC hypotheses from the filtering step are marked as discoveries whenever their e-values exceed a selection threshold. We establish the validity of e-Filter for FWER and FDR control under unknown study dependence. A comprehensive simulation study demonstrates its excellent power gains over competing methods.  We apply e-Filter to a GWAS replicability study to identify consistent genetic signals for low-density lipoprotein cholesterol (LDL-C). Here, the participating studies exhibit varying levels of sample overlap, rendering existing methods unsuitable for combining inferences. A subsequent pathway enrichment analysis shows that e-Filter replicated signals achieve stronger statistical enrichment on biologically relevant LDL-C pathways than competing approaches.  
\end{abstract}

\noindent%
{\it Keywords:} $e$-values, GWAS, Partial conjunction hypothesis, Replicability, Sample overlap.
\vfill

\newpage
\spacingset{1.5} % DON'T change the spacing!

\section{Introduction}
\label{sec:intro}
Replicability is fundamental to scientific progress, as consistent results across multiple studies enhance the credibility of new discoveries. %Yet, numerous investigations have revealed challenges in achieving it. For instance, \cite{Ioannidis} argued that most published research findings are neither replicable nor reproducible, \cite{Begley} reported that scientific findings were confirmed (reproduced) in only 6 of 53 ‘landmark’ cancer articles, and the Open Science Collaboration's \citep{open2015estimating} attempt to replicate 100 psychology studies found that only 36 produced statistically significant results consistent with the original findings, sparking a ‘replication crisis’ in psychological science \citep{Baker}.
The partial conjunction hypothesis (PCH) testing framework \citep{Friston2005,benjamini2008screening,benjamini2009selective} provides an objective inferential tool for determining whether a scientific finding is replicated across several studies. While PCH testing finds applications across myriad domains, such as neuroscience \citep{benjamini2008screening,Goeman2018}, genetics \citep{SunWei2011}, causal inference \citep{Karmakar}, the existing literature routinely assumes that the study-specific summary statistics are independent. See, for instance, \cite{Owen,Dickhaus2024,tran2025covariateadaptivetestreplicabilitymultiple,liang2025powerful,lyu2025testingcompositenullhypotheses} and a recent review by \cite{bogomolov2023replicability}. However, many modern applications violate this assumption, leading to dependence among
study-specific summary statistics. %Replicating scientific discoveries in contemporary research is often challenging due to the complexity of modern large-scale experiments. 
A typical example, for instance,  arises in replicability analysis of genome-wide association studies (GWAS), where millions of single-nucleotide polymorphisms (SNPs) are tested to identify variants that show consistent associations with a disease or trait across studies. Unlike meta-analysis, which tests whether a genetic variant–trait association exists in at least one study, replication requires that multiple studies detect the same association, yielding a more reliable set of genuine genetic signals. In GWAS, however, the effects of susceptibility SNPs on complex traits are typically small and require large sample sizes for reliable detection. To achieve this, contemporary studies often incorporate data from large biobanks, such as UK Biobank, BioVU, and 23andMe, which can lead to overlapping participants across studies. %{Follow-up studies that expand existing cohorts further contribute to sample overlap. 
	Moreover, shared controls are frequently used for multiple studies on a complex disease to reduce genotyping or sequencing costs \citep{Han2016}. Consequently, publicly available summary statistics, such as $p-$values {of genetic association}, from such studies may exhibit dependence, as {it is challenging to share individual-level genetic data due to privacy constraints}. This aspect is particularly important, since ignoring study dependence in replicability analyses can inflate type I error {rates} when combining inferences. Figure \ref{fig:motfig1}, discussed later, demonstrates this phenomenon with the help of a simple example. %Other settings in which dependence among study or replicating unit-specific summary statistics is common include finance, due to common market-wide shocks and correlated factor tests (Harvey, Liu and Zhu, 2016; Bryzgalova, 2016), neuroimaging, due to strong spatial correlation across regions and voxels measured on the same subjects (Friston et al., 1994; Nichols and Holmes, 2002) and policy evaluation with shared treated and control units leading to dependent outcome-specific estimators (Imbens and Rubin, 2015).
	
	In this paper, we propose $e-$Filter, a powerful framework for identifying replicating signals across multiple dependent studies. To assess replicability, $e-$Filter relies on the PCH testing framework and is based on the theory of $e-$values. It involves two main steps: a filtering step that retains a set of most promising PC hypotheses, and a selection step where PC hypotheses from the filtering step are marked as discoveries whenever their $e-$values exceed a data-driven selection threshold. We prove that $e-$Filter provides valid familywise error rate (FWER) and false discovery rate (FDR) control under unknown dependence between the study-specific $p-$values. Our work is motivated by an unique GWAS replicability study aimed at identifying consistent genetic signals associated with low-density lipoprotein cholesterol (LDL-C), a major risk factor for cardiovascular diseases. Section \ref{sec:realdata} provides a detailed account of this application, where the participating studies exhibit varying levels of sample overlap, rendering existing methods unsuitable for simultaneous error-rate control. We apply $e-$Filter to this replicability study and a subsequent pathway enrichment analysis reveals that the genetic variants replicated by $e-$Filter prioritize biologically relevant pathways to LDL-C, yielding a statistically stronger enrichment than competing approaches.
	%The work that is most closely related to ours is AdaFilter by \cite{Owen}. In Section \ref{sec:e_filter} we motivate $e-$Filter as an ``$e-$value analogue'' of AdaFilter. However, in contrast to AdaFilter, $e-$Filter does not require independent studies and is much more widely applicable as long as any $p-$value that is valid for testing the global null under arbitrary dependence between the studies is available.
	%\citep{Friston2005,benjamini2008screening,benjamini2009selective}, which is an objective inferential tool for determining whether a scientific finding is replicated across several studies. While PCH testing finds applications across myriad domains, such as neuroscience \citep{benjamini2008screening,Goeman2018}, genetics \citep{SunWei2011}, causal inference \citep{Karmakar}, the existing literature typically assumes that the study-specific summary statistics are independent. See, for instance, \cite{Owen,Dickhaus2024,tran2025covariateadaptivetestreplicabilitymultiple,liang2025powerful,lyu2025testingcompositenullhypotheses} and a recent review by \cite{bogomolov2023replicability}. However, GWAS replicability studies often involve {multiple} dependent studies, and failure to account for this dependence can drastically inflate type I error {rates} when combining inferences (Figure \ref{fig:motfig1}). %Figure \ref{fig:motfig1}, discussed later, demonstrates this phenomenon with the help of a simple example. 
	An alternative approach to replicability analysis in this setting is to construct PC $p-$values that are valid under arbitrary dependence between studies, such as the Bonferroni PC $p-$value \citep{benjamini2008screening}. However, testing procedures based on these $p$-values can be overly conservative (see Section \ref{sec:background}). In contrast, $e-$Filter provides a more powerful alternative in such scenarios. The numerical experiments in Section \ref{sec:sims} demonstrate that $e-$Filter offers substantial power gains over standard approaches that rely on such PC $p-$values.

	\begin{figure}[!h]
		\centering
		\includegraphics[width=0.8\linewidth]{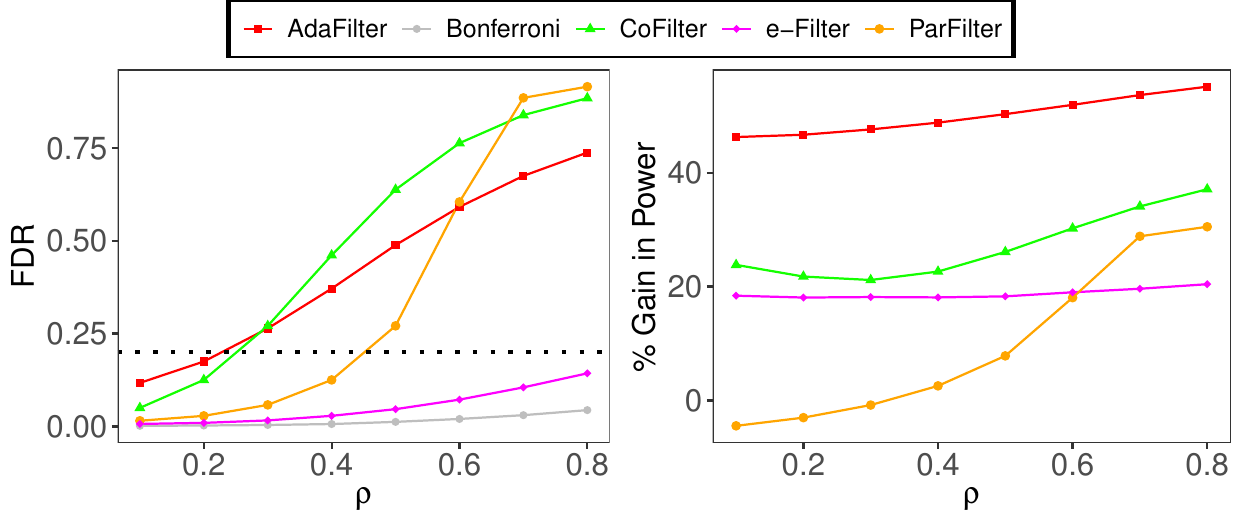}
		\caption{Consider five studies, each testing the association of 
			$m=10,000$ SNPs with a disease of interest. The objective is to test, for each $j=1,\ldots,m$, the PC null hypothesis that SNP 
			$j$ is significantly associated with the disease in at most one of the five studies. The dependence between the studies is governed by $\rho$ which is the coefficient of linear correlation between the five test statistics underlying each PC null. See Section \ref{app: fig_details} for the data-generation scheme. Left: Average false discovery rate (FDR). Right: Average \% gain in power over the Bonferroni PC method.}
		\label{fig:motfig1}
	\end{figure}
	
	Figure \ref{fig:motfig1} illustrates the consequences of ignoring study dependence in a toy GWAS replicability study. Consider five studies, each testing the association of 
	$m=10,000$ SNPs with a disease of interest. The objective is to test, for each $j=1,\ldots,m$, the PC null hypothesis that SNP 
	$j$ is significantly associated with the disease in at most one of the five studies. Thus, rejecting the $j^{th}$ PC null implies that the association of the $j^{th}$ SNP with the disease is replicated in at least two of the five studies. The dependence between the studies is governed by the parameter $\rho$ which is the coefficient of linear correlation between the five test statistics underlying each PC null hypothesis. For a range of values of $\rho$, the left panel of Figure \ref{fig:motfig1} reports the average false discovery rate (FDR) of the following five PC testing procedures across $500$ Monte-Carlo repetitions of the data generating scheme: the Bonferroni PC method, $e-$Filter, AdaFilter \citep{Owen}, CoFilter \citep{Dickhaus2024} and ParFilter \citep{tran2025covariateadaptivetestreplicabilitymultiple}. Here, the last three methods ignore study dependence. The results show that, unlike Bonferroni and $e$-Filter, AdaFilter, CoFilter, and ParFilter fail to control the FDR at the nominal level $\alpha=0.2$ for moderate to large $\rho$, due to their underlying assumption of study independence. The right panel reports the average percentage gain in power over the Bonferroni method and reveals that $e-$Filter is the only approach achieving nearly a 20\% power gain while maintaining FDR control across all $\rho$ values. We discuss $e-$Filter in Section \ref{sec:e_filter} while Section \ref{app: fig_details} in the supplement provides the data-generation scheme underlying Figure \ref{fig:motfig1}.
	All R codes for reproducing the numerical illustrations in this paper are available at \url{https://github.com/trambakbanerjee/efilter}. 
	\section{Multiple testing of partial conjunction hypotheses}
	\label{sec:background}
	% Section \ref{sec:setup} describes the problem setup, Section \ref{sec:pval_approach} includes a background on PC $p-$values and Section \ref{sec:e-PCH} provides a procedure based on $e-$values that provides valid FDR control under arbitrary
	% dependence between the studies.
	%
	\subsection{Problem setup}
	\label{sec:setup}
	Consider a setting where $n$ studies are each testing $m$ base null hypotheses. Denote the set of base null hypotheses by $\mathcal H_0=\{H_{0ij}:i=1,\ldots,n\text{ and }j=1,\ldots,m\}$ and let $\mathcal P=(P_{ij}:i=1,\ldots,n\text{ and }j=1,\ldots,m)$ denote the $n\times m$ matrix of base $p-$values for $\mathcal H_0$. Throughout this paper, we assume that each $P_{ij}$ is a valid $p-$value in the sense that $\mathrm{pr} (P_{ij}\le \gamma)\le \gamma$ under $H_{0ij}$. However, we do not make any assumptions on the dependence between $P_{ij}$ and $P_{i'j}$, i.e, in studies $i$ and $i'$ the $p-$values for the $j^{th}$ base null may not necessarily be independent. Denote $\mathcal H_{0j}=\{H_{0ij}:i=1,\ldots,n\}$ and let $k_j$ be the  unknown number of false null hypotheses in $\mathcal H_{0j}$. For $2\le r\le n$ we are interested in the following multiple testing problem: 
	\begin{equation}
		\label{eq:multipletesting}
		H_{0j}^{r/n}: k_j<r~\text{versus}~H_{1j}^{r/n}: k_j\ge r,~~j=1,\ldots,m.
	\end{equation}
	%In the context of our application $m=459092$ SNPs and $n=5$ GWAS. In particular, the $(i,j)^{th}$ base null hypothesis tests  $i$’th SNP is associated with LDL-C. \red{$H_{0ij}:\mu_{ij}=0~vs~H_{1ij}:\mu_{ij}\ne 0$ where $\mu_{ij}$ is the unknown effect size of SNP $j$ in study $i$. ARUNABHA: is this correct?} 
	When $r=1$, Equation \eqref{eq:multipletesting} represents the global null for meta-analysis and rejecting it suggests that the $j^{th}$ base null hypothesis is false in at least one of the $n$ studies. In contrast, rejecting $H_{0j}^{r/n}$ for $2\le r\le n$ implies that the finding is replicated in at least $r$ of the $n$ studies.
	Let $\theta_j = \mathbb I(H_{0j}^{r/n}~\text{is false})$ denote an indicator function that gives the true state of the $j^{th}$ testing problem and denote $\delta_j \in\{0, 1\}$ as the decision that we make about $H_{0j}^{r/n}$ with $\delta_j = 1$ being a decision to reject $H_{0j}^{r/n}$.
	A selection error, or false positive, occurs if $\delta_j = 1$ while $\theta_j=0$. In multiple testing problems such false discoveries, denoted $V=\sum_{j=1}^{m}\delta_j(1-\theta_j)$, are inevitable and in this paper we consider the following three standard notions of controlling false discoveries:
	$\text{FWER}\coloneqq\mathrm{pr}(V\ge 1),~~\text{PFER}\coloneq E(V),~~\text{FDR}\coloneq E\{V/\max(R, 1)\},
	$
	where $R = \sum_{j=1}^{m}\delta_j$ is the total number of discoveries. 
	\subsection{Conventional $p-$value procedures}
	\label{sec:pval_approach}
	Denote $\bm{P}_j=(P_{ij}:i=1,\ldots,n)$ and suppose $P_{(1)j} \leq P_{(2)j} \leq \cdots \leq P_{(n)j}$ are the ordered $p$-values for $\mathcal H_{0j}$. Conventional $p-$value procedures for Problem \eqref{eq:multipletesting} can be broadly classified into two groups: the first group relies on the construction of a valid PC $p-$value for testing $H_{0j}^{r/n}$ and commonly includes the following four methods:
	(1) Bonferroni's method: $P_{r / n,j}^B=(n-r+1) P_{(r)j}$, (2) Simes' method: $P_{r / n,j}^S=\min _{r \leq i \leq n}[\{(n-r+1)/(i-r+1)\} P_{(i)j}]$, (3) Fisher's method: $P_{r / n,j}^F=\mathrm{pr}(\chi_{2(n-r+1)}^2 \geq-2 \sum_{i=r}^n \log P_{(i)j})$, and (4) Cauchy combination method: $P_{r / n,j}^C=\mathrm{pr}\{W\geq(n-r+1)^{-1}\sum_{i=r}^n \tan\{(0.5-P_{(i)j})\pi\}\}$, where $W$ denotes a standard Cauchy random variable.
	The first three methods are discussed in \cite{benjamini2008screening}, while the fourth one is based on \cite{liu2020cauchy} on combining dependent $p-$values. Lemma \ref{lem:cauchy_pvals} establishes that under some conditions $P_{r / n,j}^C$ is an approximately valid $p-$value under $H_{0j}^{r / n}$ for arbitrary dependence among the test statistics underlying $\bm P_j$. 
	\begin{lemma}
		\label{lem:cauchy_pvals}
		Let $\bm X_j$ denote the test statistics corresponding to $\bm P_j$. If $E(\bm X_j)=\bm 0$ and for any $1\le r < s\le n$, $(X_{rj}, X_{sj})$ follows a bivariate normal distribution, then $P_{r / n,j}^C$ is an approximately valid $p-$value under $H_{0j}^{r / n}$ for arbitrary dependence among the entries of $\bm X_j$. 
	\end{lemma}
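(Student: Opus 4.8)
The plan is to reduce the validity of $P_{r/n,j}^C$ to the dependent Cauchy-combination tail result of \cite{liu2020cauchy} by first eliminating both the order statistics and the false-null coordinates. Write $g(p)=\tan\{(0.5-p)\pi\}$ for the transform, $Y_{ij}=g(P_{ij})$, and $T=(n-r+1)^{-1}\sum_{i=r}^{n} g(P_{(i)j})$, so that $P_{r/n,j}^C=\mathrm{pr}(W\ge T)$. Since the standard Cauchy upper tail $1-F_W(t)$ is $\approx(\pi t)^{-1}$ and strictly decreasing, the event $\{P_{r/n,j}^C\le\alpha\}$ equals $\{T\ge c_\alpha\}$ with $c_\alpha=F_W^{-1}(1-\alpha)\to\infty$ as $\alpha\to 0$; approximate validity therefore amounts to establishing $\mathrm{pr}(T\ge t)\lesssim\mathrm{pr}(W\ge t)$ as $t\to\infty$.

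The first step is a pathwise domination that strips away the order statistics and the unknown false-null coordinates. Because $g$ is strictly decreasing, $\sum_{i=r}^{n} g(P_{(i)j})$ is exactly the sum of the $n-r+1$ \emph{smallest} of the transforms $Y_{1j},\ldots,Y_{nj}$. Let $\mathcal N_j$ denote the (deterministic) set of true-null coordinates; under $H_{0j}^{r/n}$ we have $k_j\le r-1$, hence $|\mathcal N_j|=n-k_j\ge n-r+1$. For nested collections the $\ell$-th smallest element of the full set is at most the $\ell$-th smallest element of the subcollection $\{Y_{ij}:i\in\mathcal N_j\}$, so the sum of the $n-r+1$ smallest transforms over all $i$ is bounded by the sum of the $n-r+1$ smallest among true nulls; and since the mean of the smallest $n-r+1$ entries of a set never exceeds its overall mean, this is in turn at most $(n-r+1)\,\bar{Y}_j^{0}$, where $\bar{Y}_j^{0}=|\mathcal N_j|^{-1}\sum_{i\in\mathcal N_j} g(P_{ij})$. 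Thus $T\le\bar{Y}_j^{0}$ for every realization, and consequently $\mathrm{pr}(T\ge t)\le\mathrm{pr}(\bar{Y}_j^{0}\ge t)$.

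The payoff of this reduction is that $\bar{Y}_j^{0}$ is an equally weighted Cauchy combination of \emph{null} transforms with no order statistics, so the hypotheses of \cite{liu2020cauchy} apply cleanly. For each $i\in\mathcal N_j$, validity of $P_{ij}$ gives $\mathrm{pr}(g(P_{ij})>t)\le\mathrm{pr}(W>t)$, with equality when $P_{ij}$ is exactly uniform, i.e. when $X_{ij}$ is mean-zero normal; and by assumption the $X_{ij}$, $i\in\mathcal N_j$, are mean zero with all pairs bivariate normal. The dependent Cauchy-combination theorem then yields $\mathrm{pr}(\bar{Y}_j^{0}\ge t)\lesssim\mathrm{pr}(W\ge t)$ as $t\to\infty$. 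Combining with the previous display gives $\mathrm{pr}(P_{r/n,j}^C\le\alpha)=\mathrm{pr}(T\ge c_\alpha)\le\mathrm{pr}(\bar{Y}_j^{0}\ge c_\alpha)\lesssim\mathrm{pr}(W\ge c_\alpha)=\alpha$ as $\alpha\to 0$, which is the claimed approximate validity.

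The main obstacle is this last step: invoking \cite{liu2020cauchy} correctly, since it is precisely here that mean-zero bivariate normality is indispensable. Their argument shows that the heavy upper tail of a weighted average of Cauchy-transformed statistics is governed entirely by the individual marginal tails, the pairwise contributions being negligible because joint Gaussianity forces the transforms to be asymptotically tail-independent; verifying that the true-null subvector meets these hypotheses, and that super-uniformity only sharpens the inequality, is the crux. A secondary point worth emphasizing is that the order-statistic dependence, which would otherwise obstruct a direct application of the theorem, is disposed of beforehand by the deterministic bound $T\le\bar{Y}_j^{0}$, and that the conclusion is necessarily an asymptotic (small-$\alpha$) statement, matching the word ``approximately'' in the lemma rather than exact finite-sample validity.
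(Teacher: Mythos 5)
Your proof is correct, and it rests on the same pillar as the paper's: everything reduces to the approximate Cauchy tail of an equally weighted combination of \emph{true-null} transforms, supplied by Theorem 1 of \cite{liu2020cauchy} under the mean-zero pairwise-bivariate-normality hypothesis. Where you differ is in how the false-null coordinates and the order statistics are eliminated. The paper invokes Lemma 1 of \cite{benjamini2008screening}: since $P^C_{r/n,j}$ is non-decreasing in each base $p$-value, replacing the $r-1$ potentially false-null coordinates by $0$ can only make the combined $p$-value stochastically smaller, and after this substitution the top $n-r+1$ order statistics are exactly the true-null $p$-values, so the statistic becomes the plain Cauchy combination to which the Liu--Xie theorem applies. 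You instead prove the pathwise inequality $T\le\bar{Y}_j^{0}$ directly, via the observation that the $n-r+1$ smallest transforms over all coordinates are dominated termwise by those over the true-null subset, whose partial average is in turn dominated by its full average. Your route is more elementary and self-contained (no appeal to the Benjamini--Heller monotone-substitution lemma), and it handles the case $k_j<r-1$ transparently by averaging over \emph{all} $n-k_j$ true nulls; the paper's route buys generality, since the same substitution argument works verbatim for any monotone combining function, which is why they state it abstractly. One point you flag but do not fully close, and which the paper is equally brisk about, is the treatment of strictly super-uniform null $p$-values: the Liu--Xie tail equivalence is stated for exact marginals, and converting super-uniformity into the one-sided tail bound $\mathrm{pr}(\bar{Y}_j^{0}\ge t)\lesssim\mathrm{pr}(W\ge t)$ under arbitrary joint dependence requires a word more than ``only sharpens the inequality''; this is a shared looseness, not a defect specific to your argument.
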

	Notably, amongst these four methods only $P^{B}_{r / n,j}$ and $P^{C}_{r / n,j}$ are valid $p-$values for $H_{0j}^{r/n}$ under arbitrary dependence among the entries of $\bm P_j$. In contrast, Simes' $p-$value ($P^{S}_{r / n,j}$) is valid, for instance, when $\bm P_j$ satisfy the positive regression dependency on a subset (PRDS) property (see \cite{benjamini2008screening}) and Fisher's $p-$value ($P^{F}_{r / n,j}$) is valid when the entries of $\bm P_j$ are independent. 
	
	Denote $P_{r / n,j}$ as a generic PC $p-$value for $H_{0j}^{r/n}$ and let $\mathcal H^{r/n}=\{H_{0j}^{r/n}:j=1,\ldots,m\}$. Given $m$ PC $p-$values $\mathcal P_{r/n}=\{P_{r / n,j}:j=1,\ldots,m\}$, a standard technique for testing $\mathcal H^{r/n}$ involves applying popular multiple testing adjustments, such as the BH procedure \citep{benjamini1995controlling}, to $\mathcal P_{r/n}$ for controlling the false discoveries at the desired level. However, this approach can be extremely conservative since the inequality $\mathrm{pr}(P_{r / n,j} \leq \gamma) \leq \gamma$ for a given $\gamma$ is only tight for the least favorable null, which is the setting where exactly $r-1$ base hypothesis are non-null. In contrast, when the global null is true, which is the case in most genetic applications, $P_{r/n,j}$ can be highly superuniform, i.e, $\mathrm{pr}(P_{r/n,j}\le \gamma)\ll\gamma$ \citep{liang2025powerful}. 
	
	This motivates procedures in the second group that employ various strategies to improve upon the power of standard PC $p$-value–based approaches \citep{Owen,Dickhaus2024,liang2025powerful,tran2025covariateadaptivetestreplicabilitymultiple,lyu2025testingcompositenullhypotheses}. For instance, \cite{Owen,Dickhaus2024} develop methods that typically involve an initial filtering step to filter out PC hypotheses in $\mathcal H^{r/n}$ that are most likely to be null. In applications where the global null is overwhelmingly dominant, this filtering step substantially improves the power of the underlying multiple testing procedure for testing $\mathcal H^{r / n}$. %In Section \ref{sec:e_filter} we borrow this idea for developing $e-$Filter. 
	However, these procedures are developed under the assumption that for each $j$ the entries of $\bm P_j$ are independent. In the following subsection we present a simple $e-$value based procedure that guarantees valid FDR control even if the $P_{ij}$'s are arbitrarily correlated for each $j$.
	\subsection{$e-$values in hypothesis testing and the $e-$PCH procedure}
	\label{sec:e-PCH}
	In the context of hypothesis testing, an $e-$value is a nonnegative random variable whose expected value is at most one under the null hypothesis. We will use the notation ``$e$'' to denote both the random variable and its realized value. While the classical theory of inference is built around the $p-$values, inference using $e-$values benefit from their natural
	connections to game-theoretic probability and statistics, flexibility and robustness in multiple testing under
	dependence, and their central role in anytime-valid statistical inference. See \cite{STA-002} for a comprehensive overview. One may obtain an $e-$value from a valid $p-$value $P$ using calibrators. A calibrator
	$\varphi : [0,1] \to [0,\infty)$ is an upper semicontinuous function that satisfies $\int_{0}^{1} \varphi(x)dx = 1$, which
	implies that $\varphi(P)$ is an $e-$value for any $p$-value $P$. 
	
	Next, consider the following three step procedure for testing $\mathcal H^{r/n}$: \textbf{Step 1: }for each $j$, convert the $p-$values in $\bm P_j$ to $e-$values $\bm e_j=\{e_{ij}=\varphi(P_{ij}):i=1,\ldots,n\}$ where $\varphi$ is a calibrator, such as those discussed in \cite{vovk2021values}; \textbf{Step 2: }define the PCH $e-$value for $H_{0j}^{r / n}$ as $e_j=(n-r+1)^{-1}\sum_{i=1}^{n-r+1}e_{(i)j}$, where $e_{(1)j}\le \cdots\le e_{(n)j}$; \textbf{Step 3: }apply the $e$-BH procedure of \cite{WangRamdas} on $e_1,\ldots,e_m$ for FDR control at the desired level, say, $\alpha$.
	
	We call this method the $e-$PCH procedure. Lemma \ref{lem:e-PCH} establishes that this procedure controls FDR at a pre-specified level $\alpha$ under arbitrary dependence among the base $p-$values $\mathcal P$.
	\begin{lemma}\label{lem:e-PCH}
		The $e-$PCH procedure controls FDR at the pre-specified level $\alpha$ under arbitrary dependence among the base $p-$values $\mathcal P$. 
	\end{lemma}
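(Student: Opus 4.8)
The plan is to reduce the claim to two ingredients and then focus essentially all the work on the second. The first ingredient is that the $e$-BH procedure of \cite{WangRamdas} controls FDR at the target level $\alpha$ as soon as each input is a valid $e$-value under its own null, and—this is the feature I would lean on—it does so under \emph{arbitrary} dependence among $e_1,\ldots,e_m$. The second ingredient, which is where the genuine content of the lemma lies, is that for every $j$ with $H_{0j}^{r/n}$ true the PCH $e$-value $e_j$ built in Step 2 satisfies $E(e_j)\le 1$. Granting the first ingredient, establishing this single inequality suffices, and it is precisely here that robustness to within-study dependence among the entries of $\bm P_j$ must be earned.

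First I would record the marginal fact that, since $\varphi$ is a calibrator, $E\{\varphi(P_{ij})\}\le 1$ whenever $H_{0ij}$ is a true base null, so each calibrated $e_{ij}=\varphi(P_{ij})$ is a valid $e$-value marginally (cf.\ \cite{vovk2021values}); no joint distributional assumption enters at this stage. Next, fix $j$ with $H_{0j}^{r/n}$ true, i.e.\ $k_j<r$, and let $\mathcal T_j=\{i:H_{0ij}\text{ is true}\}$ be the \emph{deterministic} set of true base nulls, so that $|\mathcal T_j|=n-k_j\ge n-r+1$. Choose any fixed subset $S\subseteq\mathcal T_j$ with $|S|=n-r+1$. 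The key deterministic observation is that the sum of the $n-r+1$ smallest among $e_{1j},\ldots,e_{nj}$ is the minimum of $\sum_{i\in A}e_{ij}$ over all $(n-r+1)$-subsets $A$, hence is bounded pathwise by the sum over the fixed set $S$:
\[
\sum_{i=1}^{n-r+1}e_{(i)j}\;\le\;\sum_{i\in S}e_{ij}.
\]
Dividing by $n-r+1$ and taking expectations, linearity together with the marginal validity of each $e_{ij}$ for $i\in S\subseteq\mathcal T_j$ gives
\[
E(e_j)\;\le\;\frac{1}{n-r+1}\sum_{i\in S}E(e_{ij})\;\le\;\frac{|S|}{n-r+1}\;=\;1,
\]
so $e_j$ is a valid $e$-value under $H_{0j}^{r/n}$.

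The hard part, and the step I would flag as the main obstacle, is exactly the isolation of a \emph{fixed} true-null subset $S$: the quantity $e_j$ averages the smallest order statistics, whose identity is random and governed by the unknown dependence structure, yet the pathwise bound lets me replace these random indices by a deterministic set of genuine nulls before taking expectations. Because $S$ is fixed, the only probabilistic input is the marginal bound $E(e_{ij})\le 1$, and the joint law of $\bm P_j$ never appears; this is what makes $E(e_j)\le 1$ hold for arbitrary within-study dependence. Finally I would combine the two ingredients: feeding the valid $e$-values $e_1,\ldots,e_m$ into $e$-BH and invoking its dependence-free FDR guarantee yields $\mathrm{FDR}\le\alpha$ under arbitrary dependence both within each $\bm P_j$ and across $j$, completing the proof.
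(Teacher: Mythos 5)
Your proof is correct and follows essentially the same route as the paper's: reduce the claim to showing $E(e_j)\le 1$ under the partial conjunction null and then invoke the dependence-robust FDR guarantee of the $e$-BH procedure. Your pathwise bound $\sum_{i=1}^{n-r+1}e_{(i)j}\le\sum_{i\in S}e_{ij}$ over a fixed true-null subset $S$ is in fact a cleaner, fully rigorous rendering of the paper's informal ``worst case'' argument, which asserts that the non-null calibrated values may be taken to be the largest order statistics.
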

	% A key step in the proof of Lemma \ref{lem:e-PCH} relies on Lemma A.1 of \cite{Hoang} to establish that $e_j$ obtained in Step 2 is a valid $e-$value under $H_{0j}^{r / n}$. While valid, the $e-$PCH procedure is extremely conservative and our numerical experiments in Section \ref{sec:sims} reveal that the power of the $e-$PCH procedure can often be smaller than the FDR controlling procedure that uses Bonferroni $p-$values $P^B_{r/n,j}$. The reason for this conservatism stems from the fact that under $H_{0j}^{r/n}$, the inequality $E(e_j)\le 1$ is tight only under the least favorable null but under the global null, $E(e_j)$ can be much smaller than $1$. 
	%The proof of Lemma \ref{lem:e-PCH} invokes Lemma A.1 of \cite{Hoang} to show that $e_j$ in Step 2 is a valid $e$-value under $H_{0j}^{r/n}$. 
	Although valid, the $e$-PCH procedure can be highly conservative and %Our simulations in Section \ref{sec:sims} demonstrate that its power is often lower than that of the FDR-controlling method based on Bonferroni PC $p$-values $P^B_{r/n,j}$. 
	this conservatism arises because, under $H_{0j}^{r/n}$, the bound $E(e_j)\le1$ is tight only for the least favorable null. In contrast, under the global null $E(e_j)$ can be much smaller than 1. 
	\section{The $e-$Filter procedure}
	\label{sec:e_filter}
	%We begin with an informal introduction to $e-$Filter in Section \ref{sec:efilter_motivation}, and then present formal definitions in Section \ref{sec:efilter_formal}.
	\subsection{Motivation}
	\label{sec:efilter_motivation}
	Standard methods that rely directly on $P_{r/n,j}$ can be overly conservative for testing $H_{0j}^{r/n}$, since under the global null $P_{r/n,j}$ are highly superuniform. To overcome this conservativeness, procedures such as AdaFilter and CoFilter leverage an initial filtering step that retains only PC hypotheses whose PC $p-$values are smaller than a, carefully chosen, data-driven threshold $\gamma_0$. When the global null dominates all other base null configurations, the PC $p-$values are conservative and so few fall below $\gamma_0$. Restricting inference to this smaller subset of PC hypotheses substantially reduces the multiplicity of the underlying problem and boosts power. For example, AdaFilter uses Bonferroni PC $p-$values $P^B_{r/n,j}$ and defines 
	$F_j:=(n-r+1) P_{(r-1) j}$ for the filtering step. For FWER control at level $\alpha$, AdaFilter rejects $H_{0 j}^{r / n}$ if $P^B_{r/n,j}<\gamma_0$ where $
	\gamma_0=\sup \{\gamma \in[0, \alpha] \mid \gamma \sum_{j=1}^M \mathbb I(F_j<\gamma) \leq \alpha\}$.
	\cite{Owen} show that for finite $m$ this procedure controls FWER and PFER under independence of the $nm$ base $p-$values. %Moreover, their simulations demonstrate notable power gains over procedures that directly use the PC $p-$values, particularly when the proportion of global nulls is extremely large - a scenario common in GWAS.  
	However, the replicability study described in Section \ref{sec:realdata} includes five studies that have overlapping samples. This induces dependence among their base $p$-values, thus invalidating procedures such as AdaFilter in our setting. %The toy example described in Figure \ref{fig:motfig1} illustrates this phenomenon as far as FDR control is concerned. 
	This motivates us to develop $e-$Filter which retains Type I error control under unknown dependence between the study-specific $p-$values (Theorems \ref{thm:e_adafilter_fwer_control} and \ref{thm:asymp_e_filter_fdr_control}). 
	
	The main idea behind the $e-$Filter procedure is to transform $P^B_{r/n,j}$ to an $e-$value using a calibrator $\varphi$. Thereafter, for FWER control at level $\alpha$, we reject $H_{0j}^{r/n}$ if $\varphi(P^B_{r/n,j})>1/\gamma_e$ where $\gamma_e$ is the largest value of $\gamma\in[0,\alpha]$ such that $\gamma\sum_{j=1}^{m}\mathbb{I}\{\varphi(F_j)>1/\gamma\}\le \alpha$. %Thus, the main differences between $e-$Filter and AdaFilter are (1) a modified filtering step involving $\varphi(F_j)$, and (2) the transformation of $P^B_{r/n,j}$ to an $e-$value. 
	When the base $p-$values across studies are dependent, this, apparently simple, ``tweak'' to the AdaFilter procedure allows $e-$Filter to control the false discoveries even under study dependence. 
	\begin{figure}[!h]
		\centering
		\includegraphics[width=0.9\linewidth]{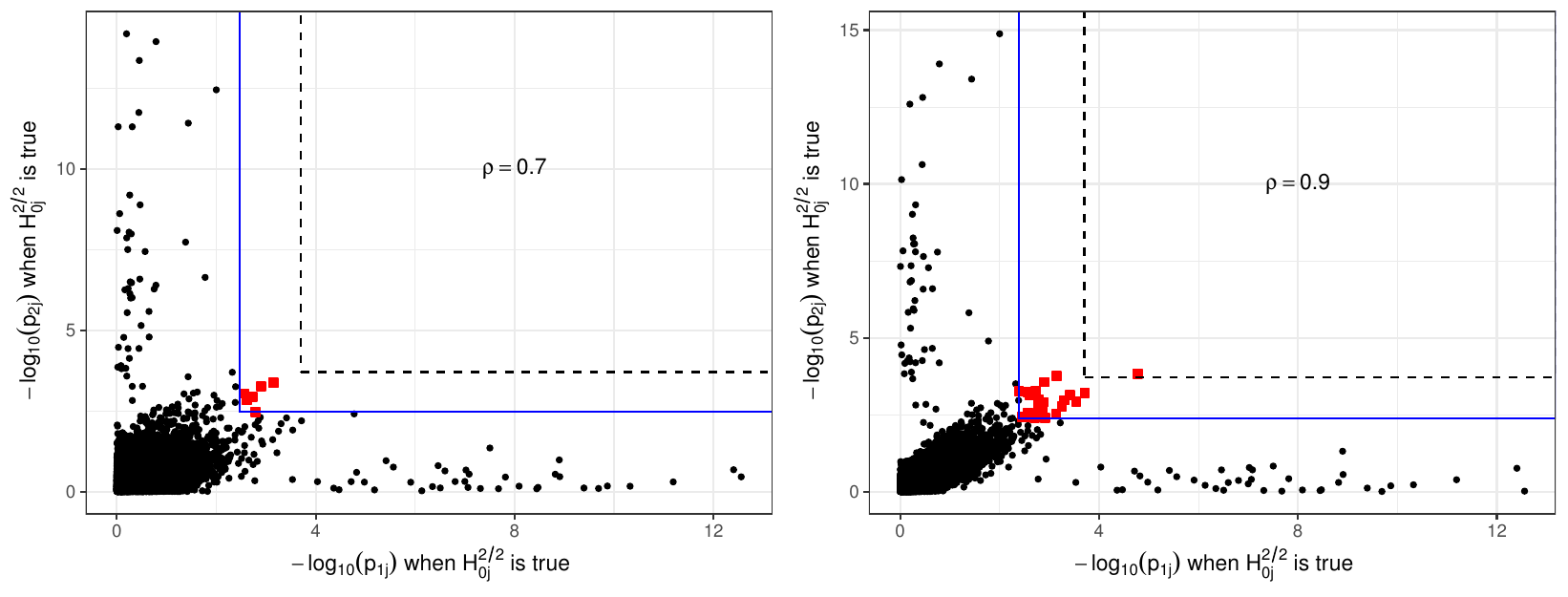}
		\caption{We test $\mathcal H^{2/2}$ with $m=10,000$. The base $p-$values are given by $P_{ij}=2\Phi(-|X_{ij}|)$, where the correlation between $X_{ij}$ and $X_{i'j}$ is $\rho$. Details of the data generating scheme are provided in Section \ref{app: fig_details}. The goal is to control the PFER at $\alpha=1$. The black dots are $-\log_{10}(P_{ij})$ whenever $H_{0j}^{2/2}$ is true. The solid vertical and horizontal lines mark $-\log_{10}(\gamma_0)$ so that AdaFilter rejects $H_{0j}^{2/2}$ whenever $-\log_{10}\{\max(P_{1j},P_{2j})\}>-\log_{10}(\gamma_0)$. The smaller region inside the dotted vertical and horizontal lines represents the rejection region of $e-$Filter. The rejections  are highlighted in red. Left panel: $\rho=0.7$. Right panel: $\rho=0.9$.}
		\label{fig:fig_sec3_1}
	\end{figure}
	We illustrate this in Figure \ref{fig:fig_sec3_1} with the help of an example while formal theories are developed in Section \ref{sec:theory}.
	\begin{figure}[!h]
		\centering
		\includegraphics[width=0.9\linewidth]{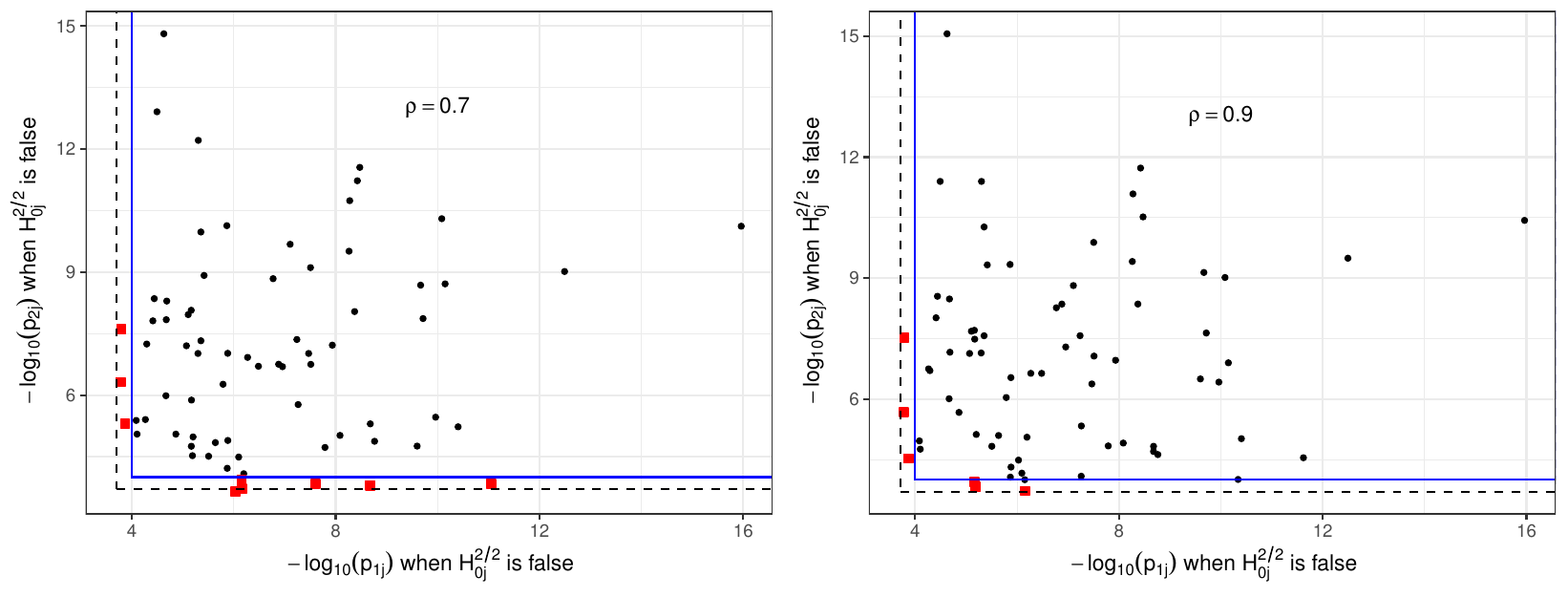}
		\caption{The black dots are $-\log_{10}(P_{ij})$ whenever the PC nulls are false. Here the region inside the solid vertical and horizontal lines represent the rejection region of the Bonferroni procedure that rejects $H_{0j}^{2/2}$ if $-\log_{10}\{\max(P_{1j},P_{2j})\}>-\log_{10}(\alpha/m)$. The rejection region of $e-$Filter is the larger region within the vertical and horizontal dotted lines. The red dots are PC nulls rejected by $e-$Filter but not by the Bonferroni procedure. Left panel: $\rho=0.7$. Right panel: $\rho=0.9$.}
		\label{fig:fig_sec3_2}
	\end{figure}
	
	We test $\mathcal H^{2/2}$ with $m=10,000$. The base $p-$values are given by $P_{ij}=2\Phi(-|X_{ij}|)$, where the correlation between $X_{ij}$ and $X_{i'j}$ is $\rho=0.7$. Details of the data generating scheme are provided in Section \ref{app: fig_details} of the supplement. The goal is to control the PFER at $\alpha=1$. The left panel of Figure \ref{fig:fig_sec3_1} plots $-\log_{10}(P_{ij})$ whenever $H_{0j}^{2/2}$ is true. Here the solid vertical and horizontal lines mark $-\log_{10}(\gamma_0)$ so that AdaFilter rejects $H_{0j}^{2/2}$ whenever $-\log_{10}\{\max(P_{1j},P_{2j})\}>-\log_{10}(\gamma_0)$. In this setting, AdaFilter falsely rejects $6~(>\alpha)$ PC nulls, which are highlighted in red. In contrast, $e-$Filter employs a relatively more stringent rejection criteria, represented by the smaller region inside the dotted vertical and horizontal lines, and does not make any false discoveries. In the right panel $\rho=0.9$ and while AdaFilter makes several more false discoveries, $e-$Filter makes only one. Figure \ref{fig:fig_sec3_1} may give the impression that the stricter rejection criteria of $e-$Filter may render it too conservative. Figure \ref{fig:fig_sec3_2} reveals that this is not necessarily the case. The left panel of Figure \ref{fig:fig_sec3_2} plots $-\log_{10}(P_{ij})$ whenever the PC nulls are false. Here the region inside the solid vertical and horizontal lines represent the rejection region of the Bonferroni procedure that rejects $H_{0j}^{2/2}$ if $-\log_{10}\{\max(P_{1j},P_{2j})\}>-\log_{10}(\alpha/m)$. The rejection region of $e-$Filter is the larger region within the vertical and horizontal dotted lines. The red dots are PC nulls rejected by $e-$Filter but not by the Bonferroni procedure, indicating the higher power of $e-$Filter in this example. The right panel considers $\rho=0.9$ and leads to the same conclusion. The numerical experiments in Section \ref{sec:sims} reveal that for some settings $e-$Filter can be substantially more powerful than the Bonferroni procedure.
	
	While the discussion in this section posits $e-$Filter as an ``$e-$value analogue'' of AdaFilter, we remark that unlike AdaFilter, $e-$Filter does not require independent studies and  it is applicable much more broadly, as long as any valid $p-$value for testing the global null under arbitrary dependence, such as \cite{wilson2019harmonic,liu2020cauchy,gui2025aggregating}, is available. %PC $p-$value valid under arbitrary dependence between the studies is available
	%\footnote{In fact, $e-$Filter is applicable even if any valid $p-$value for testing the global null under arbitrary dependence, such as \cite{wilson2019harmonic,liu2020cauchy,gui2025aggregating}, is available. Given such a $p-$value, Lemma 1 of \cite{benjamini2008screening} provides a recipe for constructing PC $p-$values.}. 
	We formalize this comment in the next section.
	\subsection{Formal definitions}
	\label{sec:efilter_formal}
	Let $k=n-r+1$. For the $j^{th}$ PC null, define the PC $p-$value as
	\begin{equation}
		\label{eq:sj}
		S_j\coloneqq S(\bm P_j) = f_0(P_{(r)j},\ldots,P_{(n)j}),  
	\end{equation}
	where $f_0:[0,1]^{k}\to[0,1]$ is a generic $p-$value combination function that satisfies the following: (i) $f_0$ is non-decreasing in each argument, (ii) $f_0$ is symmetric, and (iii) $f_0(P_{1j},\ldots,P_{kj})$ is a valid $p-$value for testing the global null under arbitrary dependence between the base $p-$values $(P_{1j},\ldots,P_{kj})$. Examples of such combination function includes the Bonferroni $p-$value  $f_0(P_{1j},\ldots,P_{kj})=k\min(P_{1j},\ldots,P_{kj})$ or the Cauchy combination $p-$value \citep{liu2020cauchy} $f_0(P_{1j},\ldots,P_{kj})=\mathrm{pr}(W\geq k^{-1}\sum_{i=1}^k \tan\{(0.5-P_{ij})\pi\})$, where $W$ denotes a standard Cauchy random variable. In Equation \eqref{eq:sj}, $S_j$ is the Benjamini-Heller partial conjunction $p-$value \citep{benjamini2008screening,wang2019admissibility} and by Lemma 1 of \cite{benjamini2008screening} it is a valid $p-$value for testing $H_{0j}^{r/n}$ under arbitrary dependence between the base $p-$values $\bm P_j$.
	%
	%such that $S_j$ is a valid $p-$value for unknown dependence between the base $p-$values $\bm P_j$. change Examples include Bonferroni's method where $S_j=P_{r/n,j}^B$ or the Cauchy combination method where $S_j=P_{r/n,j}^C$ from Section \ref{sec:pval_approach}. 
	
	Corresponding to each $S_j$ define
	\begin{equation}
		\label{eq:fj}
		F_j\coloneq F(\bm P_j)=f_1(P_{(1)j},\ldots,P_{(r-1)j}),
	\end{equation}
	where $f_1:[0,1]^{r-1}\to[0,1]$ is such that {(i) $f_1$ is symmetric}, and (ii) $F_j\le S_j$. For instance, in the context of AdaFilter $F_j=(n-r+1)P_{(r-1)j}$. Furthermore, if $S_j=P_{r/n,j}^C$ then a choice of $F_j$ is $F_j= \tan\{(0.5-P_{(r-1)j})\pi\}$. Crucially, in our construction, while $S_j$ is a valid $p-$value under $H_{0j}^{r/n}$, $F_j$ in Equation \eqref{eq:fj} is not. 
	%\begin{remark}
	%    \red{} \red{say about admissibility from \citep{wang2019admissibility} and GBHPC.}
	%\end{remark}
	Definition \ref{def:ef} formally presents the $e-$Filter procedures for PFER and FDR control, respectively.
	\begin{definition}
		\label{def:ef}
		Suppose $\varphi$ is a calibrator. For a pre-specified level $\alpha$, we have the following two $e-$Filter procedures:
		\begin{enumerate}
			\renewcommand{\theenumi}{1.\arabic{enumi}}
			\item \label{def:eadaBon} ($e-$Filter PFER).  Reject $H_{0 j}^{r / n}$ if $\varphi(S_j)>1/\gamma_e^{\mathrm{PFER}}$ where
			$$
			\gamma_e^{\mathrm{PFER}}=\sup \Big[\gamma \in[0, \alpha]:\gamma \sum_{j=1}^m \mathbb I\{\varphi(F_j)>1/\gamma\} \leq \alpha\Big].
			$$
			\item \label{def:eadaBH}($e-$Filter FDR). Reject $H_{0 j}^{r / n}$ if $\varphi(S_j)>1/\gamma_e^{\mathrm{FDR}}$ where
			$$
			\gamma_e^{\mathrm{FDR}}=\sup \Big[\gamma \in[0, \alpha] : \frac{\gamma \sum_{j=1}^m \mathbb I\{\varphi(F_j)>1/\gamma\}}{\max(\sum_{j=1}^m \mathbb I\{\varphi(S_j)>1/\gamma\}, 1)} \leq \alpha\Big].
			$$
		\end{enumerate}
	\end{definition}
	Note that in Definition \ref{def:ef} $\varphi(F_j)$ is not a valid $e-$value under $H_{0j}^{r/n}$. In contrast, $\varphi(S_j)$ is a valid $e-$value under the PC null. 
	Next, we introduce the $e-$Filter adjusted 
	``$e-$values'', which are easier to compute yet yield rejection sets equivalent to those from Definition \ref{def:ef}.
	\begin{definition}
		\label{def:ef_evals}
		Denote $S_j^{\sf e}\coloneq\varphi(S_j)$ and suppose $S_{(1)}^{\sf e}\ge \ldots\ge S_{(m)}^{\sf e}$ denote the ordered $e-$values for $\mathcal H^{r/n}$. Define
		$m_{(j)}=\sum_{h=1}^{m}\mathbb I\{\varphi(F_h)\ge S_{(j)}^e\}
		$ and suppose $\alpha$ is a pre-specified level.
		\begin{enumerate}
			\renewcommand{\theenumi}{2.\arabic{enumi}}
			\item \label{def:ef_eval_Bon} The $e-$Filter PFER adjusted ``$e-$value'' for $H_{0(j)}^{r/n}$ is 
			$e_{(j)}^{\sf PFER}={S_{(j)}^{\sf e}}/{m_{(j)}}$. Reject $H_{0 j}^{r / n}$ if $e_{j}^{\sf PFER}>1/\alpha$. 
			\item \label{def:ef_eval_BH}The $e-$Filter FDR adjusted ``$e-$value'' for $H_{0(j)}^{r/n}$ is 
			$e_{(j)}^{\sf FDR}=\max_{h\ge j}\{h{S_{(j)}^{\sf e}}/{m_{(j)}}\}.$ Reject $H_{0 j}^{r / n}$ if $e_{j}^{\sf FDR}>1/\alpha$.
		\end{enumerate}
	\end{definition}
	Lemma \ref{lem:ef_equivalence} establishes that Definitions \ref{def:ef} and \ref{def:ef_evals} produce the same rejection sets.
	\begin{lemma}
		\label{lem:ef_equivalence}
		For a pre-specified level $\alpha$, the rejection sets $\{j:S_j^{\sf e}>1/\gamma_e^{\sf PFER}\}$ and $\{j:e_{j}^{\sf PFER}>1/\alpha\}$ are equivalent. Also, the rejection sets $\{j:S_j^{\sf e}>1/\gamma_e^{\sf FDR}\}$ and $\{j:e_{j}^{\sf FDR}>1/\alpha\}$ are equivalent.
	\end{lemma}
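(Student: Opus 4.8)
The statement is entirely sample-pathwise, so the plan is to fix an arbitrary realization of the calibrated PC values $S_1^{\sf e},\ldots,S_m^{\sf e}$ and the filter values $\varphi(F_1),\ldots,\varphi(F_m)$ and to show that the threshold rule of Definition \ref{def:ef} and the adjusted-value rule of Definition \ref{def:ef_evals} flag exactly the same indices. The engine of the argument is the change of variable $t=1/\gamma$, which converts the continuous supremum defining $\gamma_e^{\sf PFER}$ and $\gamma_e^{\sf FDR}$ into a comparison at the finitely many critical thresholds $t\in\{S_{(1)}^{\sf e},\ldots,S_{(m)}^{\sf e}\}$, matching the per-hypothesis cutoff $1/\alpha$ used by the adjusted values.

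First I would record the \emph{top-set} structure shared by both rules. Since $S_{(j)}^{\sf e}$ is non-increasing in $j$ while $m_{(j)}=\#\{h:\varphi(F_h)\ge S_{(j)}^{\sf e}\}$ is non-decreasing in $j$, the ratio $e_{(j)}^{\sf PFER}=S_{(j)}^{\sf e}/m_{(j)}$ is non-increasing in $j$, and the running maximum in the FDR case, $e_{(j)}^{\sf FDR}=\max_{h\ge j}\{hS_{(h)}^{\sf e}/m_{(h)}\}$, is non-increasing by construction. Hence the adjusted-value rule rejects a top set $\{(1),\ldots,R'\}$. The threshold rule compares every $S_j^{\sf e}$ to the single cutoff $1/\gamma_e$ and therefore also rejects a top set $\{(1),\ldots,R\}$, so it suffices to prove $R=R'$ in each case.

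\textbf{PFER.} Under $t=1/\gamma$ the constraint $\gamma\sum_j\mathbb I\{\varphi(F_j)>1/\gamma\}\le\alpha$ reads $\#\{j:\varphi(F_j)>t\}\le\alpha t$, so $t^\ast=1/\gamma_e^{\sf PFER}$ is the smallest feasible $t\ge1/\alpha$. Since the rejection set $\{j:S_j^{\sf e}>t\}$ changes only as $t$ crosses the values $S_{(j)}^{\sf e}$, it is enough to test feasibility at $t=S_{(j)}^{\sf e}$, where the filter count equals $m_{(j)}$ and the constraint reduces to comparing $e_{(j)}^{\sf PFER}=S_{(j)}^{\sf e}/m_{(j)}$ against $1/\alpha$; identifying the largest $j$ with $e_{(j)}^{\sf PFER}>1/\alpha$ with $R$ then gives $R=R'$. \textbf{FDR.} The argument is identical in spirit, but the feasibility condition now carries the factor $\max(\#\{j:S_j^{\sf e}>t\},1)$ in the denominator. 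Evaluated at $t=S_{(h)}^{\sf e}$ the rejection count equals $h$ and the filter count equals $m_{(h)}$, so the constraint reduces to comparing $hS_{(h)}^{\sf e}/m_{(h)}$ against $1/\alpha$. Because $\gamma_e^{\sf FDR}$ is a supremum, the rejected set is governed by the \emph{largest} index $h$ at which this holds, which is exactly what the running maximum $e_{(j)}^{\sf FDR}>1/\alpha$ encodes, giving $R=R'$ again.

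The main obstacle is boundary bookkeeping rather than any deep estimate. I would first check that the supremum defining each $\gamma_e$ is attained: the counts $\gamma\mapsto\#\{j:\varphi(F_j)>1/\gamma\}$ and $\gamma\mapsto\#\{j:S_j^{\sf e}>1/\gamma\}$ are left-continuous step functions, so the feasible set of $\gamma$ is closed on the right and $\gamma_e$ is itself feasible. The only remaining delicacy is to reconcile the strict inequalities $\varphi(F_h)>1/\gamma$ in Definition \ref{def:ef} with the non-strict $\ge$ in the definition of $m_{(j)}$, together with possible ties among the $S_{(j)}^{\sf e}$ or between some $\varphi(F_h)$ and a critical value $S_{(j)}^{\sf e}$. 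I would dispatch this once, using the left-continuity observation to justify evaluating the constraints exactly at $t=S_{(j)}^{\sf e}$ (equivalently, taking the one-sided limit $t\uparrow S_{(j)}^{\sf e}$, which is precisely where the $\ge$ in $m_{(j)}$ enters), and then reuse it verbatim in both the PFER and FDR parts.
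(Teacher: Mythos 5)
Your proposal is correct and follows essentially the same route as the paper's proof: both exploit the monotonicity of $S_{(j)}^{\sf e}/m_{(j)}$ to reduce the claim to matching top sets, verify the threshold condition at the critical values $S_{(j)}^{\sf e}$, and resolve the strict-versus-nonstrict boundary case (which the paper handles by an explicit contradiction via $\inf_{\gamma>\gamma_e}$, and you handle by perturbing $t=1/\gamma$ slightly below $S_{(j)}^{\sf e}$ — the same idea in different clothing).
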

	We end this section with a remark on the choice of the calibrator $\varphi$. In our empirical analyses, we take $\varphi(x)=\kappa x^{\kappa-1},~\kappa\in (0,1)$, which appears in Equation (1) of \cite{vovk2021values}. The choice of this calibrator is motivated by its strong empirical performance in our simulations (Section \ref{sec:sims}) and its analytical tractability, which substantially facilitates our theoretical analyses. In Section \ref{sec:sims} we discuss a simple strategy for tuning $\kappa$.
	\section{Theory}
	\label{sec:theory}
	\subsection{An improved upper bound on $\mathrm{pr}\{\varphi(S_j)>1/\gamma\}$}
	\label{sec:conditional_validity}
	We note that since $\varphi(S_j)$ is an $e-$value under $H_{0j}^{r/n}$, $\mathrm{pr}\{\varphi(S_j)>1/\gamma\}\le \gamma$ for all $\gamma\in(0,1)$. However, to establish the validity of $e-$Filter for simultaneous inference under unknown study dependence, a sharper upper bound is required. Proposition \ref{prop:general_n_r} provides this refinement. We then examine the simultaneous error rate control property of $e-$Filter for finite $m$ (Section \ref{sec:theory_m_finite}) and for $m\to\infty$ (Section \ref{sec:theory_m_inf}). Throughout, we employ the calibrator $\varphi(x)=\kappa x^{\kappa-1},~\kappa\in(0,1)$ from Equation (1) of \cite{vovk2021values}.
	
	To facilitate the analysis, we assume that under $H_{0 j}^{r / n}$, $S_j$ and $F_j$ have distributions from the family of Lehmann alternatives \citep{10.1214/aoms/1177729080} with their CDFs satisfying the following relationship:
	\begin{equation}
		\label{eq:lehmann}
		\mathrm{pr}\left(S_j\le x\right)\le x^{d_1} < x^{d_2} \leq \mathrm{pr}(F_j \le x) \quad \text{for each $x \in (0,1)$},
	\end{equation}
	where $d_1\geq 1$, since  $S_j$ is a valid $p$-value under $H_{0 j}^{r / n}$, and $0<d_2<1$. We have the following result.
	\begin{proposition}
		\label{prop:general_n_r}
		Let $\varphi$ denote a calibrator \( \varphi(x) := \kappa x^{\kappa - 1},~\kappa\in(0,1)\) and suppose Equation \eqref{eq:lehmann} holds. Denote $\kappa^{*}=\max(0,d_2-d_1+1)$. Then, for all $\gamma\in(0,1)$ and for all $\kappa\in[\kappa^{*},1)$,
		$$\mathrm{pr}\{\varphi(S_j)\ge 1/\gamma\}\le \gamma~\mathrm{pr}\{\varphi(F_j)\ge 1/\gamma\} \text{ under }H_{0j}^{r/n}.$$
	\end{proposition}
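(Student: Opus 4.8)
The plan is to reduce the probabilistic inequality to a deterministic comparison of thresholds by exploiting the strict monotonicity of the calibrator. First I would note that $\varphi(x)=\kappa x^{\kappa-1}$ is strictly decreasing on $(0,1]$ because $\kappa-1<0$, so each event $\{\varphi(\cdot)\ge 1/\gamma\}$ inverts into a one-sided event on the underlying $p$-value. Solving $\kappa x^{\kappa-1}\ge 1/\gamma$ gives $x\le (\kappa\gamma)^{1/(1-\kappa)}$, so putting $t\coloneqq(\kappa\gamma)^{1/(1-\kappa)}$ I would record the exact equivalences $\{\varphi(S_j)\ge 1/\gamma\}=\{S_j\le t\}$ and $\{\varphi(F_j)\ge 1/\gamma\}=\{F_j\le t\}$. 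Since $\kappa,\gamma\in(0,1)$ and the exponent $1/(1-\kappa)$ is positive, $t\in(0,1)$, which is precisely the range over which the Lehmann bounds in Equation \eqref{eq:lehmann} are available.

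With the events rewritten, the target becomes $\mathrm{pr}(S_j\le t)\le\gamma\,\mathrm{pr}(F_j\le t)$. Here I would bound the left-hand side from above by $\mathrm{pr}(S_j\le t)\le t^{d_1}$ and the right-hand side from below by $\mathrm{pr}(F_j\le t)\ge t^{d_2}$, both supplied directly by Equation \eqref{eq:lehmann}. This reduces the entire statement to the purely analytic claim $t^{d_1}\le\gamma\,t^{d_2}$, equivalently $t^{\,d_1-d_2}\le\gamma$, so that no further probabilistic input is needed beyond \eqref{eq:lehmann}.

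The crux—and the step I expect to require the most care—is verifying $t^{\,d_1-d_2}\le\gamma$ exactly under the stated condition $\kappa\ge\kappa^{*}$. Substituting $t=(\kappa\gamma)^{1/(1-\kappa)}$ gives $t^{\,d_1-d_2}=(\kappa\gamma)^{a}$ with exponent $a=(d_1-d_2)/(1-\kappa)$. The key observation is that the threshold $\kappa^{*}=\max(0,d_2-d_1+1)$ is engineered so that, for $\kappa\in(0,1)$, the condition $\kappa\ge\kappa^{*}$ is equivalent to $d_1-d_2\ge 1-\kappa$, i.e.\ to $a\ge 1$. Once $a\ge 1$ is in hand, the elementary fact that $u^{a}\le u$ for every $u\in(0,1)$ and $a\ge 1$ yields $(\kappa\gamma)^{a}\le\kappa\gamma\le\gamma$, the last inequality using $\kappa<1$. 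Chaining the displays then gives $\mathrm{pr}(S_j\le t)\le t^{d_1}\le\gamma\,t^{d_2}\le\gamma\,\mathrm{pr}(F_j\le t)$, which is the assertion.

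I would expect the remaining difficulties to be purely bookkeeping: confirming that inverting the decreasing calibrator flips the inequality in the correct direction (so $\varphi(S_j)\ge 1/\gamma$ genuinely corresponds to small $S_j$), checking $t\in(0,1)$ so that \eqref{eq:lehmann} applies, and separating the two regimes $d_1-d_2\ge 1$ (where $\kappa^{*}=0$ and every $\kappa\in(0,1)$ works) and $d_1-d_2<1$ (where $\kappa^{*}=1-(d_1-d_2)>0$) to confirm that $\kappa\ge\kappa^{*}$ delivers $a\ge 1$ in both. The argument is thus driven entirely by monotonicity, the Lehmann envelope \eqref{eq:lehmann}, and the single inequality $u^{a}\le u$.
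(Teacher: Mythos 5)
Your proposal is correct and follows essentially the same route as the paper's proof: both invert the calibrator to reduce the claim to the deterministic inequality $t^{d_1}\le\gamma\,t^{d_2}$ at $t=\varphi^{-1}(1/\gamma)=(\kappa\gamma)^{1/(1-\kappa)}$, and both verify it using $d_1-d_2\ge 1-\kappa$ (equivalent to $\kappa\ge\kappa^{*}$), the identity $t^{1-\kappa}=\kappa\gamma$, and $\kappa<1$. The only difference is cosmetic — you phrase the key step as $(\kappa\gamma)^{a}\le\kappa\gamma\le\gamma$ with $a=(d_1-d_2)/(1-\kappa)\ge1$, while the paper writes the same chain as $x^{d_1}\le x^{d_2+1-\kappa}\le\kappa^{-1}x^{1-\kappa}x^{d_2}=\gamma x^{d_2}$.
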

	While Proposition \ref{prop:general_n_r} provides an improved upper bound on $\mathrm{pr}\{\varphi(S_j)\ge 1/\gamma\}$, it is unclear how the strength of dependence across studies impacts $\kappa^*$. We remark on this issue below.
	\begin{figure}[!h]
		\centering
		\includegraphics[width=0.85\linewidth]{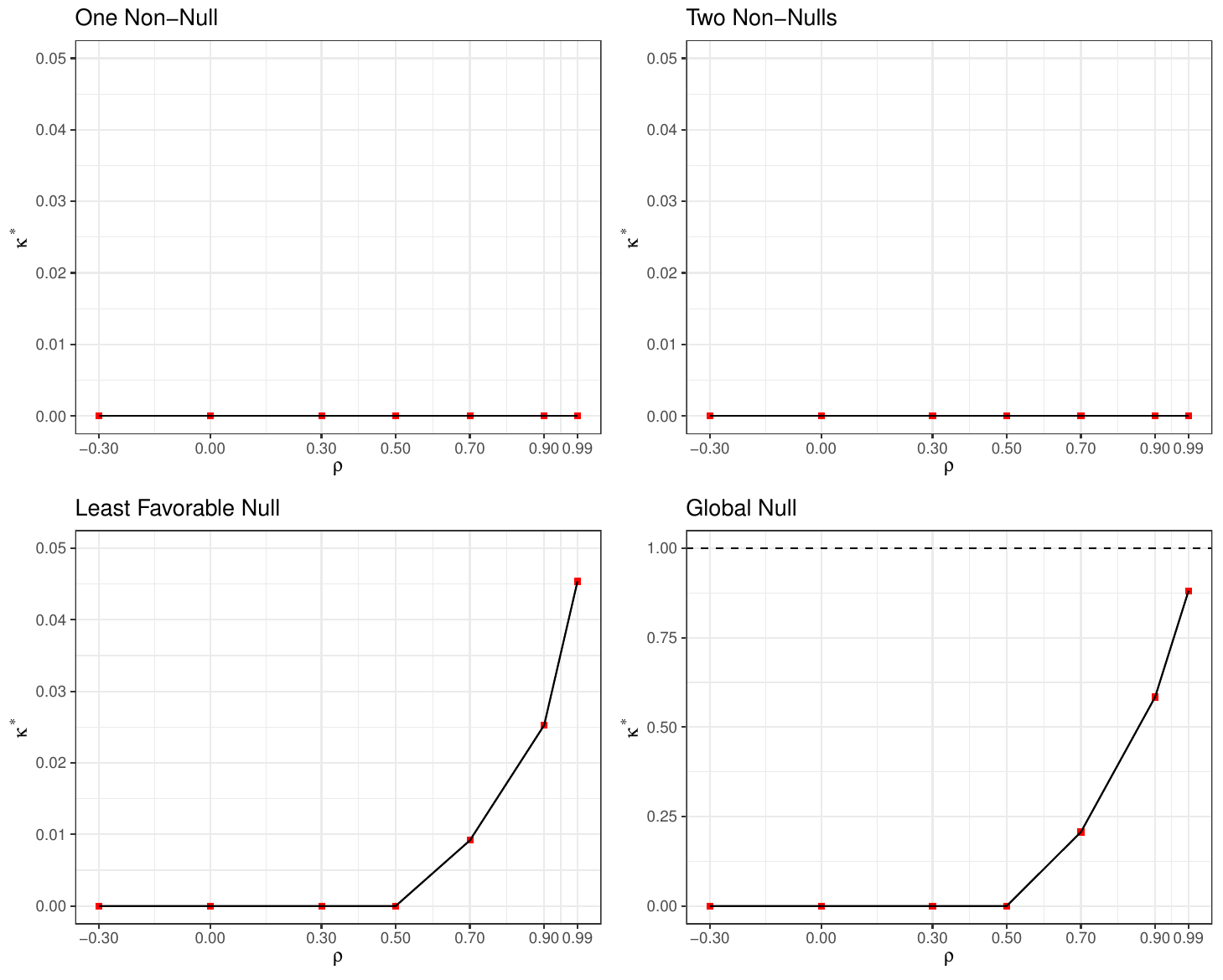}
		\caption{Testing $H_{0j}^{4/4}$ with $P_{ij}=2\Phi(-|X_{ij}|)$. Here $(X_{1j},\ldots,X_{4j})$ are multivariate Normal with mean $\bm\mu$ and covariance matrix $\bm{\Sigma}=(1-\rho)\bm I_4+\rho\bm 1\bm1^T$, and $S_j=P^B_{4/4,j}$, $F_j=P_{(3) j}$. Consider Equation \eqref{eq:lehmann} and denote $\kappa^*=\max(0,d_2-d_1+1)$. Each panel plots $\kappa^{*}$ against different values of $\rho$. Top left: $\bm \mu=(0,0,0,3)^T$. Top right: $\bm \mu=(0,3,0,3)^T$. Bottom left: $\bm \mu=(0,3,3,3)^T$. Bottom right: $\bm \mu=(0,0,0,0)^T$. Analytical calculations are given in Section \ref{app: fig_details} of the supplement.}
		\label{fig:fig_sec_4}
	\end{figure}
	
	Consider testing $H_{0j}^{4/4}$ with $P_{ij}=2\Phi(-|X_{ij}|)$. Here $(X_{1j},\ldots,X_{4j})$ are multivariate Normal with mean $\bm\mu$ and covariance matrix $\bm{\Sigma}=(1-\rho)\bm I_4+\rho\bm 1\bm1^T$. Set $S_j=P^B_{4/4,j}$ and $F_j=P_{(3) j}$. Figure \ref{fig:fig_sec_4} examines four configurations under which $H_{0j}^{4/4}$ is true. 
	The top-left panel plots $\kappa^{*}$ as a function of $\rho$ for the configuration where exactly one base null is false, $\bm{\mu}=(0,0,0,3)^T$. We observe that $\kappa^{*}=0$ across all $\rho$, indicating that in this setting the strength of dependence among studies has no effect on $\kappa^{*}$. The top-right panel corresponds to $\bm{\mu}=(0,3,0,3)^T$ and leads to the same conclusion. The bottom-left panel considers the least favorable null configuration, where three of the four base hypotheses are non-null, $\bm{\mu}=(0,3,3,3)^T$. In this case, $\kappa^{*}$ varies with $\rho$ and Proposition \ref{prop:general_n_r} continues to hold uniformly for all $\rho\ge -1/3$ provided that $\kappa^{*}>0.045$. In contrast, the bottom-right panel, which represents the global null with $\bm{\mu}=(0,0,0,0)^T$, shows that although $\kappa^{*}$ again depends on $\rho$, it must be at least $0.9$ for Proposition \ref{prop:general_n_r} to hold uniformly in $\rho$. Typically, such extreme dependence between the studies is unrealistic in practice because it offers no new scientific information for the underlying replicability analysis. Nevertheless, the preceding discussion suggests that if the studies exhibit almost-perfect dependence and the global null configuration dominates, then the choice of $\kappa$ may have a substantial impact on the validity of $e-$Filter. Analytical calculations underlying Figure \ref{fig:fig_sec_4} are given in Section \ref{app: fig_details} of the supplement.
	\subsection{FWER / PFER and FDR control for finite $m$}
	\label{sec:theory_m_finite}
	We collect the regularity conditions required for our analysis below.
	\begin{assumption}
		\label{assume:1}
		Under $H_{0 j}^{r / n}$, the CDFs of $S_j$ and $F_j$ satisfy the following relationship:
		$$\mathrm{pr}\left(S_j\le x\right)\le x^{d_1} < x^{d_2} \leq \mathrm{pr}(F_j \le x) \quad \text{for each $x \in (0,1)$},$$where $d_1\ge 1$ and $0<d_2<1$.
	\end{assumption}
	\begin{assumption}
		\label{assume:2}
		The $n\times m$ matrix $\mathcal P$ contain column-wise independent valid $p$-values. 
	\end{assumption}
	We re-state Equation \eqref{eq:lehmann} as Assumption \ref{assume:1}, which allows the use of the upper bound developed in Proposition \ref{prop:general_n_r} to establish simultaneous error rate control of $e-$Filter under unknown dependence. Assumption \ref{assume:2} requires that within a study the $p-$values are independent, an assumption also made in \cite{Owen, Dickhaus2024,liang2025powerful}. However, unlike these works, we do not require row-wise independence across $\mathcal P$. 
	While our analysis relies on Assumption \ref{assume:2},  the numerical experiments in Section \ref{sec:sims} demonstrate that $e-$Filter continues to control the error rates at the specified level even when the $p-$values within each study are dependent. %In Section \ref{sec:theory_m_inf}, we relax Assumption \ref{assume:2} and establish asymptotic FDR control for $e-$Filter.
	
	Denote $\kappa^{*}=\max(0,d_2-d_1+1)$. We have the following guarantee on the control of FWER and PFER when $m$ is finite.
	\begin{theorem}
		\label{thm:e_adafilter_fwer_control}
		Under Assumptions \ref{assume:1}--\ref{assume:2} and for all $\kappa\in[\kappa^*,1)$, the $e-$Filter procedure in Definition \ref{def:eadaBon} controls FWER and PFER at level $\alpha$ for the null hypotheses $\mathcal H^{r/n}$.% whenever $d_1\ge \max(d_2+1-\kappa, 1)$.
	\end{theorem}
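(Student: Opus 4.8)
The plan is to reduce both error rates to a single per-null inequality and then deal with the data-dependent threshold $\gamma_e^{\mathrm{PFER}}$ by a leave-one-out decoupling that exploits the column independence of Assumption \ref{assume:2}. First, since $V$ is a nonnegative integer, Markov's inequality gives $\mathrm{FWER}=\mathrm{pr}(V\ge 1)\le E(V)=\mathrm{PFER}$, so it suffices to show $E(V)\le\alpha$. Writing $\mathcal J_0=\{j:H_{0j}^{r/n}\text{ is true}\}$ and $\delta_j=\mathbb I\{\varphi(S_j)>1/\gamma_e^{\mathrm{PFER}}\}$, we have $E(V)=\sum_{j\in\mathcal J_0}\mathrm{pr}\{\varphi(S_j)>1/\gamma_e^{\mathrm{PFER}}\}$, so the whole argument reduces to bounding each summand.

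Fix $j_0\in\mathcal J_0$ and set $A=\{\varphi(S_{j_0})>1/\gamma_e^{\mathrm{PFER}}\}$, $B=\{\varphi(F_{j_0})>1/\gamma_e^{\mathrm{PFER}}\}$, and $N(\gamma)=\sum_{j}\mathbb I\{\varphi(F_j)>1/\gamma\}$. Because $\kappa\in(0,1)$ makes $\varphi(x)=\kappa x^{\kappa-1}$ strictly decreasing, the construction constraint $F_{j_0}\le S_{j_0}$ gives $\varphi(F_{j_0})\ge\varphi(S_{j_0})$ and hence $A\subseteq B$: a rejected null always clears the filter. The obstacle is that $\gamma_e^{\mathrm{PFER}}$ depends on $\bm P_{j_0}$ through the term $\mathbb I\{\varphi(F_{j_0})>1/\gamma\}$ inside $N(\gamma)$, so $\gamma_e^{\mathrm{PFER}}$ and $S_{j_0}$ are dependent and Proposition \ref{prop:general_n_r} cannot be applied directly. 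To decouple, I define the leave-one-out threshold $\hat\gamma_{j_0}=\sup\{\gamma\in[0,\alpha]:\gamma(N^{(-j_0)}(\gamma)+1)\le\alpha\}$ with $N^{(-j_0)}(\gamma)=\sum_{j\ne j_0}\mathbb I\{\varphi(F_j)>1/\gamma\}$, which forces $j_0$ into the filter and therefore depends only on $\{F_j:j\ne j_0\}$; by Assumption \ref{assume:2} this is independent of $\bm P_{j_0}$, hence of $(S_{j_0},F_{j_0})$.

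The crux is a deterministic identity: on the event $B$ one has $\gamma_e^{\mathrm{PFER}}=\hat\gamma_{j_0}$, and moreover $B=\{\varphi(F_{j_0})>1/\hat\gamma_{j_0}\}$. Indeed, writing $c\coloneqq 1/\varphi(F_{j_0})$, for every $\gamma>c$ the two defining constraints coincide since $N(\gamma)=N^{(-j_0)}(\gamma)+1$ there, while both $\gamma_e^{\mathrm{PFER}}>c$ and $\hat\gamma_{j_0}>c$ are each equivalent to $B$; matching the suprema above $c$ yields the identity. Consequently $\mathbb I_A=\mathbb I\{\varphi(S_{j_0})>1/\hat\gamma_{j_0}\}\,\mathbb I_B\le\mathbb I\{\varphi(S_{j_0})>1/\hat\gamma_{j_0}\}$. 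Using that $\hat\gamma_{j_0}$ is independent of $S_{j_0}$, I condition on its value and invoke Proposition \ref{prop:general_n_r} (valid for $\kappa\in[\kappa^*,1)$ under Assumption \ref{assume:1}), so that $\mathrm{pr}(A)\le\mathrm{pr}\{\varphi(S_{j_0})>1/\hat\gamma_{j_0}\}\le E[\hat\gamma_{j_0}\,\mathbb I\{\varphi(F_{j_0})>1/\hat\gamma_{j_0}\}]=E[\gamma_e^{\mathrm{PFER}}\mathbb I_B]$, where the last equality re-uses $\hat\gamma_{j_0}=\gamma_e^{\mathrm{PFER}}$ and $B=\{\varphi(F_{j_0})>1/\hat\gamma_{j_0}\}$ (boundary terms $\varphi(F_{j_0})=1/\hat\gamma_{j_0}$ vanish by continuity of the Lehmann CDFs).

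Summing this per-null bound over $\mathcal J_0$ gives $E(V)\le E\big[\gamma_e^{\mathrm{PFER}}\sum_{j\in\mathcal J_0}\mathbb I\{\varphi(F_j)>1/\gamma_e^{\mathrm{PFER}}\}\big]$. Since the attained supremum obeys the filter constraint $\gamma_e^{\mathrm{PFER}}N(\gamma_e^{\mathrm{PFER}})\le\alpha$ and $\sum_{j\in\mathcal J_0}\mathbb I\{\varphi(F_j)>1/\gamma_e^{\mathrm{PFER}}\}\le N(\gamma_e^{\mathrm{PFER}})$, the integrand is bounded by $\alpha$ pointwise, giving $E(V)\le\alpha$ and therefore $\mathrm{FWER}\le\mathrm{PFER}\le\alpha$. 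I expect the main obstacle to be exactly the deterministic coupling step, namely verifying that $\hat\gamma_{j_0}$ coincides with $\gamma_e^{\mathrm{PFER}}$ on $B$ and that the filter event is preserved, together with the minor technical point that upper semicontinuity of $\varphi$ guarantees the supremum defining $\gamma_e^{\mathrm{PFER}}$ is attained.
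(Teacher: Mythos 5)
Your proposal is correct and follows essentially the same route as the paper's proof: the same leave-one-out threshold (your $\hat\gamma_{j_0}$ is the paper's $\gamma_j$), the same use of column-wise independence to decouple it from $(S_{j_0},F_{j_0})$, the same invocation of Proposition \ref{prop:general_n_r} after conditioning, and the same final appeal to the defining constraint $\gamma_e^{\mathrm{PFER}}\sum_j\mathbb I\{\varphi(F_j)>1/\gamma_e^{\mathrm{PFER}}\}\le\alpha$. Your write-up is in fact slightly more explicit than the paper's about the deterministic coupling (that $\hat\gamma_{j_0}=\gamma_e^{\mathrm{PFER}}$ on the filter event and that the filter event is preserved under the swap), but the argument is the same.
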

	The next theorem shows that $e-$Filter controls FDR at level $\alpha^{1+d_1} C(m)$ where $C(m) = \sum_{j=1}^{m}1/j$. 
	\begin{theorem}
		\label{thm:e_adafilter_fdr_control}
		Denote $C(m) = \sum_{j=1}^{m}1/j$. Then, under Assumptions \ref{assume:1}--\ref{assume:2} and for all $\kappa\in[\kappa^*,1)$, the $e-$Filter procedure in Definition \ref{def:eadaBH} controls FDR at level $\alpha^{1+d_1} C(m)$ for the null hypotheses $\mathcal H^{r/n}$.% whenever $d_1\ge\max( d_2+1-\kappa, 1)$.
	\end{theorem}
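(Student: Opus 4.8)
The plan is to work with the threshold form of Definition \ref{def:eadaBH}, which by Lemma \ref{lem:ef_equivalence} is equivalent to the adjusted $e$-value form of Definition \ref{def:ef_eval_BH}, and to mirror the structure of the PFER argument in Theorem \ref{thm:e_adafilter_fwer_control}. Write $\widehat\gamma=\gamma_e^{\sf FDR}$, and for $\gamma\in(0,1)$ set $R^{S}(\gamma)=\sum_{j}\mathbb I\{\varphi(S_j)>1/\gamma\}$ and $R^{F}(\gamma)=\sum_{j}\mathbb I\{\varphi(F_j)>1/\gamma\}$, with $R=R^{S}(\widehat\gamma)$ the number of discoveries and $V=\sum_{j\in\mathcal H_0}\mathbb I\{\varphi(S_j)>1/\widehat\gamma\}$ the number of false discoveries, where $\mathcal H_0=\{j:\theta_j=0\}$. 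Since $\varphi(x)=\kappa x^{\kappa-1}$ is strictly decreasing on $(0,1)$ and $F_j\le S_j$, every selected hypothesis passes the filter, so $R^{S}(\gamma)\le R^{F}(\gamma)$ for all $\gamma$; moreover both counts are non-decreasing in $\gamma$. The defining property of $\widehat\gamma$ gives $\widehat\gamma\,R^{F}(\widehat\gamma)\le\alpha\max(R,1)$, whence $\max(R,1)\ge\widehat\gamma R^{F}(\widehat\gamma)/\alpha$ and the false discovery proportion obeys $V/\max(R,1)\le\alpha V/\{\widehat\gamma R^{F}(\widehat\gamma)\}$.

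Next I would decompose the expected false discovery proportion over the data-dependent threshold. Because $\widehat\gamma$ takes at most $m$ relevant values---those at which $R^{S}(\cdot)$ jumps---and because $V=0$ whenever $R=0$, I would partition on the realized number of discoveries $R=\ell$ for $\ell=1,\dots,m$ and write $E[V/\max(R,1)]=\sum_{\ell=1}^{m}\ell^{-1}E[V\,\mathbb I(R=\ell)]$. On each event $\{R=\ell\}$ the threshold is pinned down by the order statistics of the $\varphi(S_j)$, so the summand can be analysed against a fixed cutoff. Aggregating the weights $1/\ell$ across $\ell=1,\dots,m$ is precisely what produces the harmonic factor $C(m)=\sum_{j=1}^m 1/j$ in the stated bound, analogous to the Benjamini--Yekutieli inflation that appears when the selection threshold is learned from the data.

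To control each piece I would isolate a single true null $j$. Under Assumption \ref{assume:2} the columns of $\mathcal P$ are independent, so conditioning on $\{\bm P_h:h\ne j\}$ fixes the filter contribution of the other coordinates and renders the effective cutoff independent of column $j$. For a fixed cutoff $1/\gamma$ with $\gamma\le\alpha$, Proposition \ref{prop:general_n_r} applies coordinatewise and yields $\mathrm{pr}\{\varphi(S_j)\ge 1/\gamma\}\le\gamma\,\mathrm{pr}\{\varphi(F_j)\ge 1/\gamma\}$ under arbitrary dependence among the studies within column $j$. The leading factor $\gamma\le\alpha$ supplies one power of $\alpha$, while the residual probability is bounded through Assumption \ref{assume:1}: since $\varphi(S_j)\ge 1/\gamma$ is equivalent to $S_j\le(\kappa\gamma)^{1/(1-\kappa)}$ and $\mathrm{pr}(S_j\le x)\le x^{d_1}$, evaluating at the $\alpha$-scale cutoff contributes the extra factor $\alpha^{d_1}$. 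Substituting the Proposition bound trades the selected-null indicators for the filter count already sitting in $R^{F}(\widehat\gamma)$, and recombining with the harmonic weights of the previous step assembles $\mathrm{FDR}\le\alpha^{1+d_1}C(m)$.

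The main obstacle is the coupling between the numerator $V$ and the denominator $\max(R,1)$ through the shared, data-driven threshold $\widehat\gamma$: one cannot simply substitute $\widehat\gamma$ into Proposition \ref{prop:general_n_r}, which is a marginal statement at a deterministic cutoff. The crux is therefore the leave-one-out conditioning that, using column independence (Assumption \ref{assume:2}), decouples coordinate $j$ from the threshold while still allowing arbitrary cross-study dependence to be absorbed by the Proposition's marginal bound. A secondary delicate point is justifying the discretization over discovery counts and checking that the monotonicity of $R^{S}(\cdot)$ and $R^{F}(\cdot)$ in $\gamma$ makes the union over the $m$ candidate thresholds valid, so that the harmonic weights genuinely aggregate to $C(m)$. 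The remaining bookkeeping---tracking the exponent $1+d_1$ and verifying that the $\alpha$-scale evaluation of Assumption \ref{assume:1} is legitimate for every $\kappa\in[\kappa^*,1)$---is routine once the conditioning is in place.
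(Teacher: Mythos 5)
Your skeleton is pointed in the right direction---the defining inequality $\widehat\gamma\,R^{F}(\widehat\gamma)\le\alpha\max(R,1)$, leave-one-out conditioning via column independence, and an application of Proposition \ref{prop:general_n_r} at a cutoff decoupled from column $j$ are all ingredients of the paper's argument. But there are two genuine gaps. First, your mechanism for the harmonic factor is not the one that works. You attribute $C(m)$ to a Benjamini--Yekutieli-style partition $E[V/\max(R,1)]=\sum_{\ell}\ell^{-1}E[V\,\mathbb I(R=\ell)]$, but this decomposition does not assemble into $\alpha^{1+d_1}C(m)$: on $\{R=\ell\}$ you have $V\le\ell$, so the summands $\ell^{-1}E[V\,\mathbb I(R=\ell)]$ are each at most $\mathrm{pr}(R=\ell)$ and summing them gives $1$, not a harmonic sum times a small constant; extracting anything sharper requires exactly the per-hypothesis analysis you have not supplied. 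In the paper the harmonic factor arises entirely differently: after Proposition \ref{prop:general_n_r} converts the selected-null indicators into filter indicators, one orders $F_1\le\cdots\le F_m$ and uses the deterministic bound $\mathbb I\{\varphi(F_j)>1/\gamma_j\}/\bigl[\sum_k\mathbb I\{\varphi(F_k)>1/\gamma_j\}\vee 1\bigr]\le \mathbb I\{\varphi(F_j)>1/\gamma_j\}/\sum_k\mathbb I(F_k\le F_j)\le 1/j$, and summing over $j$ produces $C(m)$. The remaining factor $\alpha^{d_1}$ then comes from the residual tail $E[\mathbb I\{\varphi(F_j)>1/\gamma_j\}]\le(\varphi^{-1}(1/\gamma_j))^{d_1}\le\gamma_j^{d_1}\le\alpha^{d_1}$ \emph{after} the $S$-to-$F$ trade, whereas your sketch tries to harvest $\alpha^{d_1}$ from the $S_j$ tail and simultaneously spend the same indicator in the Proposition; you cannot use $\mathbb I\{\varphi(S_j)>1/\gamma\}$ twice.

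Second, the leave-one-out step is asserted rather than proved, and for the FDR threshold it is genuinely delicate. Conditioning on $\{\bm P_h:h\ne j\}$ does not by itself make the cutoff independent of column $j$, because $\gamma_e^{\mathrm{FDR}}$ depends on column $j$ through both $S_j$ and $F_j$ in the numerator and denominator of the ratio. The paper must introduce a modified threshold $\gamma_j=\sup\{\gamma\in[0,\alpha]:\gamma(1+\sum_{k\ne j}\mathbb I\{\varphi(F_k)>1/\gamma\})\le\alpha(1+\sum_{k\ne j}\mathbb I\{\varphi(S_k)>1/\gamma\})\}$, which is measurable with respect to the other columns, and then prove (Lemma \ref{ineq_on_gamma}, via a three-case analysis on whether $\varphi(S_j)$, $\varphi(F_j)$, or neither exceeds $1/\gamma_e^{\mathrm{FDR}}$) that $\gamma_j\ge\gamma_e^{\mathrm{FDR}}$, with equality on the rejection event, so that the substitution of $\gamma_j$ for $\gamma_e^{\mathrm{FDR}}$ preserves the chain of inequalities. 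This lemma is the actual crux of the proof and is absent from your proposal; without it the step ``renders the effective cutoff independent of column $j$'' does not go through.
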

	%Hence, to attain FDR control at level $\alpha$, the rejection threshold must be scaled to {$\{\alpha/C(m)\}^{1/{(1+d_1)}}$.
		The proof of Theorem \ref{thm:e_adafilter_fdr_control} closely follows the proof technique of Theorem 4.3 of \cite{Owen}, where an inflation factor $C(m)$ similarly arises in establishing FDR control for finite $m$. However, the numerical experiments in Section \ref{sec:sims} suggest that $e-$Filter can control FDR at level $\alpha$ and remains more powerful than competing approaches designed for testing $\mathcal H^{r/n}$ with dependent studies. In Section \ref{sec:theory_m_inf}, we show that $e-$Filter controls the FDR at level $\alpha$ when $m\to\infty$. 
		\subsection{Asymptotic FDR control}
		\label{sec:theory_m_inf}
		To prove asymptotic FDR control of $e-$Filter (Definition \ref{def:eadaBH}), we borrow the regularity conditions from \cite{Owen} and adapt them to our setting. Specifically, the following technical conditions are required.
		% \begin{assumption}
			% \label{asymp_assump_1}
			%     For each study $i$ and for any fixed $\gamma$, $P_{ij},~j=1,\ldots,m$, satisfy 
			%     $$\dfrac{1}{m^2}\sum_{j\ne j'}\Big|P\Big(P_{ij}<\gamma,P_{ij'}<\gamma\Big)-P\Big(P_{ij}<\gamma\Big)P\Big(P_{ij'}<\gamma\Big)\Big|\to 0,
			%     $$ as $m\to\infty$.
			% \end{assumption}
		\begin{assumption}
			\label{asymp_assump_2}
			Denote $\mathcal H_{0}^{r/n}=\{j:H_{0j}^{r/n}~\text{is true.}\}$ and let $m_0$ be its cardinality. Similarly, let $\mathcal H_{1}^{r/n}=\{j:H_{0j}^{r/n}~\text{is false.}\}$ and suppose $m_1$ is its cardinality. We assume that the following limits exist:
			$$
			\begin{aligned}
				\lim _{m \rightarrow \infty} \frac{m_0}{m} & =\pi_0 \in(0,1), \\
				\lim _{m \rightarrow \infty} \frac{1}{m_0} \sum_{j \in \mathcal{H}_0^{r / n}} \mathrm{pr}\left(F_j<\gamma\right)  =\tilde{F}_0(\gamma), &~
				\lim _{m \rightarrow \infty} \frac{1}{m_1} \sum_{j \in \mathcal{H}_1^{r / n}} \mathrm{pr}\left(F_j<\gamma\right)  =\tilde{F}_1(\gamma), \\
				\lim _{m \rightarrow \infty} \frac{1}{m_0} \sum_{j \in \mathcal{H}_0^{r / n}} \mathrm{pr}\left(S_j<\gamma\right)  =\tilde{S}_0(\gamma), &~
				\lim _{m \rightarrow \infty} \frac{1}{m_1} \sum_{j \in \mathcal{H}_1^{r / n}} \mathrm{pr}\left(S_j<\gamma\right) =\tilde{S}_1(\gamma) .
			\end{aligned}$$
		\end{assumption}
		% For a given $n$, there are exactly $2^n$ configurations of true and false base null hypotheses. Let $\mathcal{A}\in \{0,1\}^n$ represent any one of these combinations. Let $m_{\mathcal{A}}$ be the number of PC hypotheses falling into this combination. Also,let $\mathcal{H}_{0i}$ and $\mathcal{H}_{1i}$ be the sets of true nulls and true nonnulls for the $i$th study. Then, Assumption \ref{asymp_assump_2} is satisfied if
		% \begin{itemize}
			%     \item $\displaystyle \lim_{m\to \infty} m_{\mathcal{A}}/m$ exists for all $\mathcal{A}$.
			%     \item for each $i$, $\{P_{ij} : j \in H_{0i}\}$ have identical distributions across $j
			%     $ and $\{P_{ij} : j\in H_{1i}\}$ also have identical distributions across $j$.
			% \end{itemize} 
		For a fixed $n$, there are $2^n$ possible configurations of true and false base null hypotheses. Denote by $\mathcal{A}\in\{0,1\}^n$ any such configuration, and let $m_{\mathcal{A}}$ denote the number of PC hypotheses corresponding to configuration $\mathcal{A}$. Let $\mathcal{H}_{0i}$ and $\mathcal{H}_{1i}$ represent the sets of true nulls and true non-nulls for the $i$th study, respectively.
		Then, Assumption~\ref{asymp_assump_2} holds if: (i)  $\lim_{m \to \infty} m_{\mathcal{A}}/{m}$ exists for every $\mathcal{A}$, and (ii) for each $i$, the $p$-values $\{P_{ij} : j \in \mathcal{H}_{0i}\}$ are identically distributed across $j$ and similarly, $\{P_{ij} : j \in \mathcal{H}_{1i}\}$ are identically distributed across $j$.
		
		Under Assumption \ref{asymp_assump_2}, denote
		$\tilde{F}(\gamma)=\pi_0 \tilde{F}_0(\gamma)+\left(1-\pi_0\right) \tilde{F}_1(\gamma)$ and $\tilde{S}(\gamma)=\pi_0 \tilde{S}_0(\gamma)+\left(1-\pi_0\right) \tilde{S}_1(\gamma)$. For a given $\gamma$, we adapt the definition of ``asymptotic FDR'' from \cite{Owen} as follows:
		$$
		f_{e}^{\infty}(\gamma)= \begin{cases}\dfrac{\gamma \tilde{F}\{\varphi^{-1}(1/\gamma)\}}{\tilde{S}\{\varphi^{-1}(1/\gamma)\}}, & \text { if } \tilde{S}\{\varphi^{-1}(1/\gamma)\}>0 \\ 0, & \text {otherwise}\end{cases}.
		$$ 
		Let $
		\gamma_e^{\infty}=\sup \left\{\gamma: f_e^{\infty}(\gamma) \leq \alpha\right\}$. Then $f_e^{\infty}(\gamma)$ is 0 for $\gamma=0$ and exceeds 1 when $\gamma=1$. So, the above set is nonempty. We need the following technical assumption on the functions $f_e^{\infty}(\cdot), \tilde{S}_0(\cdot)$ and $\tilde{S}_1(\cdot)$ around $\gamma_e^{\infty}$:
		\begin{assumption}
			\label{asymp_assump_3}
			The following conditions hold: (a) There exists $\delta>0$ such that $f_e^{\infty}(\gamma)$ is monotonically increasing in the interval $\left(\gamma_e^{\infty}-\delta, \gamma_e^{\infty}\right]$, and (b) $\tilde{S}_0(\gamma)$ and $\tilde{S}_1(\gamma)$ are both continuous at $\gamma_e^{\infty}$.  
		\end{assumption}
		The first condition is required in order to guarantee that the limit of the threshold $\gamma_{e}^{\sf FDR}$ 
		is unique when $m \to \infty$. The second condition is satisfied if there are sufficient points  around $\gamma_{e}^{\infty}$ when $m$ is large.
		
		We have the following theorem on the asymptotic FDR control of $e-$Filter.
		\begin{theorem}
			\label{thm:asymp_e_filter_fdr_control}
			Under Assumptions \ref{assume:1}--\ref{asymp_assump_3} and for all $\kappa\in[\kappa^*,1)$, the following holds for the $e-$Filter procedure in Definition \ref{def:eadaBH} as $m\to\infty$:
			$$
			\begin{aligned}
				\gamma_e^{\sf FDR}\stackrel{p}{\to}\gamma_e^{\infty}~\text{and}~
				\mathrm{FDP}\stackrel{p}{\to}\dfrac{\pi_0\tilde{S}_0\{\varphi^{-1}(1/\gamma_e^{\infty})\}}{\tilde{S}\{\varphi^{-1}(1/\gamma_e^{\infty})\}}\le \alpha.
			\end{aligned}$$
			Thus, the $e-$Filter FDR procedure controls FDR at level $\alpha$ for the null hypotheses $\mathcal H^{r/n}$ as $m\to\infty$.
		\end{theorem}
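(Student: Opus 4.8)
The plan is to mirror the asymptotic FDR argument of \cite{Owen}, adapting it to the $e$-value thresholding in Definition~\ref{def:eadaBH}. First I would linearize the rejection rule. Since the calibrator $\varphi(x)=\kappa x^{\kappa-1}$ is strictly decreasing on $(0,1)$, writing $t(\gamma)\coloneqq\varphi^{-1}(1/\gamma)$ gives a strictly increasing map, so the events defining the threshold become $\{\varphi(F_j)>1/\gamma\}=\{F_j<t(\gamma)\}$ and $\{\varphi(S_j)>1/\gamma\}=\{S_j<t(\gamma)\}$. Setting $\hat F_m(\gamma)=m^{-1}\sum_{j=1}^m\mathbb I\{F_j<t(\gamma)\}$ and $\hat S_m(\gamma)=m^{-1}\sum_{j=1}^m\mathbb I\{S_j<t(\gamma)\}$, the defining inequality of $\gamma_e^{\sf FDR}$ reads $\hat f_m(\gamma)\coloneqq \gamma\hat F_m(\gamma)/\max\{\hat S_m(\gamma),1/m\}\le\alpha$, so that $\gamma_e^{\sf FDR}=\sup\{\gamma\in[0,\alpha]:\hat f_m(\gamma)\le\alpha\}$ is the empirical analogue of $\gamma_e^\infty=\sup\{\gamma:f_e^\infty(\gamma)\le\alpha\}$.

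Next I would establish a uniform law of large numbers. Under Assumption~\ref{assume:2} the columns of $\mathcal P$ are independent, so for fixed $\gamma$ the summands in $\hat F_m(\gamma)$ and $\hat S_m(\gamma)$ are independent bounded variables; a weak law together with the limits postulated in Assumption~\ref{asymp_assump_2} yields $\hat F_m(\gamma)\stackrel{p}{\to}\tilde F\{t(\gamma)\}$ and $\hat S_m(\gamma)\stackrel{p}{\to}\tilde S\{t(\gamma)\}$ pointwise. Because $\hat F_m$ and $\hat S_m$ are monotone in $\gamma$ and the limiting functions are continuous at the relevant point (Assumption~\ref{asymp_assump_3}(b)), a P\'olya-type argument upgrades this to uniform convergence on compact sets, giving $\sup_\gamma|\hat f_m(\gamma)-f_e^\infty(\gamma)|\stackrel{p}{\to}0$ on any neighborhood of $\gamma_e^\infty$ on which $\tilde S\{t(\gamma)\}$ is bounded away from zero (automatic for $\gamma\ge\gamma_e^\infty$ since $t$ and $\tilde S$ are increasing and $\tilde S\{t(\gamma_e^\infty)\}>0$).

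The main obstacle is the consistency $\gamma_e^{\sf FDR}\stackrel{p}{\to}\gamma_e^\infty$, because $\hat f_m$ is a ratio of increasing processes and need not itself be monotone, so the supremum could in principle be driven by fluctuations rather than a genuine level crossing. I would control it by a two-sided sandwich. For the overshoot direction, by definition of the supremum $f_e^\infty(\gamma)>\alpha$ for every $\gamma>\gamma_e^\infty$, hence $\inf_{[\gamma_e^\infty+\epsilon,\alpha]}f_e^\infty>\alpha$, and uniform convergence forces $\hat f_m(\gamma)>\alpha$ with probability tending to one throughout $(\gamma_e^\infty+\epsilon,\alpha]$, ruling out $\gamma_e^{\sf FDR}>\gamma_e^\infty+\epsilon$. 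For the undershoot direction I would invoke the local monotonicity in Assumption~\ref{asymp_assump_3}(a): since $f_e^\infty<\alpha$ just to the left of $\gamma_e^\infty$, uniform convergence produces a point in $(\gamma_e^\infty-\epsilon,\gamma_e^\infty]$ at which $\hat f_m\le\alpha$ with high probability, ruling out $\gamma_e^{\sf FDR}<\gamma_e^\infty-\epsilon$.

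Finally I would identify the limit of the FDP and verify it is at most $\alpha$. Writing $\mathrm{FDP}=m^{-1}\sum_{j\in\mathcal H_0^{r/n}}\mathbb I\{S_j<t(\gamma_e^{\sf FDR})\}/\max\{\hat S_m(\gamma_e^{\sf FDR}),1/m\}$, the same uniform law applied to the null-restricted numerator, combined with $\gamma_e^{\sf FDR}\stackrel{p}{\to}\gamma_e^\infty$ and the continuity of $\tilde S_0,\tilde S$ at $\gamma_e^\infty$ (Assumption~\ref{asymp_assump_3}(b)), gives $\mathrm{FDP}\stackrel{p}{\to}\pi_0\tilde S_0\{t(\gamma_e^\infty)\}/\tilde S\{t(\gamma_e^\infty)\}$. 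To bound this limit, apply Proposition~\ref{prop:general_n_r} (equivalently Assumption~\ref{assume:1}) to each null index: $\mathrm{pr}\{S_j\le t(\gamma)\}\le\gamma\,\mathrm{pr}\{F_j\le t(\gamma)\}$, which on averaging over $\mathcal H_0^{r/n}$ and passing to the limit yields $\tilde S_0\{t(\gamma)\}\le\gamma\,\tilde F_0\{t(\gamma)\}$. Evaluating at $\gamma=\gamma_e^\infty$ and using $\tilde F=\pi_0\tilde F_0+(1-\pi_0)\tilde F_1\ge\pi_0\tilde F_0$ together with $f_e^\infty(\gamma_e^\infty)\le\alpha$,
$$\frac{\pi_0\tilde S_0\{t(\gamma_e^\infty)\}}{\tilde S\{t(\gamma_e^\infty)\}}\le\frac{\gamma_e^\infty\,\pi_0\tilde F_0\{t(\gamma_e^\infty)\}}{\tilde S\{t(\gamma_e^\infty)\}}\le\frac{\gamma_e^\infty\,\tilde F\{t(\gamma_e^\infty)\}}{\tilde S\{t(\gamma_e^\infty)\}}=f_e^\infty(\gamma_e^\infty)\le\alpha,$$
which completes the argument.
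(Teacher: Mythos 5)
Your proposal is correct and reaches both conclusions of the theorem, but the key consistency step $\gamma_e^{\sf FDR}\stackrel{p}{\to}\gamma_e^{\infty}$ is handled by a genuinely different argument than the paper's. You argue by a direct two-sided sandwich: uniform convergence of $\hat f_m$ to $f_e^{\infty}$ rules out overshoot past $\gamma_e^{\infty}+\epsilon$ (where $f_e^{\infty}>\alpha$), and Assumption \ref{asymp_assump_3}(a) supplies a feasible point in $(\gamma_e^{\infty}-\epsilon,\gamma_e^{\infty}]$ at which $\hat f_m\le\alpha$ with high probability, ruling out undershoot. The paper instead exploits the identity $\{\gamma_e^{\sf FDR}\ge x\}=\{\inf_{\gamma\in[x,\alpha]}f_m(\gamma)\le\alpha\}$ to write $E[(\gamma_e^{\sf FDR})^k]=\int_0^{\alpha}kx^{k-1}\,\mathrm{pr}\bigl(\inf_{\gamma\in[x,\alpha]}f_m(\gamma)\le\alpha\bigr)\,dx$, sandwiches this via Fatou's lemma between integrals of the indicators of $\{\inf_{[x,\alpha]}f_e^{\infty}<\alpha\}$ and $\{\inf_{[x,\alpha]}f_e^{\infty}\le\alpha\}$, shows these sets differ by at most the single point $\gamma_e^{\infty}$ (using Assumption \ref{asymp_assump_3}(a) and left-continuity of $f_e^{\infty}$), concludes $E[(\gamma_e^{\sf FDR})^k]\to(\gamma_e^{\infty})^k$ for $k=1,2$, and then obtains convergence in probability from $L^2$ convergence via Markov's inequality. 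Your route is more elementary and avoids the moment computation, though your overshoot step silently assumes $\inf_{[\gamma_e^{\infty}+\epsilon,\alpha]}f_e^{\infty}>\alpha$ (pointwise strict inequality does not by itself give a strict infimum); the paper confronts the same issue and resolves it with exactly the assumptions you have available, so this is a presentational rather than substantive gap. Both arguments rest on the same uniform law of large numbers imported from \cite{Owen}. For the FDP limit and the final bound $\le\alpha$ the paper simply defers to \cite{Owen}, whereas you spell the argument out; your chain $\tilde S_0\{t(\gamma)\}\le\gamma\,\tilde F_0\{t(\gamma)\}$ from Proposition \ref{prop:general_n_r}, $\pi_0\tilde F_0\le\tilde F$, and $f_e^{\infty}(\gamma_e^{\infty})\le\alpha$ is precisely what that deferred step must do, and is a useful addition.
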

		%\red{Discuss the broad proof technique and close.}
		\section{Numerical experiments: FDR control}
		\label{sec:sims}
		We evaluate the following six procedures on simulated data for FDR control at the nominal level $\alpha = 0.2$: (1) {AdaFilter}, which tests PC nulls under the assumption of study independence (we use the implementation available at \url{https://github.com/jingshuw/adaFilter}), (2) {BH -$P_{r/n,j}^B$}, a baseline procedure that applies the BH procedure on the Bonferroni PC $p-$values ($P^B_{r/n}$), (3) {BH -$P_{r/n}^C$}, another baseline procedure that applies the BH procedure on the Cauchy PC $p-$values ($P^C_{r/n,j}$), (4) the $e-$PCH procedure from Section \ref{sec:e-PCH}, (5) $e-$Filter B with $S_j=P_{r/n,j}^B,~F_j=(n-r+1)P_{(r-1)j}$, and (6) $e-$Filter C with $S_j=P_{r/n,j}^C,~F_j= \tan\{(0.5-P_{(r-1)j})\pi\}$. For the three $e$-value–based methods, we use the calibrator $\varphi(x) = \kappa x^{\kappa-1}$ with $\kappa \in (0,1)$. The tuning parameter $\kappa$ is selected from the discrete grid $\{0.01, \ldots, 0.09, 0.1, \ldots, 0.9\}$ as the smallest value that yields the largest number of rejections for the corresponding procedure.
		The six methods are evaluated across five different simulation settings, to be described subsequently, and for each setting we adopt the data generation scheme of \cite{Owen}. Specifically, we set $m=10,000$ and consider the following six configurations of $n$ and $r$: $(n,r)\in\{(2,2),~(4,2),~(8,2),~(4,4),~(8,4),~(8,8)\}$. Note that for each $n$ there are $2^n$ combinations of the base hypotheses $\mathcal H_{0j}$. So, denote $\pi_{00}$ as the probability of the global null combination and $\pi_1$ as the probability of the combinations where $H_{0j}^{r/n}$ is false. All remaining combinations where $H_{0j}^{r/n}$ is true have equal probabilities that sum to $1-\pi_{00}-\pi_1$. In our simulations, we fix $\pi_1=0.01$.
		\\[0.3ex]
		\noindent\textbf{Scenario 1 (equicorrelated studies) - }we sample the $m\times n$ matrix of test statistics $\bm X$ from a matrix Normal distribution with $m\times n$ mean matrix $\bm\mu=(\mu_{ij}:1\le i\le n,~1\le j\le m)$, $m\times m$ row covariance matrix $\bm \Sigma_1=(\Sigma_1)_{i,j}$ and $n\times n$ column covariance matrix $\bm \Sigma_2$. In this scenario, $\Sigma_{1,ij}=0.5^{|i-j|}$, $\bm \Sigma_2=\rho\bm 1\bm 1^T+(1-\rho)\bm I_n$ and $\pi_{00}=0.98$. We test $H_{0ij}:\mu_{ij}=0~vs~H_{1ij}:\mu_{ij}\ne 0$ and calculate $P_{ij}=2\Phi(-|X_{ij}|)$. When $H_{0ij}$ is false, we set $\mu_{ij}\in\{-6,-5,-4,4,5,6\}$ with equal probability. 
		% \begin{table}[!t]
			% \centering
			% \scalebox{0.8}{\begin{tabular}{ccccccc}
					% \toprule
					% $n$ & 2 & 4 & 8 & 4 & 8 & 8\\
					% $r$ & 2 & 2 & 2 & 4 & 4 & 8\\
					% \bottomrule
					% \end{tabular}}
			% \caption{Configurations of $n$ and $r$.}
			% \label{tab:nr}
			% \end{table}
		\begin{table}[!h]
			\centering
			\scalebox{0.8}{\begin{tabular}{lcccccccc}
					\hline
					& \multicolumn{2}{c}{$\rho = 0.2$} & \multicolumn{2}{c}{$\rho = 0.4$} & \multicolumn{2}{c}{$\rho = 0.6$} & \multicolumn{2}{c}{$\rho = 0.8$} \\
					\hline
					Method      & FDR & Recall & FDR & Recall & FDR & Recall & FDR & Recall \\
					\hline
					AdaFilter     &  0.115    &   0.983    &   0.257   &   0.983    &  0.478    &   0.985    &  0.708    &   0.990    \\
					BH-$P^{B}_{r/n}$      &  0.002    &   0.756    &    0.004  &    0.761   &  0.011    &  0.770     &   0.029   &  0.782    \\
					BH-$P^{C}_{r/n}$       & 0.002    &   0.758    &   0.005   &    0.764   &   0.016   &   0.773    &  0.047    &   0.785    \\
					\hline
					$e-$PCH   &   $<0.001$   &   0.432    &  $<0.001$    &    0.432   &   $<0.001$    &   0.433    &   0.001    &  0.434     \\
					$e-$Filter C   &    0.005  &   0.892    & 0.013     &    0.893   &   0.044   &   0.895   &  0.136    &   0.900 \\
					$e-$Filter B   &    0.006  &   0.896    & 0.015     &    0.898   &   0.042   &   0.900    &  0.112    &   0.905 
					\\
					\hline
			\end{tabular}}
			\caption{Scenario 1: Average FDR and Recall for different methods targeting a nominal FDR of $\alpha=0.2$. Results for each $n$ and $r$ are presented in Figure \ref{fig:scenario_1_fdr_rbyn} of the supplement.}
			\label{tab:setting_1_fdr}
		\end{table}
		
		For $\rho\in\{0.2,0.4,0.6,0.8\}$, Table \ref{tab:setting_1_fdr} reports the average FDR and Recall across $B=500$ repetitions of the data generation scheme and six $(n,r)$ combinations. Results for each $n$ and $r$ are presented in Figure \ref{fig:scenario_1_fdr_rbyn} of the supplement. With the exception of AdaFilter, all methods control FDR at the nominal level for all values of $\rho$. The two variants of $e-$Filter have similar power with $e-$Filter B marginally more powerful than $e-$Filter C. Furthermore, $e-$PCH is the most conservative procedure in this setting. While $e-$Filter dominates the PC $p-$value approaches and $e-$PCH in power, Figure \ref{fig:scenario_1_fdr_rbyn} reveals that its gain in power is more substantial when $n=8$. 
		\\[0.3ex]
		\noindent\textbf{Scenario 2 (Common subjects) - }this scenario mimics the real data application and reports the impact of common subjects on the performance of various methods. We consider a setting where $n=5,~r=2$ and test $H_{0ij}:\mu_{ij}=0~vs~H_{1ij}:\mu_{ij}\ne 0$. We fix $\pi_{00}=0.98$ and set $\mu_{ij}\in\{-4,-3,3,4\}$ with equal probability whenever $H_{0ij}$ is false. For study 1,  $X_{1jk}=\mu_{1j}+\epsilon_{1jk}$ where $\epsilon_{1jk}\stackrel{i.i.d.}{\sim} N(0,s),~k=1,\ldots,s$. Thus, study 1 involves $s$ subjects and $X_{1j}=(1/s)\sum_{k=1}^{s} X_{1jk}\sim N(\mu_{1j},1)$. We fix $s=5000$. Studies 2, 3, and 4 do not share subjects amongst themselves, but rely on a mutually exclusive set of $1000$ subjects each from study 1. Specifically, for $i\in\{2,3,4\}$ and $k=1,\ldots,1000$, $X_{ijk}=\mu_{ij}+\epsilon_{1j,k+1000(i-2)}/\sqrt{5}$. Finally, study 5 involves $1000$ subjects and borrows $q$ of them from study 2. So,  $X_{5jk}=\mu_{5j}+\epsilon_{5jk},~\epsilon_{5jk}\stackrel{i.i.d.}{\sim}N(0,1000)$ for $k=1,\ldots,1000-q$ and $X_{5jk}=\mu_{5j}+\epsilon_{1j,k-1000+q}$ for $k=1000-q+1,\ldots,1000$.
		\begin{figure}[!h]
			\centering
			\includegraphics[width=0.8\linewidth]{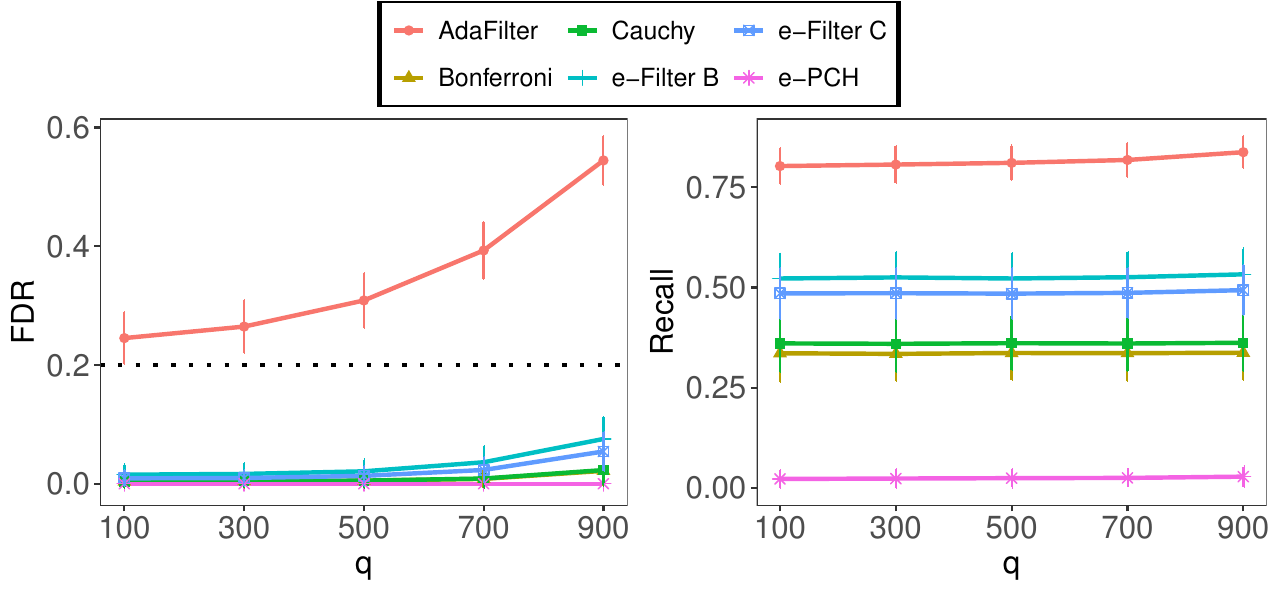}
			\caption{Scenario 2: Average FDR and Recall for different methods as $q$ varies. Here $\alpha=0.2$. {The error bars represent one standard deviation above and below the average FDR (Recall) from $B$ repetitions.}}
			\label{fig:scenario_3_fdr}
		\end{figure}
		
		Figure \ref{fig:scenario_3_fdr} reports the average FDR and Recall for different methods as $q$ varies. Common subjects induce positive dependence between the study-specific summary statistics and in this setting AdaFilter fails to control FDR at the nominal level for all values of $q$. While all other methods control FDR, the two $e-$Filter variants exhibit the highest power. 
		\\[0.5ex]
		\noindent\textbf{Scenario 3 (Negative study dependence) - }here we explore the impact of negative study dependence on the performance of various methods. Specifically, we borrow the data generating scheme from Scenario 1 but test $H_{0ij}:\mu_{ij}=0~vs~H_{1ij}:\mu_{ij}> 0$ and calculate $P_{ij}=1-\Phi(X_{ij})$. Furthermore, we set $\Sigma_{2,ij}=\rho^{|i-j|}$ where $\rho\in\{-0.2,-0.4,-0.6,-0.8\}$. Table \ref{tab:setting_2_fdr} reports the average FDR and Recall across $B$ repetitions of the data generation scheme and six $(n,r)$ combinations. Figure \ref{fig:scenario_2_rbyn} in the supplement reports the results for each $n$ and $r$. Unlike Scenario 1, AdaFilter controls FDR at the nominal level for all but the smallest value of $\rho$. Moreover, for $\rho\in\{-0.6,-0.8\}$, Figure \ref{fig:scenario_2_rbyn} reveals that it fails for the $2/4$ and $4/4$ configurations. In contrast, all other methods control FDR at the nominal level. In terms of power, AdaFilter dominates across all $r/n$ configurations and is closely followed by $e-$Filter B.
		\begin{table}[h!]
			\centering
			\scalebox{0.8}{\begin{tabular}{lcccccccc}
					\hline
					& \multicolumn{2}{c}{$\rho = -0.2$} & \multicolumn{2}{c}{$\rho = -0.4$} & \multicolumn{2}{c}{$\rho = -0.6$} & \multicolumn{2}{c}{$\rho = -0.8$} \\
					\hline
					Method      & FDR & Recall & FDR & Recall & FDR & Recall & FDR & Recall \\
					\hline
					AdaFilter     &  0.041    &   0.250    &   0.047   &   0.250    &  0.086    &   0.251    &  0.220    &   0.252    \\
					BH-$P^{B}_{r/n}$      &  $<0.001$    &   0.194    &    $<0.001$   &    0.194   &  0.001    &  0.194     &   0.004   &  0.194    \\
					BH-$P^{C}_{r/n}$       & $<0.001$    &   0.114    &   $<0.001$   &    0.114   &  $<0.001$   &  0.113    &  0.001   &   0.112    \\
					\hline
					$e-$PCH   &   $<0.001$   &   0.119    &  $<0.001$    &    0.118   &   $<0.001$    &   0.117    &   $<0.001$    &  0.116    \\
					$e-$Filter C   &    0.002  &   0.129    & 0.001     &    0.128   &   0.002   &   0.127    &  0.002    &   0.127    \\
					$e-$Filter B   &    0.002  &   0.224    & 0.002     &    0.224   &   0.005   &   0.224    &  0.020    &   0.224    \\
					\hline
			\end{tabular}}
			\caption{Scenario 3: Average FDR and Recall for different methods targeting a nominal FDR of $\alpha=0.2$. Results for each $n$ and $r$ are presented in Figure \ref{fig:scenario_2_rbyn} of the supplement.}
			\label{tab:setting_2_fdr}
		\end{table}
		\\[0.5ex]
		\noindent\textbf{Scenario 4 (Common controls) - }we consider the following setting: each study is a case–control design that compares the genomes of individuals with a specific disease (‘cases’) to those of healthy individuals (‘controls’) to identify genetic variants, such as SNPs, that occur more frequently between cases and thus can be associated with the disease. To reduce costs, studies often employ a common set of subjects that serve as `controls' \citep{LinSullivan} while the `cases' remain distinct in all studies. We borrow the setting of Scenario 1 but sample the test statistics $X_{ij}$ as follows. We set $X_{ij}=(X_{ij}^{\sf tr}-X_j^{\sf c})/\sqrt 2$ where $X_{ij}^{\sf tr}\stackrel{ind.}{\sim}N(\mu_{ij},1)$ represents the average effect size for the `cases' in study $i$ and base null hypothesis $j$ while $X_{j}^{\sf c}\stackrel{i.i.d.}{\sim}N(0,1)$ denotes the average effect size of the `controls' that are shared by the $n$ studies. 
		
		For $\pi_{00}\in\{0.98,0.8\}$, Table \ref{tab:scenario_4_fdr} reports the average FDR and Recall across $B$ repetitions of the data generation scheme and six $(n,r)$ combinations. Figure \ref{fig:scenario_4_fdr_rbyn} reports the results for each $n$ and $r$. When $\pi_{00}=0.98$, the $e-$Filter variants are the most powerful methods that, unlike AdaFilter, also control the FDR at the nominal level. However, at $\pi_{00}=0.8$, the PC $p-$value methods are more powerful than $e-$Filter. In settings where the proportion of the global null configuration is relatively small, this trend reversal in power is expected since the filtering scheme underlying methods such as $e-$Filter and AdaFilter, is most effective in enhancing power when $\pi_{00}$ overwhelmingly dominates the proportion of all other base null configurations \citep{liang2025powerful}. 
		\begin{table}[h!]
			\centering
			\scalebox{0.8}{\begin{tabular}{lcccc}
					\hline
					& \multicolumn{2}{c}{$\pi_{00} = 0.98$} & \multicolumn{2}{c}{$\pi_{00} = 0.8$} \\
					\hline
					Method      & FDR & Recall & FDR & Recall \\
					\hline
					AdaFilter     &  0.377    &   0.793    &   0.214   &   0.642 \\
					BH-$P^{B}_{r/n}$      &  0.005    &   0.196    &    0.014   &    0.195\\
					BH-$P^{C}_{r/n}$       & 0.007   & 0.206  &   0.016   &  0.206  \\
					\hline
					$e-$PCH   &   $<0.001$   &   0.024    &  $<0.001$    &    0.025  \\
					$e-$Filter C   &    0.024  &   0.355    & 0.009     &    0.177     \\
					$e-$Filter B   &    0.027  &   0.376    & 0.010     &    0.188     \\
					\hline
			\end{tabular}}
			\caption{Scenario 4: Average FDR and Recall for different methods targeting a nominal FDR of $\alpha=0.2$. Results for each $n$ and $r$ are presented in Figure \ref{fig:scenario_4_fdr_rbyn} of the supplement.}
			\label{tab:scenario_4_fdr}
		\end{table}
		\\[0.5ex]
		\noindent\textbf{Scenario 5 (Scale mixture of Normals) - }this is a setting where the $p-$values for the base null hypotheses are calculated from a misspecified null distribution. Specifically, we borrow the setting of Scenario 1 but when the base null $H_{0ij}$ is true, $X_{ij}$ is now a scale mixture of Normals instead of a standard Normal. However, studies continue to rely on the $p-$values $P_{ij}=2\Phi(-|X_{ij}|)$ to test the base nulls. We sample $Y_{ij}\stackrel{i.i.d}{\sim} 0.5~N(0,1)+0.25~N(0,2)+0.25~N(0,4)$ and denote $\bm Y$ as the $n\times m$ matrix with entries $Y_{ij}$. From scenario 1, suppose $\bm \Sigma_1=\bm U\bm U^T$ and $\bm \Sigma_2=\bm V\bm V^T$ denote the Cholesky decompositions of $\bm \Sigma_1$ and $\bm \Sigma_2$. Then we set $\bm X=\bm \mu+\bm V\bm Y\bm U^T$. In particular, this implies $\text{Corr}(X_{ij},X_{i'j})=\rho,~i\ne i'$ and $\text{Corr}(X_{ij},X_{ij'})=0.5^{|j-j'|},~j\ne j'$.
		
		Table \ref{tab:setting_5_fdr} reports the average FDR and Recall from $B$ repetitions of the data generation scheme and six $(n,r)$ combinations. Figure \ref{fig:scenario_5_rbyn} report the results for each $n$ and $r$. Table \ref{tab:setting_5_fdr} suggests that all methods, with the exception of AdaFilter, control FDR at the nominal level; however, Figure \ref{fig:scenario_5_rbyn} shows that the PC $p-$value methods and the $e-$Filter variants fail for the $4/4$ configuration. Furthermore, in terms of power, $e-$Filter dominates the PC $p-$value methods primarily when $n=8$. 
		\begin{table}[!h]
			\centering
			\scalebox{0.8}{\begin{tabular}{lccccccccccc}
					\hline
					&\multicolumn{2}{c}{$\rho = 0$} &
					\multicolumn{2}{c}{$\rho = 0.2$} & \multicolumn{2}{c}{$\rho = 0.4$} & \multicolumn{2}{c}{$\rho = 0.6$} & \multicolumn{2}{c}{$\rho = 0.8$} \\
					\hline
					Method   & FDR & Recall   & FDR & Recall & FDR & Recall & FDR & Recall & FDR & Recall \\
					\hline
					AdaFilter   &0.345 & 0.916 &  0.352    &   0.921    &   0.398   & 0.932      &   0.498   &   0.950    &  0.661   & 0.972   \\
					BH-$P^{B}_{r/n}$   & 0.160& 0.666  &  0.144    &   0.671  &  0.114    & 0.683      &  0.084    &  0.704     &   0.062   & 0.740   \\
					BH-$P^{C}_{r/n}$    &0.162 & 0.670  & 0.146    &   0.674    &  0.120    &  0.687     &   0.095   &  0.708     &    0.085  &  0.744  \\
					\hline
					$e-$PCH  &0.031 &0.403 &  0.025   &   0.404    &   0.016   &  0.406  & 0.008   & 0.411      &   0.003   & 0.419   \\
					$e-$Filter C &0.098 &0.755  &    0.091  &   0.763    &  0.080    &  0.782     &  0.081    &  0.811     &   0.127   &  0.852  \\
					$e-$Filter B & 0.115& 0.762 &    0.108  &   0.770   &  0.096    &   0.790    &    0.094  &   0.819    & 0.123     &  0.859
					\\
					\hline
			\end{tabular}}
			\caption{Scenario 5: Average FDR and Recall for different methods targeting a nominal FDR of $\alpha=0.2$. Results for each $n$ and $r$ are presented in Figure \ref{fig:scenario_5_rbyn} of the supplement.}
			\label{tab:setting_5_fdr}
		\end{table}
		
		In Section \ref{app:pfer_sims} of the supplement, we report the performance of these methods for controlling PFER at $\alpha=1$.
		\section{Replicability of GWAS for LDL-C}
		\label{sec:realdata}
		%The toy example in Figure \ref{fig:motfig1} is inspired by our distinctive case-study on the replicability of GWAS in cardiovascular disease (CVD) research.
		Annually, cardiovascular diseases (CVDs) account for roughly one-third of all deaths worldwide and remain a major public health challenge, especially in low- and middle-income regions. Cholesterol levels are key risk factors for CVDs and include high-density lipoprotein cholesterol (HDL-C), low-density lipoprotein cholesterol (LDL-C), triglycerides, and total cholesterol. Elevated LDL-C levels, in particular, are a major and causal determinant of CVD risk \citep{sandhu2008ldl}. Although the role of family history in CVD risk has long been recognized, GWAS have illuminated the genetic variants underlying these diseases \citep{sandhu2008ldl}. 
		
		In this section, we investigate the replicability of GWAS for LDL-C, and consider the following $n=5$ genetic studies: European study (EUR2010) \citep{teslovich2010biological}, UK Biobank (UKB) study \citep{bycroft2018uk}, Biobank Japan (BBJ) study \citep{nagai2017overview}, within-family GWAS consortium (FGC) study \citep{howe2022within} and the global lipids genetics consortium (GLGC) study \citep{graham2021power}. The EUR2010 study focused on European ancestries. In this study, the discovery GWAS data had more than 100,000 individuals of European descent from 46 studies across the US, Europe, and Australia. The UKB study mainly centers around Europeans, dominated by White-British individuals. The BBJ study is based on the Japanese (East Asian) population. The FGC study is primarily European and combines the data from 18 European cohorts. The bigger GLGC study comprises 1.65 million individuals from different ancestries. The ancestry configuration is dominated by Europeans $(80\%)$ with limited contributions from other ancestries, such as East Asian, African, Hispanic, and South Asian.
		
		\begin{figure}[!h]
			\centering
			\includegraphics[width=0.6\linewidth]{./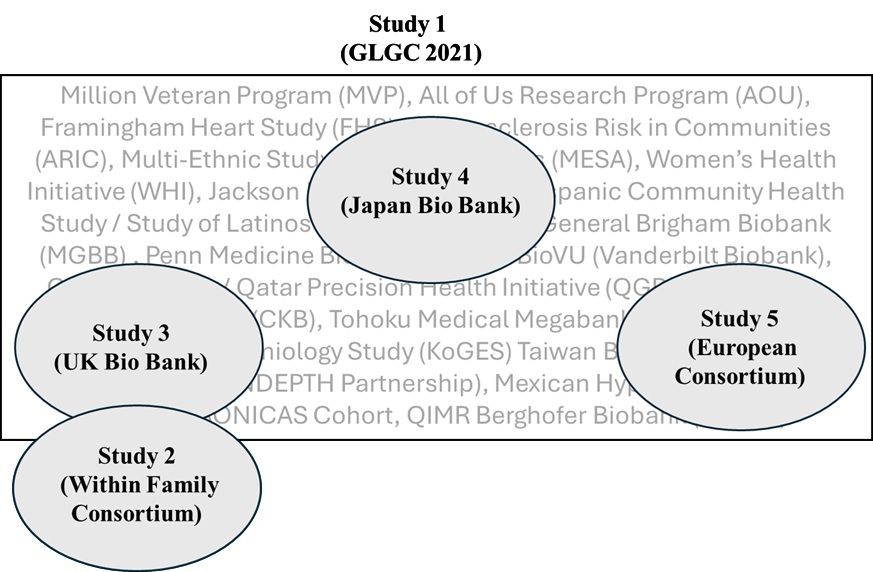}
			\caption{The five studies have a varying extent of sample overlap. While the UKB and BBJ studies are non-overlapping, the FGC and UKB studies have a partial sample overlap $(\text{around } 40,000)$.
				The GLGC study has the largest sample size among these studies and includes EUR2010, UKB, BBJ and the Million Veteran Program among others. However, the GLGC study did not include the FGC cohort. Image source: \protect\url{https://csg.sph.umich.edu/willer/public/glgc-lipids2021/}}
			\label{fig:five_study}
		\end{figure}
		As Figure \ref{fig:five_study} suggests, these studies have a varying extent of sample overlap. While the UKB and BBJ studies are non-overlapping, the FGC and UKB studies have a partial sample overlap (around $40,000$. See Table 1 row 16 at \url{https://www.withinfamilyconsortium.com/cohorts/}). The GLGC study has the largest sample size among these studies \citep{teslovich2010biological}. It utilized the EUR2010 study as its baseline and expanded to include other datasets and ancestries such as UKB, BBJ, the Million Veteran Program (MVP), etc. However, the GLGC study did not include the FGC cohort. Therefore, unlike the remaining three studies, the FGC cohort is not a complete subset of the GLGC study. Thus, these five studies exhibit sample overlap and induce dependence between their summary statistics. The data sources for the five studies are provided in Section \ref{app:data_source} of the supplement. %This motivates the development of $e-$Filter, a powerful framework for investigating the replicability of GWAS across dependent studies.
		
		Since LDL-C is a polygenic trait, numerous genetic variants influence its levels in the population. Among these variant types, we focus on single nucleotide polymorphisms (SNPs). In each study, the $j^{th}$ base null hypothesis states that LDL-C is not associated with the genotype of SNP $j$. Our analysis relies on the $p$-values of these associations for each study. Typically, the $p$-values are obtained from linear regressions of LDL-C on SNP genotypes, adjusting for relevant covariates such as sex, age, and principal components (PCs) of genetic ancestry. Consequently, the publicly available summary statistics and corresponding $p$-values are derived after accounting for these key covariates.
		Since the total number of SNPs is huge, we work with an important class which captures the correlation structure of the SNPs, namely the set of HapMap SNPs, considered by the International HapMap Project \citep{Altshuler2005}. This yields a collection of $m = 459{,}092$ SNPs with base $p$-values available across all five studies. The base-pair positions of these SNPs are aligned according to the GRCh37 (Genome Reference Consortium Human Build 37) assembly for consistency across studies.
		
		In our analysis, we test $\mathcal H^{r/5}$ at the nominal FDR level $\alpha=1\%$ and consider the following three methods from Section \ref{sec:sims}: $e-$Filter B, AdaFilter and Bonferroni (BH -$P_{r/n,j}^B$). Table \ref{tab:realdata-rbyn} reports the number of SNPs replicated by these methods for different values of $r$. 
		\begin{table}[!h]
			\centering
			\scalebox{0.8}{\begin{tabular}{cccc}
					$r$ & AdaFilter & $e-$Filter B & Bonferroni  \\
					\hline
					2 &    4,381       &  2,210        &  1,955 \\
					3 &   1,760        &    980      &   723  \\
					4 &    917       &     572     &   343   \\
					5 &   290        &      152    &   78  \\
					\hline
			\end{tabular}}
			\caption{The number of SNPs replicated by different methods for different values of $r$  at the nominal FDR level $\alpha=1\%$.}
			\label{tab:realdata-rbyn}
		\end{table}
		Although AdaFilter provides close to twice the number of discoveries as $e-$Filter B and Bonferroni, dependent studies can inflate the FDR for AdaFilter, as seen in Section \ref{sec:sims}. Since the total number of replicating SNPs is in the order of thousands, it is impractical to verify whether, for a given $r$, each replicating SNP set is already known in various databases, e.g., the NHGRI-EBI GWAS catalog. So, we consider an aggregate-level approach to further validate the three sets of replicating signals from a biological point of view. 
		
		We fix $r=2$ and map the significant SNPs identified by each method to genes based on physical proximity. Each SNP may either lie within a gene or in its surrounding region. Such annotation of associated SNPs to nearby genes is crucial, as it enables the construction of a gene subset for pathway enrichment %\footnote{Pathway enrichment analysis is a statistical method used to determine whether a particular biological pathway (i.e., gene sets) is overrepresented (or “enriched”) in a list of genes of interest — such as genes near significant SNPs.} 
		and downstream functional analyses. These proximal genes may influence the corresponding phenotype through biologically meaningful mechanisms. We use the Variant Effect Predictor (VEP) tool \citep{mclaren2016ensembl} to map genes to the replicated SNPs, yielding three gene sets corresponding to $e-$Filter B, AdaFilter, and Bonferroni. To identify biologically relevant pathways, we then perform pathway enrichment analysis using the Reactome method \citep{milacic2024reactome} available within the Enrichr tool \citep{kuleshov2016enrichr}. This analysis highlights pathways (i.e., gene sets) that are significantly enriched for these genes, revealing biologically meaningful mechanisms underlying the associations.
		\begin{table}[!t]
			\centering
			\scalebox{0.8}{\begin{tabular}{lcc|cc}
					\hline
					& \multicolumn{2}{c|}{Ratio of Odds Ratios}    & \multicolumn{2}{c}{Ratio of Combined Scores}   \\
					\hline
					Pathway & $e-$Filter & Bonferroni & $e-$Filter & Bonferroni  \\
					\hline
					Plasma Lipoprotein Assemb, Remodel, and Clear.                            & 1.662      & 1.343    & 2.078     & 1.171     \\
					Plasma Lipoprotein Assembly                                                         & 1.388     & 0.826    & 1.443     & 0.500    \\
					Plasma Lipoprotein Remodeling                                                       & 1.467     & 1.303    & 1.565
					& 1.131
					\\
					Plasma Lipoprotein Clearance                                                        & 1.873	& 1.417	&	2.682	& 1.335
					\\
					Chylomicron Assembly                                                                & 0.906	& 0.451	&	0.742	& 0.190
					\\
					Chylomicron Remodeling                                                              & 0.794	& 0.410	&	0.677	& 0.206
					\\
					Chylomicron Clearance                                                               & 1.040	& 0.001	&	1.322	& 0.001
					\\
					NR1H2 and NR1H3-mediated Signaling                                                  & 1.714	& 1.999	&	2.388 &	3.095
					\\
					NR1H3 \& NR1H2 Regulate Gene   Expression & 1.636	& 1.907	&	2.111	& 2.729
					\\
					HDL Remodeling    & 2.119 &	1.094 &		2.942 &	0.895
					\\
					\hline
			\end{tabular}}
			\caption{Reactome pathway enrichment analysis using $n=5$ studies, $r=2$ and FDR $\alpha = 0.01$.}
			\label{tab:pathway_1}
		\end{table}
		% \begin{figure}[!t]
			% \centering
			% \includegraphics[width=0.9\linewidth]{./images/realdata_1}
			% \caption{Reactome pathway enrichment analysis using $n=5$ studies, $r=2$ and FDR $\alpha = 0.01$. Left: Ratio of Odds Ratios. Right: Ratio of Combined Scores}
			% \label{fig:realdata_1}
			% \end{figure}
		
		We discuss ten common LDL-C related pathways. Table \ref{tab:pathway_1} lists these pathways identified by Reactome. 
		For instance, the plasma lipoprotein assembly pathway is related with LDL cholesterol \citep{vance1990assembly}. It corresponds to packaging fats, such as triglycerides and cholesterol, into particles known as very-low-density lipoproteins (VLDL). Similarly, plasma lipoprotein remodeling is relevant to LDL-C \citep{gibbons2004synthesis}. To compare the extent or strength of enrichment of the three sets of genes in each pathway, we consider two statistical measures: the odds ratio of enrichment and the combined score, both of which are reported by the Enrichr platform. Informally, the odds ratio measures the magnitude of enrichment by comparing the odds of pathway membership among input genes to that in the background population. So, an odds ratio in excess of 1 indicates that input genes from the given pathway appear more often than expected. The combined score, on the other hand, combines the statistical significance from the $p-$value with a $z-$score that quantifies how much the observed overlap between the input gene list and a pathway deviates from that expected by random chance, yielding a single measure that reflects both the magnitude and significance of enrichment. Section \ref{app:reactome} of the supplement formally describes these measures in our context.
		
		Table \ref{tab:pathway_1} reports the ratio of odds ratios and the ratio of combined scores, which measure how much stronger (or weaker) each pathway is enriched under $e-$Filter B (Bonferroni) relative to AdaFilter. The ratio of odds ratios reveal that among the ten pathways, the $e-$Filter B derived gene set exhibits stronger enrichment in eight pathways relative to AdaFilter, whereas the Bonferroni derived set shows stronger enrichment in six. When comparing $e-$Filter B and Bonferroni, $e-$Filter B is relatively more enriched in six pathways. The ratio of combined scores lead to similar conclusions. 
		Overall, these findings indicate that while AdaFilter identifies over twice as many replicated SNPs as $e$-Filter B, the SNPs replicated by $e$-Filter B exhibit greater biological coherence, underscoring the functional replicability of its discoveries and their closer alignment with established mechanisms of LDL-C regulation. %Overall, these results indicate that although AdaFilter identifies more than twice as many replicated SNPs as $e-$Filter B, the genes mapped from $e-$Filter B are more strongly enriched across most of the examined pathways. 
		%Overall, these results indicate that although AdaFilter identifies more than twice as many replicated SNPs as $e-$Filter B, the replicated SNPs identified by $e-$Filter exhibit stronger biological coherence,  %Specifically, the gene sets mapped from $e-$Filter-replicated SNPs suggest that it captures biologically meaningful and reproducible signals. 
		%reinforcing the replicability of $e-$Filter's discoveries at the functional level and aligning them more closely with established mechanisms underlying LDL-C regulation.
		\section{Discussion}
		\label{sec:discuss}
		%We propose $e-$Filter, a powerful framework for multiple testing of partial conjunction hypotheses under unknown study dependence. It ensures valid FWER and FDR control without assuming independence among study-specific $p$-values. Our simulations show that $e-$Filter is substantially more powerful than existing PC $p$-value–based methods such as the Bonferroni approach. On a GWAS replicability study of LDL-C, $e-$Filter accommodates varying sample overlap across studies and a subsequent pathway enrichment analysis reveals that variants replicated by $e-$Filter exhibit stronger biological enrichment than competing methods. 
		We conclude this article with a discussion on the limitations of $e-$Filter and avenues for future research.
		
		We discuss two main limitations of $e-$Filter. \textit{First}, $e-$Filter serves as a powerful alternative to PC $p$-value–based methods when the global null configuration overwhelmingly dominates all others. However, when the proportion of global nulls ($\pi_{00}$) is relatively small, $e-$Filter may exhibit lower power than PC $p$-value–based approaches, since its filtering mechanism is most effective when $\pi_{00}$ is the predominant configuration. This scenario is examined in simulation Setting 4 of sections \ref{sec:sims} and \ref{app:pfer_sims} in the supplement. \textit{Second}, the discussion following Proposition \ref{prop:general_n_r} reveals that when the participating studies are nearly perfectly correlated, $e-$Filter may not be valid for simultaneous error rate control unless $\kappa$ is chosen to be close to $1$, in which case it can be extremely conservative. Although such a scenario is unlikely in GWAS applications, it represents a limitation in other practical settings.
		
		While our work centers on replicability analysis, extending $e-$Filter to broader settings involving dependent studies presents an exciting direction for future research. Potential applications include detecting conditional associations consistent across heterogeneous environments \citep{10.1093/biomet/asab055} and accommodating scenarios where some studies report only binary decisions rather than $p$-values \citep{tang2014imputation}. Moreover, it will be valuable to explore whether the $e-$Filter and $e-$PCH procedures can be improved using the $e$-Closure Principle \citep{xu2025bringing}. We are actively investigating these extensions at the time of writing this article.
\bibliography{references}
\newpage
\phantomsection\label{supplementary-material}
\bigskip

\begin{center}

{\large\bf SUPPLEMENTARY MATERIAL: \\
 Multiple Testing of Partial Conjunction Hypotheses for Assessing Replicability Across Dependent Studies}

\end{center}
This supplement is organized as follows: Section \ref{app:proofs} presents the proofs of all theoretical statements in Section \ref{sec:theory} of the manuscript. Section \ref{app:lemmata1_3} provides the proofs of lemmata \ref{lem:cauchy_pvals}--\ref{lem:ef_equivalence} referenced in the main text. Section \ref{app: fig_details} provides additional information pertaining to figures \ref{fig:motfig1}, \ref{fig:fig_sec3_1}, \ref{fig:fig_sec3_2} and \ref{fig:fig_sec_4}. Numerical experiments for PFER control are reported in Section \ref{app:pfer_sims}. Section \ref{app:data_source} lists the data sources for the real data application and Section \ref{app:reactome} provides technical details on the metrics reported in Table \ref{tab:pathway_1}.
\appendix

\section{Proofs of theoretical statements in Section \ref{sec:theory}}
\label{app:proofs}
\subsection{Proof of Proposition \ref{prop:general_n_r}}
Let $x=\varphi^{-1}(1/\gamma)$. We have $\kappa \geq \kappa^{\star} \geq d_2-d_1+1$. This means $d_1 \geq d_2+1-\kappa$. Thus, 
\begin{align*}
	& x^{d_1} \leq x^{d_2+1-\kappa}\\
	\implies & x^{d_1} \leq \frac{x^{1-\kappa}}{{\kappa}} \cdot x^{d_2} \\
	\implies & x^{d_1} \leq \gamma \cdot x^{d_2}\\
	\implies & \mathrm{pr}\left[\varphi\left(S_j\right)>1 / \gamma\right]  \leq \gamma \cdot \mathrm{pr}\left[\varphi\left(F_j\right)>1 / \gamma\right].
\end{align*}
%  The decreasingness of $\varphi$ gives 
%  $$\int_{0}^{x} \varphi(t)dt \geq \int_{0}^{x}\varphi(x)dt = x\cdot \varphi(x).$$
%  On the other hand 
%  $$\int_{0}^{x} \varphi(t)dt \leq \int_{0}^{1} \varphi(t)dt=1.$$
% Hence, $x\cdot \varphi(x) \leq 1$, or equivalently,  $x \geq \varphi^{-1}(1/x)$. 

% \noindent We have 
%     $$\mathrm{pr}\left[\varphi\left(S_j\right)>1 / \gamma\right]=\mathrm{pr}\left[S_j<\varphi^{-1}(1 /\gamma )\right]\leq\left[\varphi^{-1}(1 / \gamma)\right]^{d_2}.$$
%     Furthermore, under the PCH null, $\left[\varphi^{-1}(1 / \gamma)\right]^{d_2} \leqslant 1, \forall d_2 \geqslant 1$. 

% \noindent On the other hand, we have
%     $\mathrm{pr}\left[\varphi\left(F_j\right)>1 / \gamma\right]=\mathrm{pr}\left[F_j<\varphi^{-1}(1 / \gamma)\right] \geqslant\left[\varphi^{-1}(1 / \gamma)\right]^{d_1}$.

% \noindent Now, 
% \begin{align*}
	%     \left[\varphi^{-1}(1 / \gamma)\right]^{d_2}  & \leq \varphi^{-1}(1 / \gamma) \cdot \left[\varphi^{-1}(1 / \gamma)\right]^{d_1} \quad (\text{since $d_2 \geq d_{1}+1$}) \\
	%     & \leq \gamma \cdot \left[\varphi^{-1}(1 / \gamma)\right]^{d_1} \quad (\text{since $\varphi^{-1}(1/\gamma) \leq \gamma$}) \\
	%     & \leq \gamma \cdot \mathrm{pr}\left[\varphi\left(F_j\right)>1 / \gamma\right].
	% \end{align*}

\subsection{Proof of \autoref{thm:e_adafilter_fwer_control}}
For each $j=1,2, \ldots, m$, we define
$$
\gamma_j=\sup \left[\gamma \in[0, \alpha]: \gamma\left\{1+\sum_{s \neq j}^m \mathbb I(\varphi(F_s)>1/\gamma)\right\} \leq \alpha\right]
$$
which is independent from ($F_j, S_j$) because $\left(P_{ij}\right)_{n \times m}$ contain columnwise independent valid $p$-values. It is obvious from the definition that we always have $\gamma_j \leq \gamma_e^{\mathrm{PFER}}$. Specifically, if $\varphi(F_j)>1/\gamma_e^{\mathrm{PFER}}$, then $\gamma_j=\gamma_e^{\mathrm{PFER}}$. Thus, as $F_j \leq S_j$ always holds, the PFER is
$$
\begin{aligned}
	E(V) &=E\left(\sum_{j=1}^m \mathbb I(\varphi(S_j)>1/\gamma_e^{\mathrm{PFER}}) \cdot \mathbb{I}(v_j=0)\right) \\& =E\left(\sum_{j=1}^m \mathbb{I}(\varphi(S_j)>1/\gamma_0^{\mathrm{Bon}}) \mathbb{I}(\varphi(F_j)>\gamma_e^{\mathrm{PFER}}) \cdot \mathbb{I}(v_j=0)\right) \\
	& =\sum_{j=1}^m E\left(\mathbb{I}(\varphi(S_j)>1/\gamma_j) \cdot \mathbb{I}(\varphi(F_j)>1/\gamma_e^{\mathrm{PFER}}) \cdot \mathbb{I}(v_j=0)\right) \\
	& =\sum_{j=1}^m E\left(\mathbb{I}(\varphi(S_j)>1/\gamma_j) \cdot \mathbb{I}(\varphi(F_j)>1/\gamma_j)\right) \cdot \mathbb{I}(v_j=0)
\end{aligned}
$$
The last equality holds as both $\gamma_j \leq \gamma_e^{\mathrm{PFER}}$ and $F_j \leq S_j$ hold for any $j$.
Now using Proposition \ref{prop:general_n_r} and the fact that $\gamma_j \leq \gamma_e^{\mathrm{PFER}}$, we have
$$
\begin{aligned}
	E(V) & =\sum_{j=1}^m E\left(\mathbb{I}(\varphi(S_j)>1/\gamma_j) \cdot \mathbb{I}(\varphi(F_j)>1/\gamma_j)\right) \cdot \mathbb{I}(v_j=0) \\
	& =\sum_{j=1}^m E\left(E\left[\mathbb{I}(\varphi(S_j)>1/\gamma_j) \mid \gamma_j, \mathbb{I}(\varphi(F_j)>1/\gamma_j)\right] \cdot \mathbb{I}(\varphi(F_j)>1/\gamma_j) \cdot \mathbb{I}(v_j=0)\right)\\
	& \leq \sum_{j=1}^m E\left(\gamma_j \cdot \mathbb{I}(\varphi(F_j)>1/\gamma_j) \cdot \mathbb{I}(v_j=0)\right) \\
	& \leq E\left(\gamma_e^{\mathrm{PFER}} \cdot \sum_{j=1}^m \mathbb{I}(\varphi(F_j)>1/\gamma_e^{\mathrm{PFER})} \cdot \mathbb{I}(v_j=0)\right) \leq \alpha .
\end{aligned}
$$
The last inequality holds as $\gamma_e^{\mathrm{PFER}}$ itself satisfies $\gamma_e^{\mathrm{PFER}} \cdot \sum_{j=1}^m \mathbb{I}(\varphi(F_j)>1/\gamma_e^{\mathrm{PFER}}) \leq \alpha$.

\subsection{Proof of \autoref{thm:e_adafilter_fdr_control}} 

We shall need the following result to show FDR control our proposed method.

\setcounter{lemma}{3}
\begin{lemma}\label{ineq_on_gamma}
	Suppose, for each $j=1,2, \ldots, m$, we define
	$$\gamma_j=\sup \left\{\gamma \in[0, \alpha]: \gamma \cdot\left(1+\sum_{k \neq j}^m \mathbb{I}(\varphi(F_k)>1/\gamma)\right) \leq \alpha \cdot\left(1+\sum_{k \neq j}^m \mathbb{I}(\varphi(S_k)>1/\gamma)\right)\right\}.$$ 
	Then, $\gamma_j \geq \gamma_e^{\mathrm{FDR}}$. 
\end{lemma}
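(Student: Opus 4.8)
The plan is to prove $\gamma_j\ge\gamma_e^{\mathrm{FDR}}$ by showing that the point $\gamma^{*}=\gamma_e^{\mathrm{FDR}}$ is itself feasible for the set whose supremum defines $\gamma_j$; since $\gamma_j$ is that supremum, feasibility of $\gamma^{*}$ delivers the inequality at once. I would abbreviate the leave-one-out counts $A=\sum_{k\neq j}\mathbb{I}\{\varphi(F_k)>1/\gamma^{*}\}$ and $B=\sum_{k\neq j}\mathbb{I}\{\varphi(S_k)>1/\gamma^{*}\}$, together with the $j$-th indicators $a_j=\mathbb{I}\{\varphi(F_j)>1/\gamma^{*}\}$ and $b_j=\mathbb{I}\{\varphi(S_j)>1/\gamma^{*}\}$. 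In this notation feasibility of $\gamma^{*}$ for $\gamma_j$ amounts to $\gamma^{*}(1+A)\le\alpha(1+B)$, and I would obtain it from the defining inequality of $\gamma^{*}$, namely $\gamma^{*}(A+a_j)\le\alpha\max(B+b_j,1)$.

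The one structural input I would isolate first is a monotone coupling between $F_j$ and $S_j$. Since $F_j\le S_j$ by construction in \eqref{eq:fj} and the calibrator $\varphi(x)=\kappa x^{\kappa-1}$ is strictly decreasing on $(0,1)$ for $\kappa\in(0,1)$, we have $\varphi(F_j)\ge\varphi(S_j)$, and hence $\{\varphi(S_j)>1/\gamma\}\subseteq\{\varphi(F_j)>1/\gamma\}$, i.e.\ $b_j\le a_j$ in indicator form. I would also note that each map $\gamma\mapsto\mathbb{I}\{\varphi(\cdot)>1/\gamma\}$ equals $\mathbb{I}\{\gamma>1/\varphi(\cdot)\}$ and is therefore left-continuous, so both sides of the inequality defining $\gamma_e^{\mathrm{FDR}}$ are left-continuous in $\gamma$ and the supremum is attained; this licenses using $\gamma^{*}(A+a_j)\le\alpha\max(B+b_j,1)$ at the point $\gamma^{*}$ rather than only in the limit.

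With these two facts in hand the argument reduces to a short case split on $(a_j,b_j)$, where the coupling excludes $(0,1)$. If $b_j=1$ then $a_j=1$ and $\max(B+1,1)=B+1$, so the defining inequality is exactly the target. If $a_j=1,b_j=0$ the target follows from $\alpha\max(B,1)\le\alpha(1+B)$; and if $a_j=b_j=0$ with $B\ge1$ one absorbs the extra unit via $\gamma^{*}\le\alpha$, since $\gamma^{*}(1+A)=\gamma^{*}+\gamma^{*}A\le\alpha+\alpha B=\alpha(1+B)$. I expect the bookkeeping around the $\max(\cdot,1)$ to be the crux: the genuinely delicate configuration is $a_j=b_j=B=0$, in which no $S_k$ crosses the threshold and the realized rejection set is empty. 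Here the target inequality can fail, but this configuration is harmless for Theorem \ref{thm:e_adafilter_fdr_control}, because it yields $R=0$ and so contributes nothing to the false discovery proportion; accordingly I would invoke the lemma only on the rejection event $\{\varphi(S_j)>1/\gamma^{*}\}$, which forces $a_j=b_j=1$ and places us in the clean case. The ordering $\varphi(F_j)\ge\varphi(S_j)$ and the constraint $\gamma^{*}\le\alpha$ are the two levers doing all the work.
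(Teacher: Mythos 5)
Your overall route is the same as the paper's: both arguments verify that $\gamma^{*}=\gamma_e^{\mathrm{FDR}}$ itself satisfies the leave-one-out constraint defining $\gamma_j$, via a case split on $a_j=\mathbb{I}\{\varphi(F_j)>1/\gamma^{*}\}$ and $b_j=\mathbb{I}\{\varphi(S_j)>1/\gamma^{*}\}$; your three cases are exactly the paper's Cases 1--3, and your use of $F_j\le S_j$ together with the monotonicity of $\varphi$ to rule out $(a_j,b_j)=(0,1)$ is the same device. The substantive difference is that you track the $\max(\cdot,1)$ bookkeeping more carefully than the paper does, and you are right to: in the configuration $a_j=b_j=0$, $B=0$, $A\ge 1$ the target $\gamma^{*}(1+A)\le\alpha(1+B)$ can genuinely fail, so the lemma is not correct as stated. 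The paper's Case 3 writes the defining constraint as $\gamma_e^{\mathrm{FDR}}\cdot(a/b)\le\alpha$ and combines it with $\gamma_e^{\mathrm{FDR}}\le\alpha$; that combination is valid only when $b\ge 1$, since for $b=0$ the constraint actually reads $\gamma_e^{\mathrm{FDR}}\cdot a\le\alpha$ and summing yields only $\gamma_e^{\mathrm{FDR}}(1+a)\le 2\alpha$, not $\le\alpha(1+0)$. A concrete counterexample: take $m=2$, $j=1$, with $F_1,S_1,S_2$ large enough that $\varphi(F_1),\varphi(S_1),\varphi(S_2)\le 1/\alpha$, and $F_2$ small enough that $\varphi(F_2)>1/\gamma$ for all $\gamma>\alpha/4$. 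Then for every $\gamma\in[0,\alpha]$ the ratio in Definition~\ref{def:eadaBH} is at most $\gamma/1\le\alpha$, so $\gamma_e^{\mathrm{FDR}}=\alpha$, whereas the set defining $\gamma_1$ requires $2\gamma\le\alpha$ once $\gamma>\alpha/4$, so $\gamma_1=\alpha/2<\gamma_e^{\mathrm{FDR}}$.

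Your proposed repair --- invoking the inequality only on the event $\{\varphi(S_j)>1/\gamma^{*}\}$ --- is the right one and is consistent with how the lemma is actually deployed in the proof of Theorem~\ref{thm:e_adafilter_fdr_control}, where every occurrence is multiplied by $\mathbb{I}(\varphi(S_j)>1/\gamma_e^{\mathrm{FDR}})$. One refinement: on that event the argument (the paper's Case 1, your first case) in fact delivers the stronger conclusion $\gamma_j=\gamma_e^{\mathrm{FDR}}$, because $t_1$ and $t_2$ coincide for all $\gamma\ge\gamma_e^{\mathrm{FDR}}$ there; and it is this \emph{equality}, not merely $\gamma_j\ge\gamma_e^{\mathrm{FDR}}$, that the ``utilizing Lemma~\ref{ineq_on_gamma}'' display in Theorem~\ref{thm:e_adafilter_fdr_control} needs in order to replace the denominator $\gamma_e^{\mathrm{FDR}}\sum_{k}\mathbb{I}(\varphi(F_k)>1/\gamma_e^{\mathrm{FDR}})$ by $\gamma_j(1+\sum_{k\ne j}\mathbb{I}(\varphi(F_k)>1/\gamma_j))$ without reversing the inequality (a larger $\gamma_j$ would only enlarge that denominator). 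So when you restate the lemma on the rejection event, record the equality rather than feasibility alone.
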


\begin{proof}[of Lemma \ref{ineq_on_gamma}]
	We have 
	$$\gamma_e^{\mathrm{FDR}}=\sup \left\{\gamma \in[0, \alpha] : \gamma \cdot t_1(\gamma)\leq \alpha\right\} \quad \text{and} \quad \gamma_{j}=\sup \left\{\gamma \in[0, \alpha] : \gamma \cdot t_2(\gamma)\leq \alpha\right\}$$
	where 
	$$t_1(\gamma):= \frac{\sum_{j=1}^m \mathbb{I}(\varphi(F_j)>1/\gamma)}{\sum_{j=1}^m \mathbb{I}(\varphi(S_j)>1/\gamma) \vee 1}, \quad t_2(\gamma):=\frac{1+ \sum_{k \neq j}^m \mathbb{I}(\varphi(F_k)>1/\gamma)}{1+\sum_{k \neq j}^m \mathbb{I}(\varphi(S_k)>1/\gamma)}.$$
	Hence, it is enough to show that $\gamma_e^{\mathrm{FDR}}\cdot t_{2}(\gamma_e^{\mathrm{FDR}})\leq \alpha$. We consider three cases: 
	
	\noindent \textbf{Case 1. $\varphi(S_{j})>1/\gamma_e^{\mathrm{FDR}}$ }
	
	In this case $\varphi(F_{j})>1/\gamma_e^{\mathrm{FDR}}$ as well, since $f$ is decreasing. Consequently, one has $t_{1}(\gamma_e^{\mathrm{FDR}})=t_2(\gamma_e^{\mathrm{FDR}})$. Indeed, for every $\gamma>\gamma_e^{\mathrm{FDR}}$, we have $t_{1}(\gamma)=t_2(\gamma)$. So, $\gamma_{j}=\gamma_e^{\mathrm{FDR}}$. 
	
	\noindent \textbf{Case 2. $\varphi(F_{j})> 1/\gamma_e^{\mathrm{FDR}} \geq \varphi(S_j)$ }
	
	Here we shall show that $t_{2}(\gamma_e^{\mathrm{FDR}})\leq t_{1}(\gamma_e^{\mathrm{FDR}})$. Let $a= \sum_{k \neq j}^m \mathbb{I}(\varphi(F_k)>1/\gamma_e^{\mathrm{FDR}})$ and $b = \sum_{k \neq j}^m \mathbb{I}(\varphi(S_k)>1/\gamma_e^{\mathrm{FDR}})$. Then, 
	\begin{align*}
		& t_{2}(\gamma_e^{\mathrm{FDR}}) \leq t_{1}(\gamma_e^{\mathrm{FDR}}) \\
		\iff & \frac{1+a}{1+b} \leq \frac{a+\mathbb{I}(\varphi(F_j)>1/\gamma_e^{\mathrm{FDR}})}{\{b+\mathbb{I}(\varphi(S_j)>1/\gamma_e^{\mathrm{FDR}}))\}\vee 1}.
	\end{align*}
	When, $\varphi(F_{j})> 1/\gamma_e^{\mathrm{FDR}} \geq \varphi(S_j)$ then the numerators match on two sides whereas the RHS has smaller denominator.
	
	\noindent \textbf{Case 3. $\varphi(F_j)\leq 1/\gamma_e^{\mathrm{FDR}}$}
	
	The definition of $\gamma_e^{\mathrm{FDR}}$ ensures that $\gamma_e^{\mathrm{FDR}} \leq \alpha$. Also, when $\varphi(F_j)\leq 1/\gamma_e^{\mathrm{FDR}}$, we have $\gamma_e^{\mathrm{FDR}} \cdot \frac{a}{b} \leq \alpha.$ So, 
	\begin{align*}
		& \gamma_e^{\mathrm{FDR}}\cdot (1+a) \leq \alpha \cdot(1 + b) \\
		\implies  & \gamma_e^{\mathrm{FDR}}\cdot t_{2}(\gamma_e^{\mathrm{FDR}})\leq \alpha.
	\end{align*}
\end{proof}
We note that $\gamma_j$ is independent of $(F_j, S_j)$. In addition, the definition of $\gamma_e^{\mathrm{FDR}}$ ensures that
$$
\frac{\gamma_e^{\mathrm{FDR}} \sum_{j=1}^m \mathbb{I}(\varphi(F_j)>1/\gamma_e^{\mathrm{FDR}})}{\sum_{j=1}^m \mathbb{I}(\varphi(S_j)>1/\gamma_e^{\mathrm{FDR}}) \vee 1} \leq \alpha.
$$
This results in the following expression of the FDR of e-Filter: 
$$
\begin{aligned}
	E\left(\frac{V}{R \vee 1}\right)
	& =E\left\{\frac{\sum_{j=1}^m \mathbb{I}(\varphi(S_j)>1/\gamma_e^{\mathrm{FDR}})  \cdot \mathbb{I}(v_j=0)}{\sum_{j=1}^m \mathbb{I}(\varphi(S_j)>1/\gamma_e^{\mathrm{FDR}})  \vee 1}\right\} \\
	& \leq \alpha \cdot E\left[\frac{\sum_{j=1}^m \mathbb{I}(\varphi(S_j)>1/\gamma_e^{\mathrm{FDR}})  \cdot \mathbb{I}(v_j=0)}{\left\{\gamma_e^{\mathrm{FDR}} \sum_{j=1}^m \mathbb{I}(\varphi(F_j)>1/\gamma_e^{\mathrm{FDR}})\right\} \vee \alpha}\right] \\
	& =\alpha \cdot \sum_{j=1}^m E\left[\frac{\mathbb{I}(\varphi(S_j)>1/\gamma_e^{\mathrm{FDR}})  \cdot \mathbb{I}(v_j=0)}{ \left\{\gamma_e^{\mathrm{FDR}} \sum_{k=1}^m \mathbb{I}(\varphi(F_k)>1/\gamma_e^{\mathrm{FDR}})\right\} \vee \alpha }\right] \\
	& = \alpha \cdot \sum_{j=1}^m E \left[ \frac{\mathbb{I}(\varphi(S_j)>1/\gamma_e^{\mathrm{FDR}}) \cdot \mathbb{I}(\varphi(S_j)>1/\gamma_j) \cdot \mathbb{I}(v_j=0)}{\left\{\gamma_j\left(1+\sum_{k \neq j}^m \mathbb{I}(\varphi(F_k)>1/\gamma_j)\right)\right\} \vee \alpha}\right] \quad (\text{utilizing Lemma \ref{ineq_on_gamma}})\\
	& \leq \alpha \cdot \sum_{j=1}^m E \left[
	\frac{\mathbb{I}(\varphi(S_j)>1/\gamma_j) \cdot \mathbb{I}(v_j=0)}{\left\{\gamma_j\left(1+\sum_{k \neq j}^m \mathbb{I}(\varphi(F_k)>1/\gamma_j)\right)\right\} \vee \alpha}\right].
\end{aligned}
$$

\noindent Similarly to \cite{Owen}, let $P_{\cdot(-j)}$ contain all $P_{\cdot k}$ for $k \neq j$. Hence $P_{\cdot(-j)}$ determines $\gamma_j$ and all $\varphi(F_k)$ for $k \neq j$. 

\noindent From the inequality obtained earlier, we have
\begin{align*}
	E\left(\frac{V}{R \vee 1}\right) & \leq \alpha \cdot \sum_{j=1}^m E\left(E\left[\left.\frac{\mathbb{I}(\varphi(S_j)>1/\gamma_j) \cdot \mathbb{I}(v_j=0)}{\left[\gamma_j\left(1+\sum_{k \neq j}^m \mathbb{I}(\varphi(F_k)>1/\gamma_j)\right)\right] \vee \alpha} \right\rvert\, P_{\cdot(-j)}\right]\right) \\
	& = \alpha \cdot \sum_{j=1}^m E\left(\frac{E\left[\mathbb{I}(\varphi(S_j)>1/\gamma_j) \mid P_{\cdot(-j)}\right]}{\left[\gamma_j\left(1+\sum_{k \neq j}^m \mathbb{I}(\varphi(F_k)>1/\gamma_j\right)\right] \vee \alpha}\right) \quad \text{(since $\gamma_j$ is independent of $(F_j, S_j)$)}\\
	& \leq \alpha \cdot \sum_{j=1}^m E\left(\frac{\gamma_j E\left[\mathbb{I}(\varphi(F_j)>1/\gamma_j) \mid P_{\cdot(-j)}\right]}{\left[\gamma_j\left(1+\sum_{k \neq j}^m \mathbb{I}(\varphi(F_k)>1/\gamma_j)\right)\right] \vee \alpha}\right) \quad \text{(from Proposition \ref{prop:general_n_r})} \\
	& =\alpha \cdot \sum_{j=1}^m E\left(\frac{\gamma_j \mathbb{I}(\varphi(F_j)>1/\gamma_j)}{\left[\gamma_j\left(1+\sum_{k \neq j}^m \mathbb{I}(\varphi(F_k)>1/\gamma_j)\right)\right] \vee \alpha}\right) \\
	&=\alpha \cdot \sum_{j=1}^m E\left(\frac{\gamma_j \mathbb{I}(\varphi(F_j)>1/\gamma_j)}{\left[\gamma_j\left(\sum_{k=1}^M \mathbb{I}(\varphi(F_k)>1/\gamma_j)\right)\right] \vee \alpha}\right)  \\
	&=\alpha \cdot \sum_{j=1}^m E\left(\frac{ \mathbb{I}(\varphi(F_j)>1/\gamma_j)}{\left[\left(\sum_{k=1}^m \mathbb{I}(\varphi(F_k)>1/\gamma_j)\right)\right] \vee 1}\right). 
\end{align*}

\noindent Without any loss of generality, we may assume $F_1 \leq F_2 \leq \cdots \leq F_m$. One always has
$$\frac{\mathbb{I}(\varphi(F_j)>1/\gamma_j)}{\sum_{k=1}^m \mathbb{I}(\varphi(F_k)>1/\gamma_j) \vee 1} \leq \frac{\mathbb{I}(\varphi(F_j)>1/\gamma_j)}{\sum_{k=1}^m \mathbb{I}(\varphi(F_k) \geq \varphi(F_j))} \leq \frac{\mathbb{I}(\varphi(F_j)>1/\gamma_j)}{\sum_{k=1}^m \mathbb{I}(F_k \leq F_j)} \leq \frac{\mathbb{I}(\varphi(F_j)>1/\gamma_j)}{j} .$$
\noindent Thus,
\begin{align*}
	E\left(\frac{V}{R \vee 1}\right) \leq \alpha \cdot E\left(\mathbb{I}(\varphi(F_j)>1/\gamma_j)\right)\cdot\sum_{j=1}^{m} \frac{1}{j}
	& = \alpha \cdot E\left(\mathbb{I}(F_j<\varphi^{-1}(1/\gamma_j))\right)\cdot\sum_{j=1}^{m} \frac{1}{j} \\
	& = \alpha \cdot \left(\varphi^{-1}(1/\gamma_j)\right)^{d_1}\cdot\sum_{j=1}^{m} \frac{1}{j} \\
	& \leq \alpha \cdot {\gamma_j}^{d_1} \cdot\sum_{j=1}^{m} \frac{1}{j} \\
	& \leq \alpha^{1+d_1}\cdot \sum_{j=1}^{m} \frac{1}{j}.
\end{align*}

\subsection{Proof of \autoref{thm:asymp_e_filter_fdr_control}}

\noindent \cite{Owen} define four ECDFs
$$
\begin{aligned}
	F_{0, m}(\gamma) & :=\frac{1}{m_0} \sum_{j \in \mathcal{H}_0^{r / n}} \mathbb{I}(F_j<\gamma), & F_{1, m}(\gamma) & :=\frac{1}{m_1} \sum_{j \in \mathcal{H}_1^{r / n}} \mathbb{I}(F_j<\gamma) \\
	S_{0, m}(\gamma) & :=\frac{1}{m_0} \sum_{j \in \mathcal{H}_0^{r / n}} \mathbb{I}(S_j<\gamma), \quad \text { and } & S_{1, m}(\gamma) & :=\frac{1}{m_1} \sum_{j \in \mathcal{H}_1^{r / n}} \mathbb{I}(S_j<\gamma) .
\end{aligned}
$$
They show that $F_{0, m}(\gamma) \xrightarrow{p} \tilde{F}_0(\gamma)$, $F_{1, m} \xrightarrow{p} \tilde{F}_1, S_{0, m} \xrightarrow{p} \tilde{S}_0$ and $S_{1, m} \xrightarrow{p} \tilde{S}_1$ uniformly in $0 \leq \gamma \leq 1$ under Assumptions \ref{assume:2} and \ref{asymp_assump_2}. 

\noindent We define
$$
F_m(\gamma):=\frac{1}{m} \sum_{i=1}^m \mathbb{I}(\left(F_i<\gamma\right)), \quad S_m(\gamma):=\frac{1}{m} \sum_{i=1}^m \mathbb{I}(\left(S_i<\gamma\right)) \quad \text { and } \quad f_m(\gamma):=\frac{\gamma F_m(\varphi^{-1}(1/\gamma))}{S_m(\varphi^{-1}(1/\gamma)) \vee \frac{1}{m}}.$$
A direct conclusion from the above-mentioned convergence of ECDFs is that 
$$
F_m(\gamma) \xrightarrow{p} \tilde{F}(\gamma), \quad S_m(\gamma) \xrightarrow{p} \tilde{S}(\gamma) \quad \text { and } \quad f_m(\gamma) \xrightarrow{p} f_e^{\infty}(\gamma)
$$
all hold uniformly in $\gamma \in[0, \alpha]$.

\noindent As a consequence, for $\forall x \in[0, \alpha]$,

$$
\begin{aligned}
	\inf _{\gamma \in[x, \alpha]} f_m(\gamma) & \geq \inf _{\gamma \in[x, \alpha]}\left[f_m(\gamma)-f_e^{\infty}(\gamma)\right]+\inf _{\gamma \in[x, \alpha]} f_e^{\infty}(\gamma) \\
	& \geq-\sup _{\gamma \in[x, \alpha]}\left|f_m(\gamma)-f_e^{\infty}(\gamma)\right|+\inf _{\gamma \in[x, \alpha]} f_e^{\infty}(\gamma) \\
	& \xrightarrow{p} \inf _{\gamma \in[x, \alpha]} f_e^{\infty}(\gamma)
\end{aligned}
$$
\noindent In addition, for any $\epsilon>0$ we have
$$\begin{aligned}
	& \limsup _{m \rightarrow \infty} \mathrm{pr}\left(\inf _{\gamma \in[x, \alpha]} f_m(\gamma) \leq \alpha\right) \\
	\leq & \mathrm{pr}\left(\liminf _{m \rightarrow \infty}\left(\inf _{\gamma \in[x, \alpha]} f_m(\gamma)\right)-\inf _{\gamma \in[x, \alpha]} f_e^{\infty}(\gamma)<\epsilon\right)+\mathbb{I}( \left \{ \displaystyle \inf _{\gamma \in[x, \alpha]} f_e^{\infty}(\gamma) \leq \alpha+\epsilon \right \})
\end{aligned}
$$
Letting $\epsilon \rightarrow 0$, we have

$$
\limsup _{m \rightarrow \infty} \mathrm{pr}\left(\inf _{\gamma \in[x, \alpha]} f_m(\gamma) \leq \alpha\right) \leq \mathbb{I}( \left \{ \displaystyle \inf _{\gamma \in[x, \alpha]} f_e^{\infty}(\gamma) \leq \alpha+\epsilon \right \}).
$$
Similarly,
$$
\begin{aligned}
	\inf _{\gamma \in[x, \alpha]} f_m(\gamma) & \leq \sup _{\gamma \in[x, \alpha]}\left|f_m(\gamma)-f_e^{\infty}(\gamma)\right|+\inf _{\gamma \in[x, \alpha]} f_e^{\infty}(\gamma) \\
	& \xrightarrow{p} \inf _{\gamma \in[x, \alpha]} f_e^{\infty}(\gamma).
\end{aligned}
$$
We also have
$$
\liminf _{m \rightarrow \infty} \mathrm{pr}\left(\inf _{\gamma \in[x, \alpha]} f_m(\gamma) \leq \alpha\right) \geq \lim _{\epsilon \rightarrow 0}  \mathbb{I}( \left \{ \displaystyle \inf _{\gamma \in[x, \alpha]} f_e^{\infty}(\gamma) \leq \alpha-\epsilon \right \})= \mathbb{I}( \left \{ \displaystyle \inf _{\gamma \in[x, \alpha]} f_e^{\infty}(\gamma) \leq \alpha \right \}).
$$
\noindent Notice that 
$$\begin{aligned}
	E\left[\left(\gamma_e^{\mathrm{FDR}}\right)^k\right] & =\int_0^\alpha k x^{k-1} \mathrm{pr}\left(\gamma_e^{\mathrm{FDR}} \geq x\right) d x \\
	& =\int_0^\alpha k x^{k-1} \mathrm{pr}\left(\inf _{\gamma \in[x, \alpha]} f_m(\gamma) \leq \alpha\right) d x.
\end{aligned}
$$
Thus, by Fatou's lemma and earlier derivations, 
\begin{align*}
	\int_0^\alpha k x^{k-1} \mathbb{I}(\inf _{\gamma \in[x, \alpha]} f_e^{\infty}(\gamma)<\alpha) d x & \leq \liminf _{m \rightarrow \infty} E\left[\left(\gamma_e^{\mathrm{FDR}}\right)^k\right] \\
	& \leq \limsup _{m \rightarrow \infty} E\left[\left(\gamma_e^{\mathrm{FDR}}\right)^k\right] \\
	& \leq \int_0^\alpha k x^{k-1} \mathbb{I}(\inf _{\gamma \in[x, \alpha]} f_e^{\infty}(\gamma) \leq \alpha) d x.
\end{align*}
\noindent In addition, we have $\tilde{F}(\gamma) \geq \tilde{S}(\gamma)$ as $F_j \leq S_j$ for any hypothesis $j$, thus $\gamma_e^{\infty} \leq \alpha$. This combined with part 1 of Assumption 4 and $f_{e}^{\infty}$ 's left continuity also guarantee that
$$\left\{x: \gamma_e^{\infty} \geq x \text { and } x \leq \alpha\right\}=\left\{x: \inf _{\gamma \in[x, \alpha]} f_e^{\infty}(\gamma) \leq \alpha\right\}=\left\{x: \inf _{\gamma \in[x, \alpha]} f_e^{\infty}(\gamma)<\alpha\right\} \cup\left\{\gamma_e^{\infty}\right\}.$$
Hence,
$$\begin{aligned}
	\int_0^\alpha k x^{k-1} \mathbb{I}(\inf _{\gamma \in[x, \alpha]} f_e^{\infty}(\gamma)<\alpha) d x & =\int_0^\alpha k x^{k-1} \mathbb{I}(\inf _{\gamma \in[x, \alpha]} f_e^{\infty}(\gamma) \leq \alpha) d x \\
	& =\int_0^{\gamma_e^{\infty}} k x^{k-1} d x \\
	& =\left(\gamma_e^{\infty}\right)^k.
\end{aligned}$$ 
This implies
$$E[\gamma_e^{\mathrm{FDR}} - \gamma_e^\infty]^2 = E[(\gamma_e^{\mathrm{FDR}})^2] - 2\gamma_e^\infty \cdot E[\gamma_e^{\mathrm{FDR}}] + (\gamma_e^\infty)^2 \to 0.$$
Markov's inequality gives
$$\mathrm{pr}(|\gamma_e^{\mathrm{FDR}} - \gamma_e^\infty| \geq \epsilon) \leq \frac{E|\gamma_e^{\mathrm{FDR}} - \gamma_e^\infty|^2}{\epsilon^2} \to 0.$$
Hence, $\gamma_e^{\mathrm{FDR}} \xrightarrow{P} \gamma_e^\infty$, where $\gamma_e^\infty$ is a constant. The rest follows exactly in the same way as in \cite{Owen}.

\section{Proofs of Lemmata \ref{lem:cauchy_pvals}--\ref{lem:ef_equivalence}}
\label{app:lemmata1_3}
%\setcounter{lemma}{0}
% \begin{lemma}
	% \label{lem:cauchy_pvals}
	%  Let $\bm X_j$ denote the test statistics corresponding to $\bm P_j$. If $E(\bm X_j)=\bm 0$ and for any $1\le r < s\le n$, $(X_{rj}, X_{sj})$ follows a bivariate normal distribution, then $P_{r / n,j}^C$ is an approximately valid $p-$value under $H_{0j}^{r / n}$ for arbitrary dependence among the entries of $\bm X_j$. 
	% \end{lemma}
\subsection{Proof of Lemma \ref{lem:cauchy_pvals}}
\begin{proof}
	The proof follows by combining Theorem 1 of \cite{liu2020cauchy} with Lemma 1 of \cite{benjamini2008screening}. 
	
	In particular, Theorem 1 of \cite{liu2020cauchy} establishes that for a fixed $n$, $T(\bm P_j)=n^{-1}\sum_{i=1}^{n}\tan\{(0.5-P_{ij})\pi\}$ has an approximate Cauchy tail under arbitrary dependence between the underlying $n$ test statistics $\bm X_j$, as long as $E(\bm X_j)=\bm 0$ and for any $1\le r < s\le n$, $(X_{rj}, X_{sj})$ follows a bivariate normal distribution. Now, for a generic PC null $H_{0}^{r/n}$, let $(U_1,\ldots,U_{n-r+1},V_1,\ldots, V_{r-1})$ represent the $n$ dimensional $p-$value vector. Without loss of generality, suppose $U_1,\ldots,U_{n-r+1}$ are the $p-$values for which the base nulls are true, so that $\mathrm{pr}(U_i\le \gamma)\le \gamma$ for $i=1,\ldots,n-r+1$, and let $V_1,\ldots,V_{r-1}$ be the other $p-$values. For a combining function $f:[0,1]^n\to [0,1]$, denote $P_{r/n}=f(U_1,\ldots,U_{n-r+1},V_1,\ldots, V_{r-1})$ as the combined PC $p-$value and let $P^*_{r/n}=f(U_1,\ldots,U_{n-r+1},h_1(V_1),\ldots, h_{r-1}(V_{r-1}))$ for functions $h_i(x)\le x,~i=1,\ldots,r-1$ under $H_0^{r/n}$.  Lemma 1 of \cite{benjamini2008screening} asserts that if $f$ is non-decreasing in all its components, in the sense that $f(U_1,\ldots,U_{n-r+1},h_1(V_1),\ldots, h_{r-1}(V_{r-1}))\le f(U_1,\ldots,U_{n-r+1},V_1,\ldots, V_{r-1})$,  then under $H_{0}^{r/n}$, $\mathrm{pr}(P_{r/n}\le \gamma)\le \mathrm{pr}(P^*_{r/n}\le \gamma)$.
	
	Now in our context, ignoring subscript $j$,
	$$P_{r / n}=\mathrm{pr}\left[W \geq(n-r+1)^{-1}\sum_{i=r}^n \tan\{(0.5-P_{(i)})\pi\}\right].
	$$This is an increasing function of $P_1,\ldots,P_n$ and so Lemma 1 of \cite{benjamini2008screening} is applicable. Furthermore, in this case $P^{*}_{r/n} = f(U_1,\ldots,U_{n-r+1},0,\ldots,0)\sim U(0,1)$ approximately, from Theorem 1 of \cite{liu2020cauchy} (since $T(\mathbf{P}_j)$ has an approximate cauchy tail under the assumptions of our Lemma \ref{lem:cauchy_pvals}.). Therefore, under $H_{0}^{r/n}$, $\mathrm{pr}(P_{r/n}\le \gamma)\le \mathrm{pr}(P^{*}_{r/n}\le \gamma)$ which is approximately $\gamma$. 
\end{proof}
% \begin{lemma}\label{lem:e-PCH}
	% The $e-$PCH procedure controls FDR at the pre-specified level $\alpha$ under arbitrary dependence among the base $p-$values $\mathcal P$. 
	% \end{lemma}
\subsection{Proof of Lemma \ref{lem:e-PCH}}
\begin{proof}
	We note that
	$$e_j=\frac{1}{n-r+1}\sum_{i=1}^{n-r+1}e_{(i)j} \leq \frac{1}{n} \sum_{i=1}^n e_{ij}.$$
	% Taking $E$ on both sides, we obtain that $e_{j}$ is an $e$-value. 
	
	%     Suppose
	% $$
	% e_{j}^{\text {agg-PC }}\left(e_{i 1}, \ldots, e_{i n_{i}}\right)=h_{0j}\left(e_{(1)j}, \ldots ., e_{\left(n\right)j}\right)
	% $$
	% where $e_{(1)j}\le \cdots\le e_{(n)j}$ and the function $h_{0j}$ is
	% \begin{enumerate}[label=(\roman*)]
		%     \item symmetric and monotonically non-decreasing in its arguments
		%     \item $\displaystyle \sum_{j \in \mathcal{H}_{0}^{r / n}} E\left[h_{0j}\left(e_{(1)j}, \ldots ., e_{\left(n\right)j}\right)\right] \leqslant m$ where $\mathcal{H}_{0}^{r / n}=\left\{1 \leqslant j \leqslant m: H_{0 j}^{u / d}\right.$ is true$\}$.
		% \end{enumerate}
	
	% \noindent \textbf{Claim.} $\sum_{j \in \mathcal{H}_{0}^{r/ n}}E\left[e_{j}^{\text {agg-PC }}\right] \leqslant m$.
	
	% \begin{proof}[Proof of Claim.] 
		
		In the following, the max operator denotes the maximum with respect to all the possible base null configurations in $\mathcal{H}_{0j}^{r/n}$. One has the following
		\begin{align*}
			\max  E\left(e_{j}\right) \leqslant & E\left(\max e_{j}\right).
		\end{align*}
		The worst case maximum occurs when the maximum number of false PC null hypotheses are there. Under $\mathcal{H}_{0j}^{r/n}$, at most $r-1$ of $n$ hypotheses are non-null. So, consider the case when exactly, $r-1$ of $n$ hypotheses are non-null.
		Again, in the worst case, $\left\{e_{\left(n-r+2\right)j}, \ldots ., e_{\left(n\right)j}\right\}$ may take the largest value in their domain or may even be infinity. Even when they are infinity,
		$$
		e_{j}=\frac{1}{n-r+1}\sum_{i=1}^{n-r+1}e_{(i)j}
		$$
		where these $e$-variables correspond to the true null hypotheses. Hence, $e_j$ is a valid $e$-value under $\mathcal{H}_{0j}^{r/n}$. The rest follows by applying the $e-$BH procedure on these PCH $e-$values.
	\end{proof}
	\subsection{Proof of Lemma \ref{lem:ef_equivalence}}
	\begin{proof}
		We first show that the rejection sets $\{j:S_j^{\sf e}>1/\gamma_e^{\sf PFER}\}$ and $\{j:e_{j}^{\sf PFER}>1/\alpha\}$ are equivalent. For any $S_{j_1}^{\sf e}\ge S_{j_2}^{\sf e}$, we have $m_{j_1}\le m_{j_2}$ and $e_{j_1}^{\sf PFER}\ge e_{j_2}^{\sf PFER}$. So, we need to show that for any $j$, $S_{(j)}^{\sf e}>1/\gamma_e^{\sf PFER}\Leftrightarrow e_{(j)}^{\sf PFER}>1/\alpha$.
		
		If $S_{(j)}^{e}>1/\gamma_e^{\sf PFER}$ then,
		$$\dfrac{S_{(j)}^{\sf e}}{m_{(j)}}>\dfrac{1}{\gamma_e^{\sf PFER}}\dfrac{1}{\sum_{h=1}^m\mathbb I\{\varphi(F_h)>1/\gamma_e^{\sf PFER}\}}\ge 1/\alpha.
		$$
		So, we have $e_{(j)}^{\sf PFER}={S_{(j)}^{\sf e}}/{m_{(j)}}>1/\alpha$. Now, suppose $e_{(j)}^{\sf PFER}>1/\alpha$. Then,
		$$
		\dfrac{S_{(j)}^e}{\sum_{h=1}^m\mathbb I\{\varphi(F_h)>S_{(j)}^e\}}\ge \dfrac{S_{(j)}^e}{\sum_{h=1}^m\mathbb I\{\varphi(F_h)\ge S_{(j)}^e\}}>1/\alpha.$$ So, by the definition of $\gamma_e^{\sf PFER}$ (Definition \ref{def:eadaBon}), $S_{(j)}^e\ge 1/\gamma_e^{\sf PFER}$. If $S_{(j)}^e= 1/\gamma_e^{\sf PFER}$ then $e_{(j)}^{\sf PFER}>1/\alpha$ implies
		$$\dfrac{1/\gamma_e^{\sf PFER}}{\sum_{h=1}^m\mathbb I\{\varphi(F_h)\ge S_{(j)}^e\}}>1/\alpha.
		$$ But,
		$$\dfrac{1}{\alpha}\ge \inf_{\gamma>\gamma_e^{\sf PFER}}\dfrac{1/\gamma}{\sum_{h=1}^m\mathbb I\{\varphi(F_h)>1/\gamma\}}=\dfrac{1/\gamma_e^{\sf PFER}}{\sum_{h=1}^m\mathbb I\{\varphi(F_h)>=1/\gamma_e^{\sf PFER}\}}=\dfrac{S_{(j)}^{\sf e}}{\sum_{h=1}^m\mathbb I\{\varphi(F_h)>=S_{(j)}^{\sf e}\}},
		$$ which is a contradiction to $e_{(j)}^{\sf PFER}>1/\alpha$. So, we have $S_{(j)}^{\sf e}>1/\gamma_e^{\sf PFER}$. 
		
		We will now show that the rejection sets $\{j:S_j^{\sf e}>1/\gamma_e^{\sf FDR}\}$ and $\{j:e_{j}^{\sf FDR}>1/\alpha\}$ are equivalent. Again, for any $S_{j_1}^e\ge S_{j_2}^e$, we have $e_{j_1}^{\sf FDR}\ge e_{j_2}^{\sf FDR}$. So we need to show that for any $j$, $S_{(j)}^{\sf e}>1/\gamma_e^{\sf FDR}\Leftrightarrow e_{(j)}^{\sf FDR}>1/\alpha$. 
		
		Let $k_0=\max\{k:S_{(k)}^e>1/\gamma_e^{\sf FDR}\}$. Then,
		\begin{equation}
			\label{eq:lem3_1}
			S_{(j)}^e>1/\gamma_e^{\sf FDR}\Leftrightarrow S_{(j)}^e\ge S_{(k_0)}^e.
		\end{equation}
		Since
		$$ e_{(k_0)}^{\sf FDR} = \dfrac{S_{(k_0)}^e k_0}{\sum_{h=1}^m \mathbb I\{\varphi(F_h)\ge S_{(k_0)}^e\}}>\dfrac{k_0/\gamma_e^{\sf FDR}}{\sum_{h=1}^m \mathbb I\{\varphi(F_h)> 1/\gamma_e^{\sf FDR}\}}=\dfrac{\sum_{h=1}^m \mathbb I\{S_h^e> 1/\gamma_e^{\sf FDR}\}}{\gamma_e^{\sf FDR}\sum_{h=1}^m \mathbb I\{\varphi(F_h)> 1/\gamma_e^{\sf FDR}\}}\ge\dfrac{1}{\alpha},
		$$ we have 
		\begin{equation}
			\label{eq:lem3_2}
			S_{(j)}^e\ge S_{(k_0)}^e\leftrightarrow e_{(j)}^{\sf FDR}\ge e_{(k_0)}^{\sf FDR}>1/\alpha.
		\end{equation}
		Now, suppose there exists a $j$ such that $e_{(j)}^{\sf FDR}>1/\alpha$ and $e_{(j)}^{\sf FDR}<e_{(k_0)}^{\sf FDR}$. Then, we have $S_{(j)}^e\le 1/\gamma_e^{\sf FDR}$ and there exists a $k\ge j$ such that 
		$$1/\alpha<e_{(j)}^{\sf FDR}=\dfrac{S_{(k)}^ek}{\sum_{h=1}^m\mathbb I\{\varphi(F_h)\ge S_{(k)}^e\}}.
		$$
		Then, there exists a $\tilde{\gamma}_e$ with $1/\tilde{\gamma}_e<S_{(k)}^e\le 1/\gamma_e^{\sf FDR}$ such that
		$$1/\alpha<\dfrac{k/\tilde{\gamma_e}}{\sum_{h=1}^m\mathbb I\{\varphi(F_h)>1/\tilde{\gamma}_e\}}\le \dfrac{\sum_{h=1}^m\mathbb I\{S_h^e>1/\tilde{\gamma}_e\}}{\tilde{\gamma_e}\sum_{h=1}^m\mathbb I\{\varphi(F_h)>1/\tilde{\gamma}_e\}}, 
		$$ thus contradicting the definition of $\gamma_e^{\sf FDR}$ (Definition \ref{def:eadaBH}). So, we get
		\begin{equation}
			\label{eq:lem3_3}
			e_{(j)}^{\sf FDR}>1/\alpha\Leftrightarrow e_{(j)}^{\sf FDR}\ge e_{(k_0)}^{\sf FDR}.
		\end{equation}
		Finally, from equations \eqref{eq:lem3_1}--\eqref{eq:lem3_3}, we have
		$$S_{(j)}^{\sf e}>1/\gamma_e^{\sf FDR}\Leftrightarrow e_{(j)}^{\sf FDR}>1/\alpha.
		$$
	\end{proof}
\section{Details for figures \ref{fig:motfig1}, \ref{fig:fig_sec3_1}, \ref{fig:fig_sec3_2} and \ref{fig:fig_sec_4}}
\label{app: fig_details}
\textbf{The data generation scheme underlying Figure \ref{fig:motfig1} -- }we set $m=10,000$, $n=5$ and $r=2$.  Note that for each $n$ there are $2^n$ combinations of the base hypotheses $\mathcal H_{0j}$. So, denote $\pi_{00}$ as the probability of the global null combination and $\pi_1$ as the probability of the combinations where $H_{0j}^{r/n}$ is false. All remaining combinations where $H_{0j}^{r/n}$ is true have equal probabilities that sum to $1-\pi_{00}-\pi_1$. We fix $\pi_1=0.01$, $\pi_{00}=0.98$ and sample the $m\times n$ matrix of test statistics $\bm X$ from a matrix Normal distribution with $m\times n$ mean matrix $\bm\mu=(\mu_{ij}:1\le i\le n,~1\le j\le m)$, $m\times m$ row covariance matrix $\bm \Sigma_1=(\Sigma_1)_{i,j}$ and $n\times n$ column covariance matrix $\bm \Sigma_2$. In this scenario, $\Sigma_{1,ij}=\mathbb I(i\ne j)$ and $\bm \Sigma_2=\rho\bm 1\bm 1^T+(1-\rho)\bm I_n$. We test $H_{0ij}:\mu_{ij}=0~vs~H_{1ij}:\mu_{ij}\ne 0$ and calculate $P_{ij}=2\Phi(-|X_{ij}|)$. When $H_{0ij}$ is false, we set $\mu_{ij}\in\{-5,-4,4,5\}$ with equal probability. Figure \ref{fig:motfig1} reports the average FDR and average \% gain in power over $500$ repetitions of this data generation scheme.
\\[1ex]
\textbf{The data generation scheme underlying figures \ref{fig:fig_sec3_1} and \ref{fig:fig_sec3_2} -- }we borrow the scheme from Figure \ref{fig:motfig1} but fix $n=2$ and set $\mu_{ij}\in\{-6,-5,-4,4,5,6\}$ with equal probability whenever $H_{0ij}$ is false.
\\[1ex]
\textbf{Calculations underlying Figure \ref{fig:fig_sec_4} -- }ignoring subscript $j$, we have $S=\max(P_{(1)},\ldots,P_{(4)})$ and $F=\min(P_{(1)},\ldots,P_{(4)})$ in this example. Let $z=\Phi^{-1}(1-0.5 x)$. Then,
$ pr(S<x) =  pr(|X_1|>=z,\ldots,|X_4|>=z)$ and $ pr(F<x)=1- pr(|X_1|<=z,\ldots,|X_4|<=z)$. To obtain $d_1$, we first consider an equispaced grid $\Lambda_x$ of size $50$ on $(0,1)$ and then obtain $d_1$ as a root of the equation $\min_{x\in\Lambda_x}\{x^{d_1}- pr(S<x)\}$. Similarly, we recover $d_2$ as a root of the equation $\min_{x\in\Lambda_x}\{ pr(F<x)-x^{d_2}\}$. These roots are obtained numerically using the ``uniroot'' function in R. Finally, $\kappa^{*}=\max(0,d_2^{*}-d_1^{*}+1)$ where $d_1^{*}$ and $d_2^{*}$ are the roots.  
\section{Numerical experiments: PFER control}
\label{app:pfer_sims}
We report the performances of the six methods from Section \ref{sec:sims} for PFER control at the nominal level $\alpha=1$. We retain the five simulation scenarios from Section \ref{sec:sims} and discuss our findings below. Apart from Scenario 5, the results for PFER control are qualitatively similar to those for FDR control in Section \ref{sec:sims}.
\begin{table}[!h]
\centering
\scalebox{0.8}{\begin{tabular}{lcccccccc}
\hline
& \multicolumn{2}{c}{$\rho = 0.2$} & \multicolumn{2}{c}{$\rho = 0.4$} & \multicolumn{2}{c}{$\rho = 0.6$} & \multicolumn{2}{c}{$\rho = 0.8$} \\
\hline
          Method      & PFER & Recall & PFER & Recall & PFER & Recall & PFER & Recall \\
\hline
AdaFilter     &  0.478    &   0.882   &   1.182   &   0.884    &  3.191   &   0.886   &  8.325    &   0.888   \\
BH-$P^{B}_{r/n}$      &  0.006    &   0.536   &    0.008  &    0.537   &  0.023   &  0.539    &   0.099   &  0.543    \\
BH-$P^{C}_{r/n}$       & 0.006    &   0.540   &   0.008   &    0.541   &   0.029  &   0.544   &  0.147    &   0.548    \\
\hline
$e-$PCH   &   $<0.001$   &   0.288    &  $<0.001$    &    0.288   &   $<0.001$    &   0.288   &   0.001    &  0.288     \\
$e-$Filter C   &    0.013  &   0.649    & 0.026     &    0.652   &   0.107   &   0.658   &  0.440   &   0.666 \\
$e-$Filter B   &    0.015 &   0.649    & 0.026    &    0.653   &   0.088   &   0.658    &  0.310    &   0.666 
\\
\hline
\end{tabular}}
\caption{Scenario 1: Average PFER and Recall for different methods targeting a nominal PFER of $\alpha=1$. Results for each $n$ and $r$ are presented in Figure \ref{fig:scenario_1_pfer_rbyn} of the supplement.}
\label{tab:setting_1_pfer}
\end{table}
\\[1ex]
\noindent\textbf{Scenario 1 (Equicorrelated studies) - }table \ref{tab:setting_1_pfer} reports the average PFER and Recall across $B=500$ repetitions of the data generation scheme and six $(n,r)$ combinations. Results for each $n$ and $r$ are presented in Figure \ref{fig:scenario_1_pfer_rbyn} of the supplement. With the exception of AdaFilter, all methods control PFER at the nominal level for all values of $\rho$. The two variants of $e-$Filter have similar power and $e-$PCH is the most conservative procedure in this setting. Figure \ref{fig:scenario_1_pfer_rbyn} reveals that the gain in power of $e-$Filter is more substantial when $n=8$. 
\\[1ex]
\begin{figure}[!h]
\centering
\includegraphics[width=0.85\linewidth]{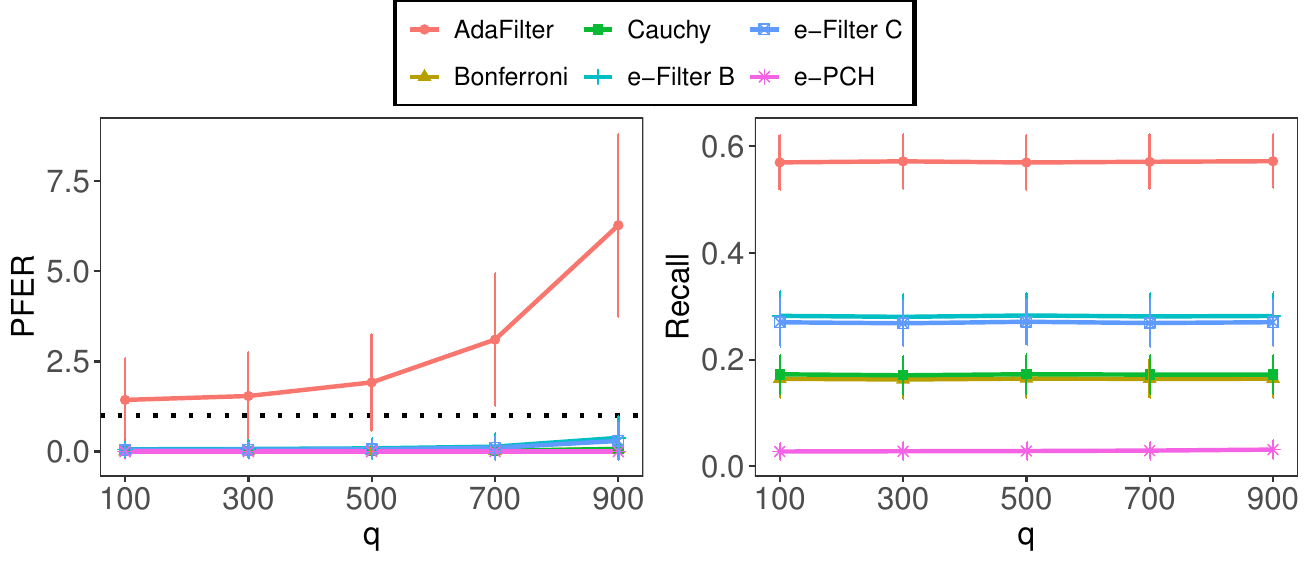}
\caption{Scenario 2: Average PFER and Recall as $q$ varies. Here $\alpha=1$. The error bars represent one standard deviation above and below the average PFER (Recall) from $B$ repetitions.}
\label{fig:scenario_3_pfer}
\end{figure}
\noindent\textbf{Scenario 2 (Common subjects) - }the results are presented in Figure \ref{fig:scenario_3_pfer}. AdaFilter fails to control PFER at the nominal level for all values of $q$. While all other methods control FDR, the two $e-$Filter variants exhibit the highest power.
\\[1ex]
\noindent\textbf{Scenario 3 (Negative study dependence) - }table \ref{tab:setting_2_pfer} and Figure \ref{fig:scenario_2_pfer_rbyn} report the results. AdaFilter controls PFER at the nominal level for all but the smallest value of $\rho$. Moreover, for $\rho\in\{-0.6,-0.8\}$, Figure \ref{fig:scenario_2_pfer_rbyn} reveals that it fails for the $2/4$ and $4/4$ configurations. All other methods control PFER at the nominal level. Moreover, apart from the $2/4$ and $4/4$ configurations, AdaFilter is generally more powerful than other methods and is only marginally better than $e-$Filter.
\begin{table}[!h]
\centering
\scalebox{0.8}{\begin{tabular}{lcccccccc}
\hline
& \multicolumn{2}{c}{$\rho = -0.2$} & \multicolumn{2}{c}{$\rho = -0.4$} & \multicolumn{2}{c}{$\rho = -0.6$} & \multicolumn{2}{c}{$\rho = -0.8$} \\
\hline
          Method      & PFER & Recall & PFER & Recall & PFER & Recall & PFER & Recall \\
\hline
AdaFilter     &  0.202    &   0.236   &   0.239   &   0.236    &  0.454   &   0.236  &  1.790    &   0.236   \\
BH-$P^{B}_{r/n}$      &  0.001    &   0.163   &    $<0.001$  &    0.163   &  $<0.001$   &  0.163    &   0.011   &  0.162   \\
BH-$P^{C}_{r/n}$       & 0.001    &   0.105   &   $<0.001$   &    0.104   &   $<0.001$ &   0.103   &  0.002   &   0.102    \\
\hline
$e-$PCH   &   $<0.001$   &   0.093    &  $<0.001$    &    0.091   &   $<0.001$    &   0.090   &   $<0.001$    &  0.089     \\
$e-$Filter C   &    0.010  &   0.120    & 0.006     &    0.119   &   0.007   &   0.119  &  0.010   &   0.118 \\
$e-$Filter B   &    0.011 &   0.196    & 0.007    &    0.195   &   0.013   &   0.195    &  0.060    &   0.195
\\
\hline
\end{tabular}}
\caption{Scenario 3: Average PFER and Recall for different methods targeting a nominal PFER of $\alpha=1$. Results for each $n$ and $r$ are presented in Figure \ref{fig:scenario_2_pfer_rbyn} of the supplement.}
\label{tab:setting_2_pfer}
\end{table}
\begin{table}[h!]
\centering
\scalebox{0.9}{\begin{tabular}{lcccc}
\hline
& \multicolumn{2}{c}{$\pi_{00} = 0.98$} & \multicolumn{2}{c}{$\pi_{00} = 0.8$} \\
\hline
Method      & PFER & Recall & PFER & Recall \\
\hline
AdaFilter     &  2.465    &   0.471    &   0.993   &   0.325 \\
BH-$P^{B}_{r/n}$      &  0.018    &   0.104   &    0.059   &    0.104\\
BH-$P^{C}_{r/n}$       & 0.021   & 0.109  &   0.063   &  0.109  \\
\hline
$e-$PCH   &   $<0.001$   &   0.022    &  0.001    &    0.022  \\
$e-$Filter C   &    0.089  &   0.196    & 0.041    &    0.106     \\
$e-$Filter B   &    0.099  &   0.202    & 0.048     &    0.113     \\
\hline
\end{tabular}}
\caption{Scenario 4: Average PFER and Recall for different methods targeting a nominal PFER of $\alpha=1$. Results for each $n$ and $r$ are presented in Figure \ref{fig:scenario_4_pfer_rbyn} of the supplement.}
\label{tab:scenario_4_pfer}
\end{table}

\noindent\textbf{Scenario 4 (Common controls) - }Table \ref{tab:scenario_4_pfer} and Figure \ref{fig:scenario_4_pfer_rbyn} report the results. When $\pi_{00}=0.98$, the $e-$Filter variants are the most powerful methods that, unlike AdaFilter, also control the PFER at the nominal level. However, at $\pi_{00}=0.8$, the PC $p-$value methods exhibit comparable power as the $e-$Filter variants. In fact, the former dominates the latter in power when $n\ne 8$. This trend reversal in power is expected when the proportion of the global null configuration is relatively small. 
\begin{table}[!h]
\centering
\scalebox{0.8}{\begin{tabular}{lccccccccccc}
\hline
&\multicolumn{2}{c}{$\rho = 0$} &
\multicolumn{2}{c}{$\rho = 0.2$} & \multicolumn{2}{c}{$\rho = 0.4$} & \multicolumn{2}{c}{$\rho = 0.6$} & \multicolumn{2}{c}{$\rho = 0.8$} \\
\hline
Method   & PFER & Recall   & PFER & Recall & PFER & Recall & PFER & Recall & PFER & Recall \\
\hline
AdaFilter   &12.488 & 0.763 &  11.340    &   0.770    &   9.678   & 0.787      &   8.847   &   0.812    &  10.540  & 0.847   \\
BH-$P^{B}_{r/n}$   & 3.450& 0.490  &  2.826    &   0.492  &  1.761    & 0.497    &  0.884   &  0.506     &   0.370   & 0.522   \\
BH-$P^{C}_{r/n}$    &3.465 & 0.494  & 2.842    &   0.495    &  1.778   &  0.501    &   0.920   &  0.510     &    0.451  &  0.526 \\
\hline
$e-$PCH  &0.549 &0.284 &  0.450   &   0.283    &   0.240  &  0.283  & 0.084   & 0.284    &   0.018   & 0.285   \\
$e-$Filter C &2.464 &0.545  &  2.141 &   0.551    &  1.519   &  0.566     &  0.997    &  0.590    &   0.840   &  0.626  \\
$e-$Filter B & 2.864& 0.549 &  2.511  &   0.555   &  1.826    &   0.570    &    1.163  &   0.594   & 0.784     &  0.629
\\
\hline
\end{tabular}}
\caption{Scenario 5: Average PFER and Recall for different methods targeting a nominal PFER of $\alpha=1$. Results for each $n$ and $r$ are presented in Figure \ref{fig:scenario_5_pfer_rbyn} of the supplement.}
\label{tab:setting_5_pfer}
\end{table}

\noindent\textbf{Scenario 5 (Scale mixture of Normals) - }Table \ref{tab:setting_5_pfer} and Figure \ref{fig:scenario_5_pfer_rbyn} report the results. Overall, $e-$PCH is the only method that controls PFER at the nominal level for all values of $\rho$. From Table \ref{tab:setting_5_pfer}, both the PC $p-$value methods and the $e-$Filter variants are valid only when $\rho\in\{0.6,0.8\}$. Moreover, Figure \ref{fig:scenario_5_pfer_rbyn} shows that these methods exhibit a substantially high empirical PFER for the $4/4$ configuration. Thus, in contrast to the results of Scenario 5 in Section \ref{sec:sims}, the $e-$Filter variants are relatively less robust to the misspecification of the base null distribution as far as PFER control is concerned.
\section{Data sources}
\label{app:data_source}
Here we list the sources of the data described in Section \ref{sec:realdata}.
\begin{enumerate}
    \item Biobank Japan (BBJ): \url{https://gwas.mrcieu.ac.uk/datasets/bbj-a-31/}
    \item Within-family GWAS consortium (FGC): \url{ https://gwas.mrcieu.ac.uk/datasets/ieu-b-4846/}
    \item European study (EUR2010): \url{https://gwas.mrcieu.ac.uk/datasets/ebi-a-GCST000759/}
    \item UK Biobank (UKB): \url{https://pan.ukbb.broadinstitute.org/downloads}
    \item Trans-ancestry global lipids genetic consortium (GLGC): \url{https://csg.sph.umich.edu/willer/public/glgc-lipids2021/}
\end{enumerate}
\section{Reactome odds ratio and combined score calculations}
\label{app:reactome}
\textbf{Odds ratio -- }
Let $\mathcal{G}$ denote the background gene universe of size $N_1 = |\mathcal{G}|$, 
and let $L \subset \mathcal{G}$ represent the user-supplied input gene list of size $N_2 = |L|$.
For a given pathway $P\subset \mathcal G$ of size $K = |P|$, let $k = |L \cap P|$ denote 
the number of input genes that belong to the pathway. 

The enrichment of $P$ within $L$ is assessed using a $2\times2$ contingency table:

\[
\begin{array}{c|cc|c}
 & \text{In pathway } P & \text{Not in pathway } P & \text{Total} \\
\hline
\text{In input list } L & k & N_2 - k & N_2 \\
\text{Not in input list} & K - k & N_1 - K - N_2 + k & N_1 - N_2 \\
\hline
\text{Total} & K & N_1 - K & N_1
\end{array}
\]

The \textit{odds ratio} (OR) is defined as
$$
\text{OR} = \dfrac{k (N_1 - K - N_2 + k)}{(N_2 - k)(K - k)}.
$$
It quantifies the ratio of the odds that a gene in the input list belongs to the pathway 
relative to the odds that a gene outside the list belongs to the pathway. 
An odds ratio greater than one indicates over-representation of the pathway among 
input genes, while an odds ratio less than one indicates depletion.

Statistical significance of the enrichment is assessed using Fisher’s exact test, 
based on the hypergeometric probability of observing at least $k$ overlapping genes:
$$
p = \sum_{i=k}^{\min(N_2,K)} \dfrac{\binom{K}{i}\binom{N_1-K}{N_2-i}}{\binom{N_1}{N_2}}.
$$
%False discovery rate (FDR) correction is applied across all tested pathways. While the FDR-adjusted $p$-value quantifies statistical significance, the odds ratio provides a measure of effect size, describing the strength of enrichment.
\\[1ex]
\textbf{Combined score -- }
Enrichment tools, such as Enrichr \citep{kuleshov2016enrichr}, additionally compute a \textit{combined score} 
to integrate both the magnitude and statistical significance of enrichment.
The combined score is defined as
$$
\text{Combined Score} = z\log(p),
$$
where $p$ is the unadjusted Fisher’s exact test $p-$value, and $z$ is a 
$z$-score measuring the deviation of the observed rank of the gene set 
from its expected rank under random permutations:
\[
z = \frac{l - \mu_l}{\sigma_l},
\]
where $l$ denotes the observed rank, and $\mu_l$ and $\sigma_l$ are the mean 
and standard deviation of ranks under the null distribution.

The combined score thus weights the statistical significance ($\log(p)$) 
by the standardized deviation from expectation ($z$), producing a composite 
enrichment measure. High combined scores correspond to pathways that are both 
strongly enriched and show larger-than-expected overlap with the input gene list. 
%##################################################################
%%%%%%%%%%%%%%%%%%%%%%%%%%%%%%%%%%%%
\begin{figure}[!h]
\centering
\includegraphics[width=1.05\linewidth]{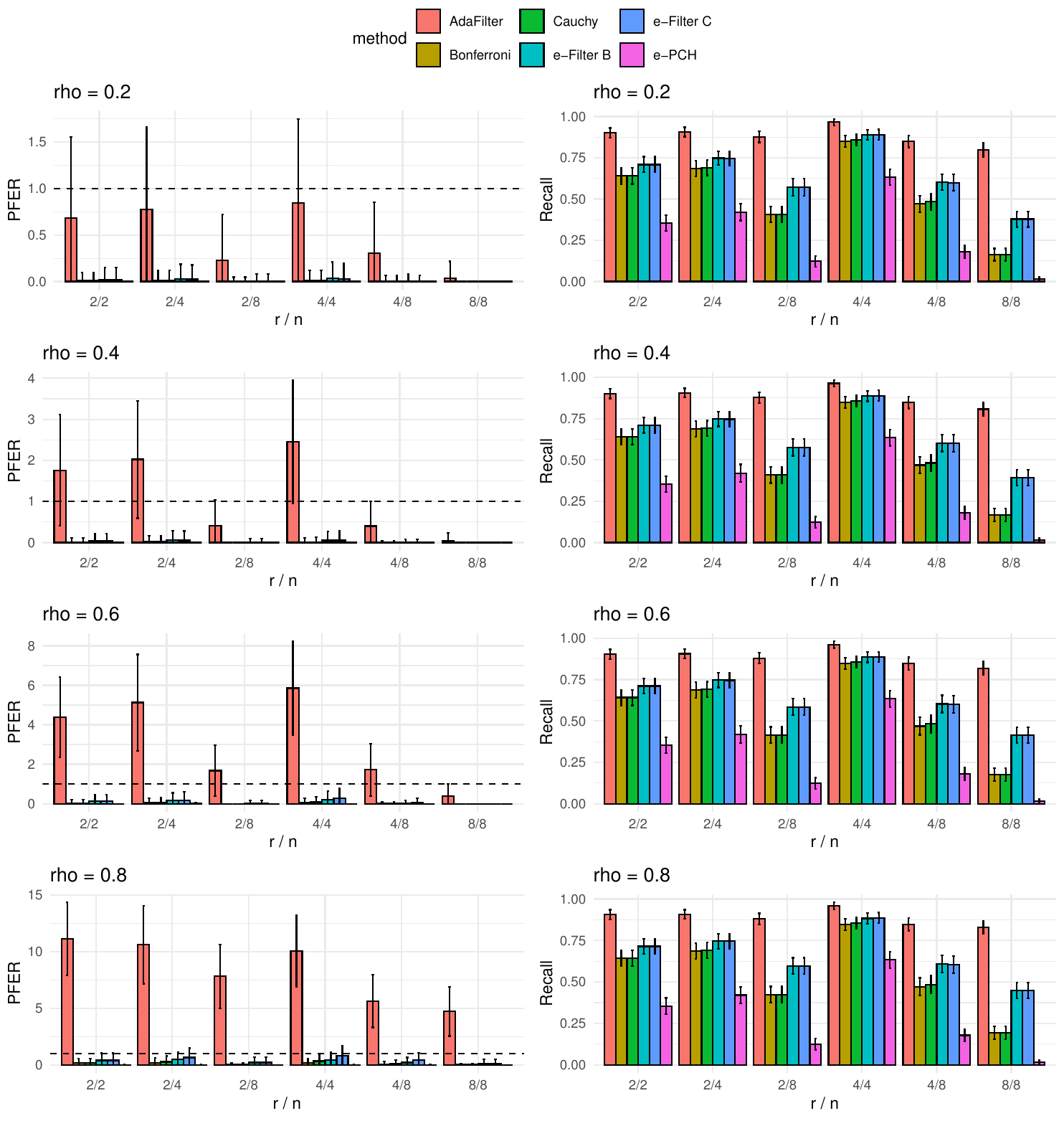}
\caption{Scenario 1: Average PFER and Recall for different methods across $B=500$ repetitions of the data generation scheme. Here $\alpha=1$. The error bars represent one standard deviation above and below the average PFER (Recall) from $B$ repetitions. Table \ref{tab:setting_1_pfer} provides the summary over the $r$ by $n$ combinations.%950 by 1000
}
\label{fig:scenario_1_pfer_rbyn}
\end{figure}

\begin{figure}[!h]
\centering
\includegraphics[width=1.05\linewidth]{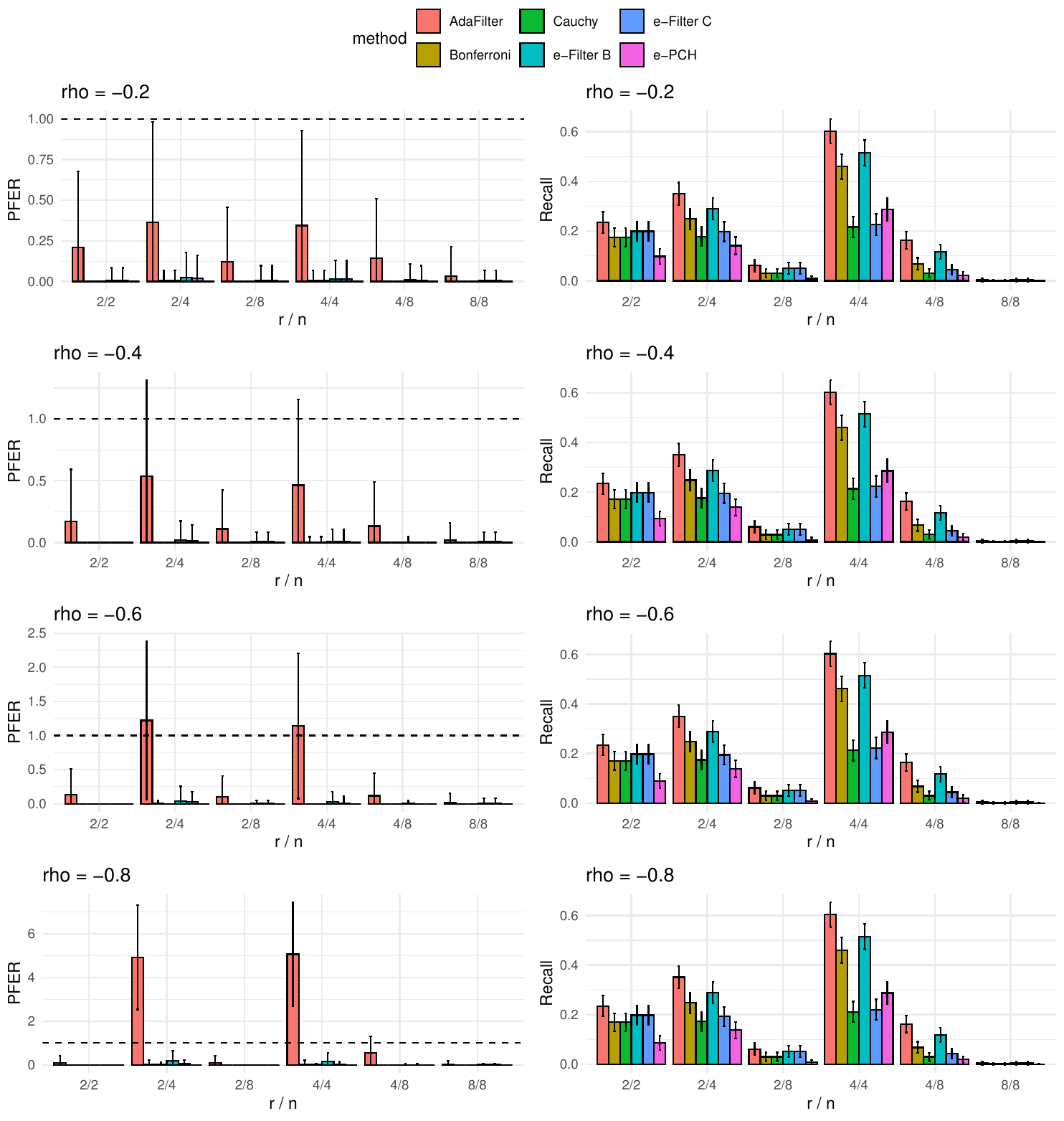}
\caption{Scenario 3: Average PFER and Recall for different methods across $B=500$ repetitions of the data generation scheme. Here $\alpha=1$. The error bars represent one standard deviation above and below the average PFER (Recall) from $B$ repetitions. Table \ref{tab:setting_2_pfer} provides the summary over the $r$ by $n$ combinations.%950 by 1000
}
\label{fig:scenario_2_pfer_rbyn}
\end{figure}

\begin{figure}[!h]
\centering
\includegraphics[width=1\linewidth]{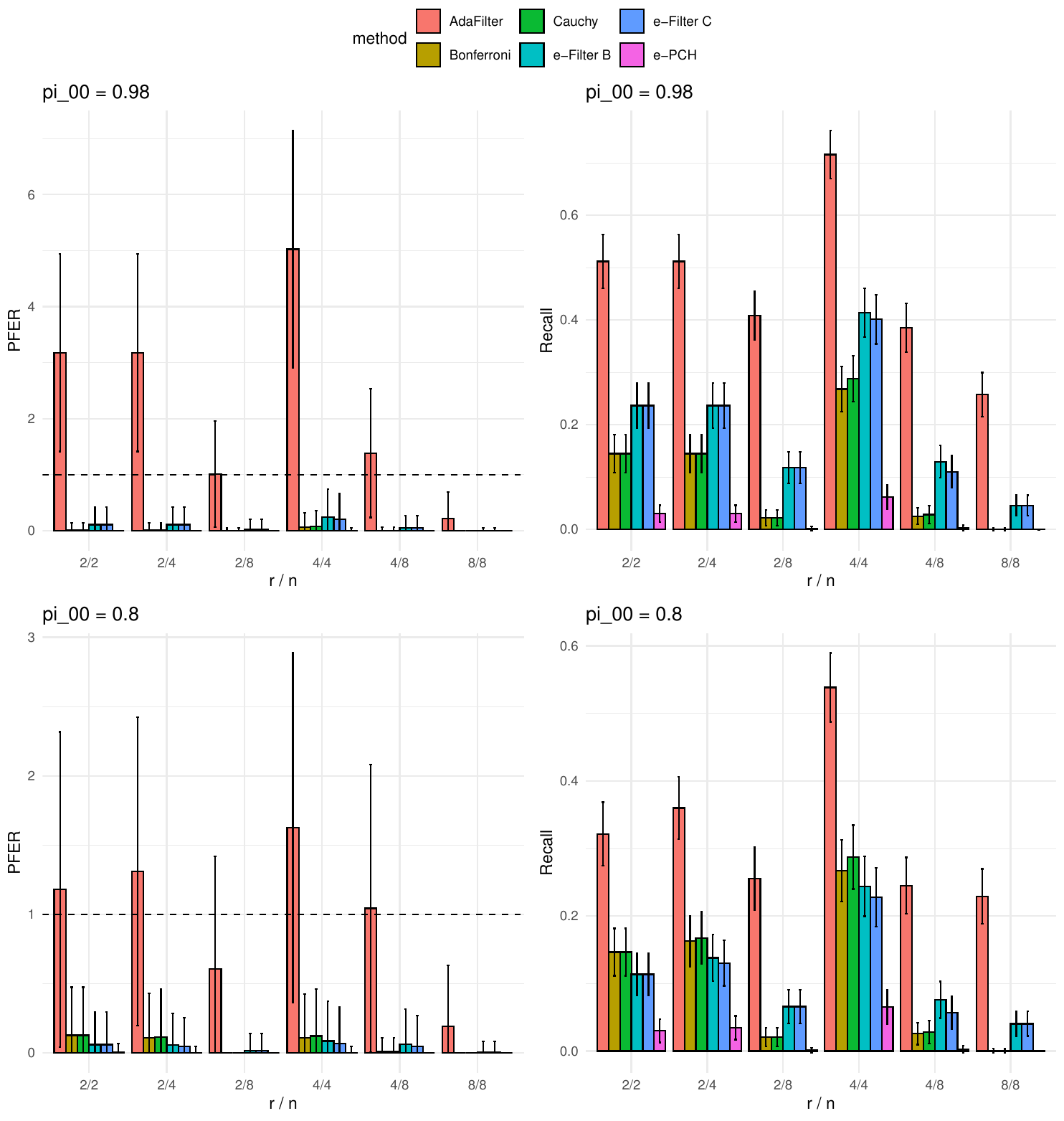}
\caption{Scenario 4: Average PFER and Recall for different methods across $B=500$ repetitions of the data generation scheme. Here $\alpha=1$. The error bars represent one standard deviation above and below the average PFER (Recall) from $B$ repetitions. Table \ref{tab:scenario_4_pfer} provides the summary over the $r$ by $n$ combinations.}
\label{fig:scenario_4_pfer_rbyn}
\end{figure}

\begin{figure}[!h]
\centering
\includegraphics[width=1.05\linewidth]{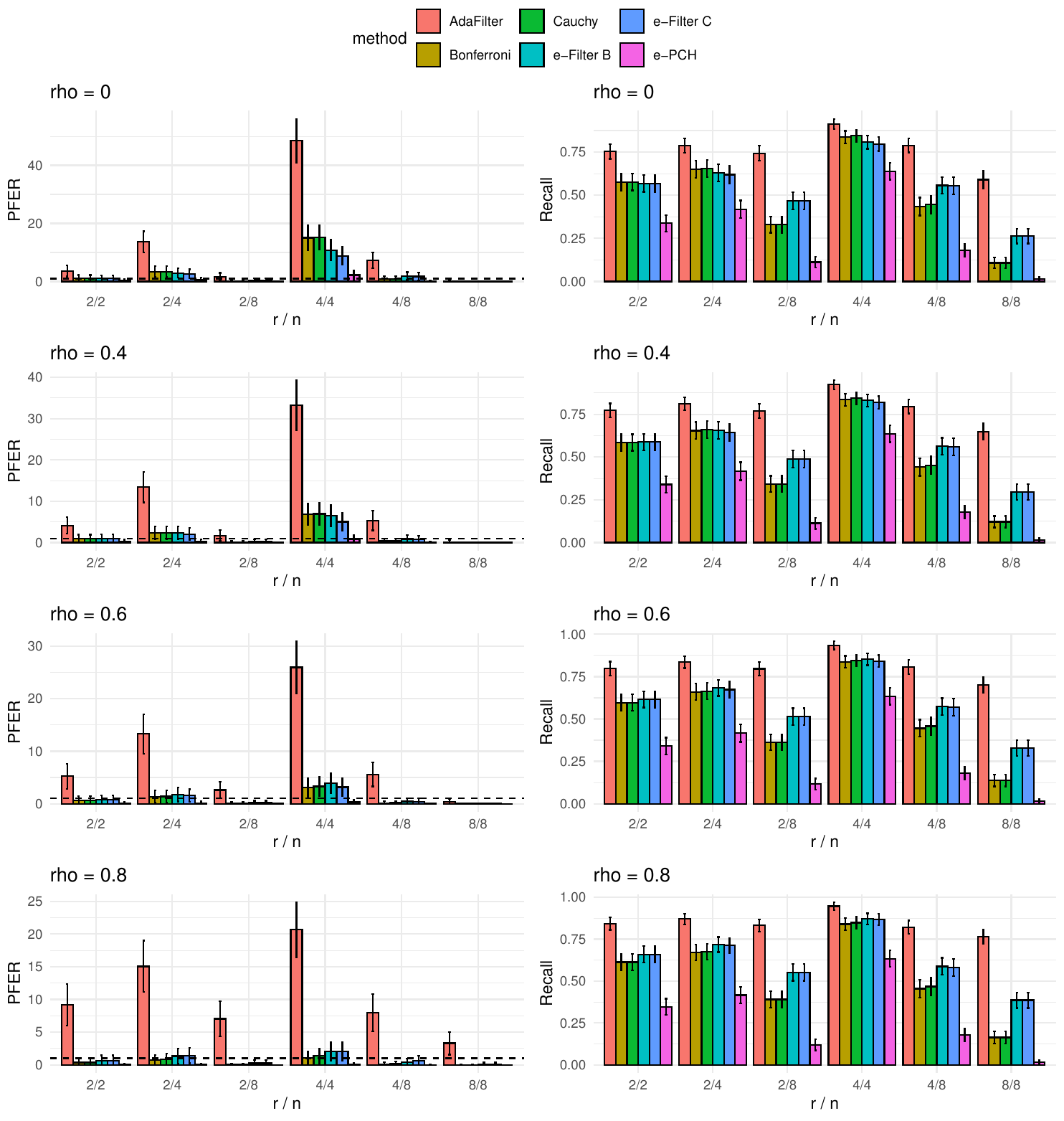}
\caption{Scenario 5: Average PFER and Recall for different methods across $B=500$ repetitions of the data generation scheme. Here $\alpha=1$. The error bars represent one standard deviation above and below the average PFER (Recall) from $B$ repetitions. Table \ref{tab:setting_5_pfer} provides the summary over the $r$ by $n$ combinations.}
\label{fig:scenario_5_pfer_rbyn}
\end{figure}

\begin{figure}[!h]
\centering
\includegraphics[width=1.05\linewidth]{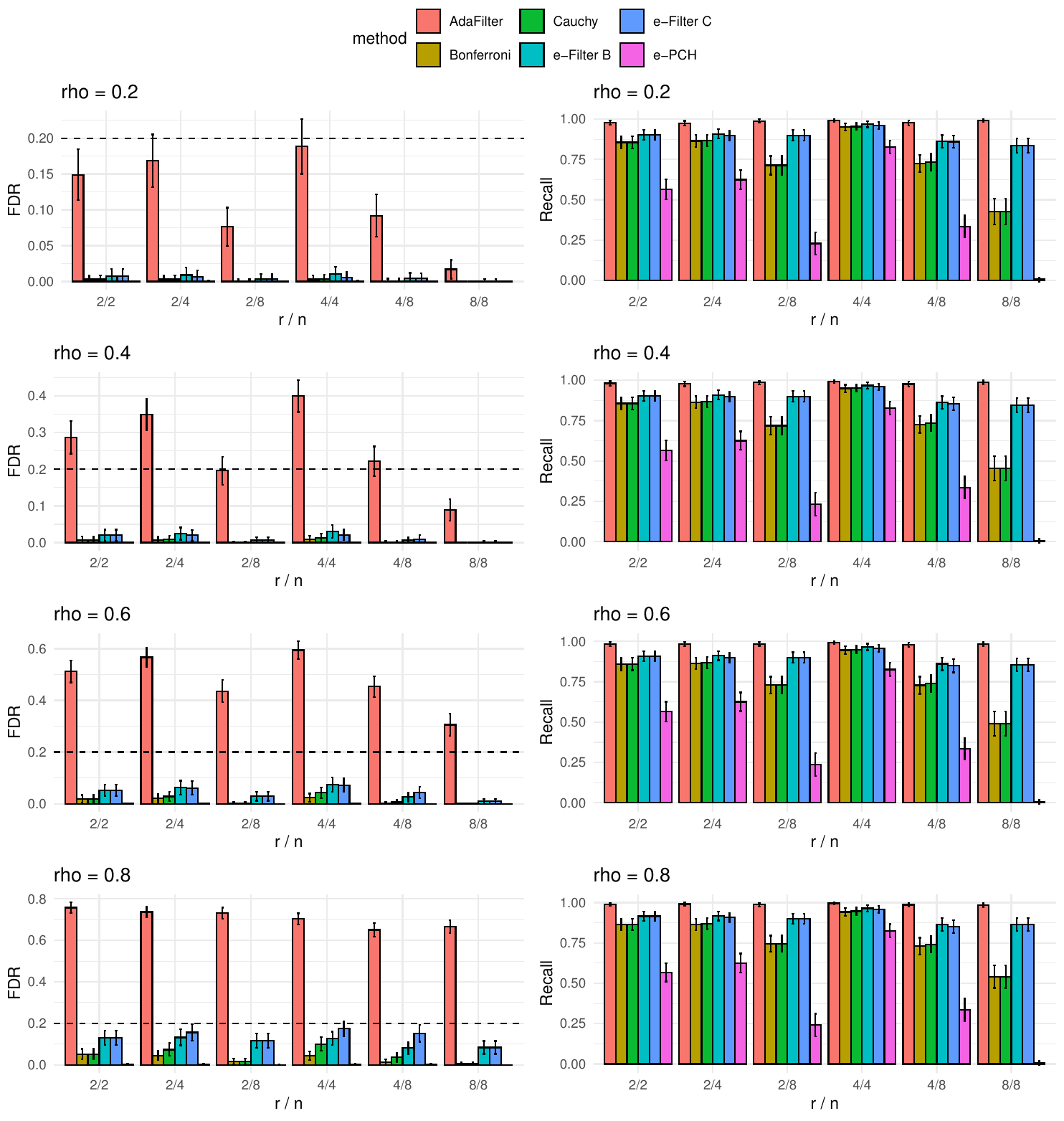}
\caption{Scenario 1: Average FDR and Recall of various methods across $B=500$ repetitions of the data generation scheme. Here $\alpha=0.2$. The error bars represent one standard deviation above and below the average FDR (Recall) from $B$ repetitions. Table \ref{tab:setting_1_fdr} provides the summary over the $r$ by $n$ combinations. %950 by 1000
}
\label{fig:scenario_1_fdr_rbyn}
\end{figure}
\begin{figure}[!h]
\centering
\includegraphics[width=1.05\linewidth]{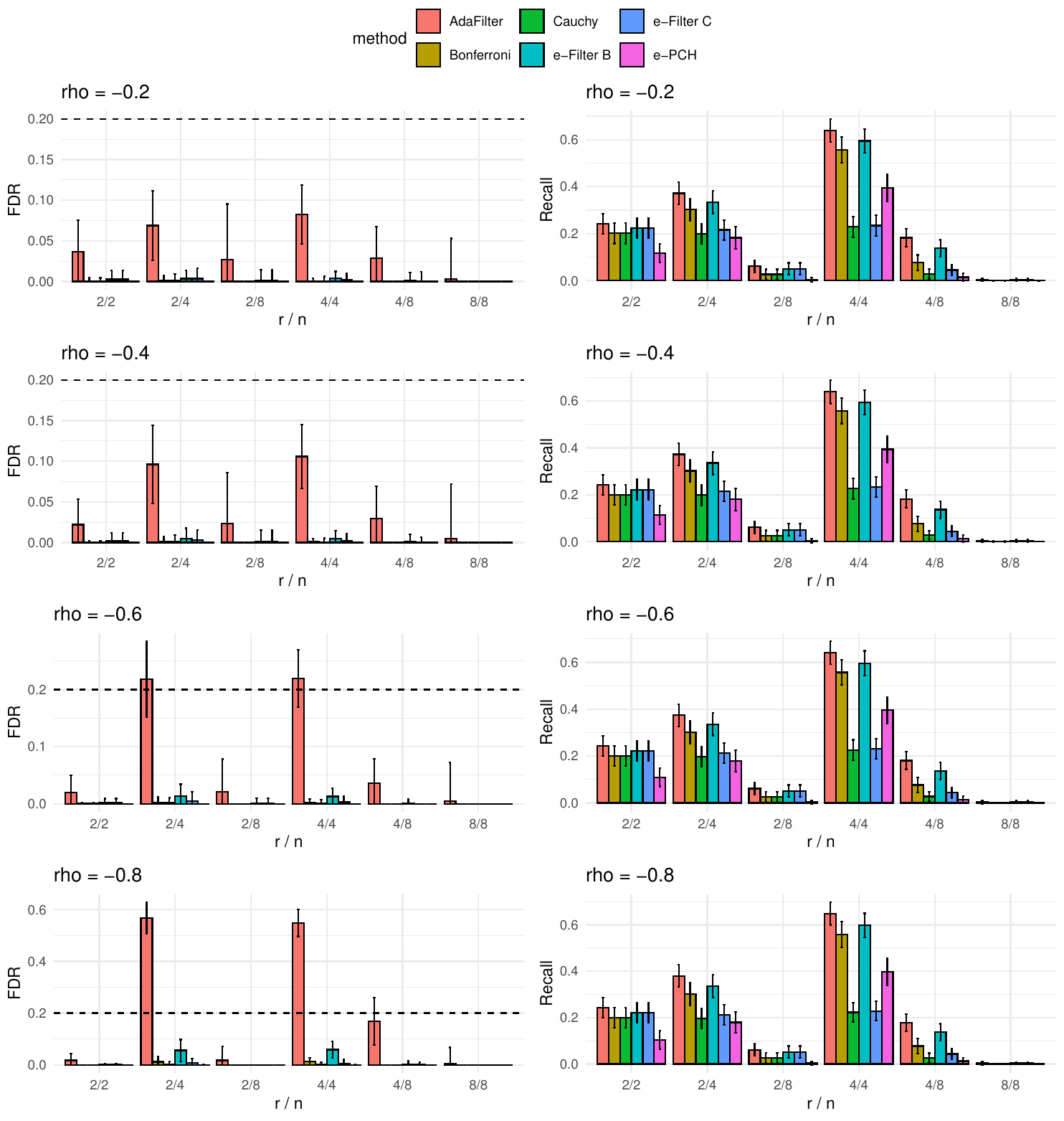}
\caption{Scenario 3: Average FDR and Recall of various methods across $B=500$ repetitions of the data generation scheme. Here $\alpha=0.2$. {The error bars represent one standard deviation above and below the average FDR (Recall) from $B$ repetitions.} Table \ref{tab:setting_2_fdr} provides the summary over the $r$ by $n$ combinations.}
\label{fig:scenario_2_rbyn}
\end{figure}
\begin{figure}[!h]
\centering
\includegraphics[width=1\linewidth]{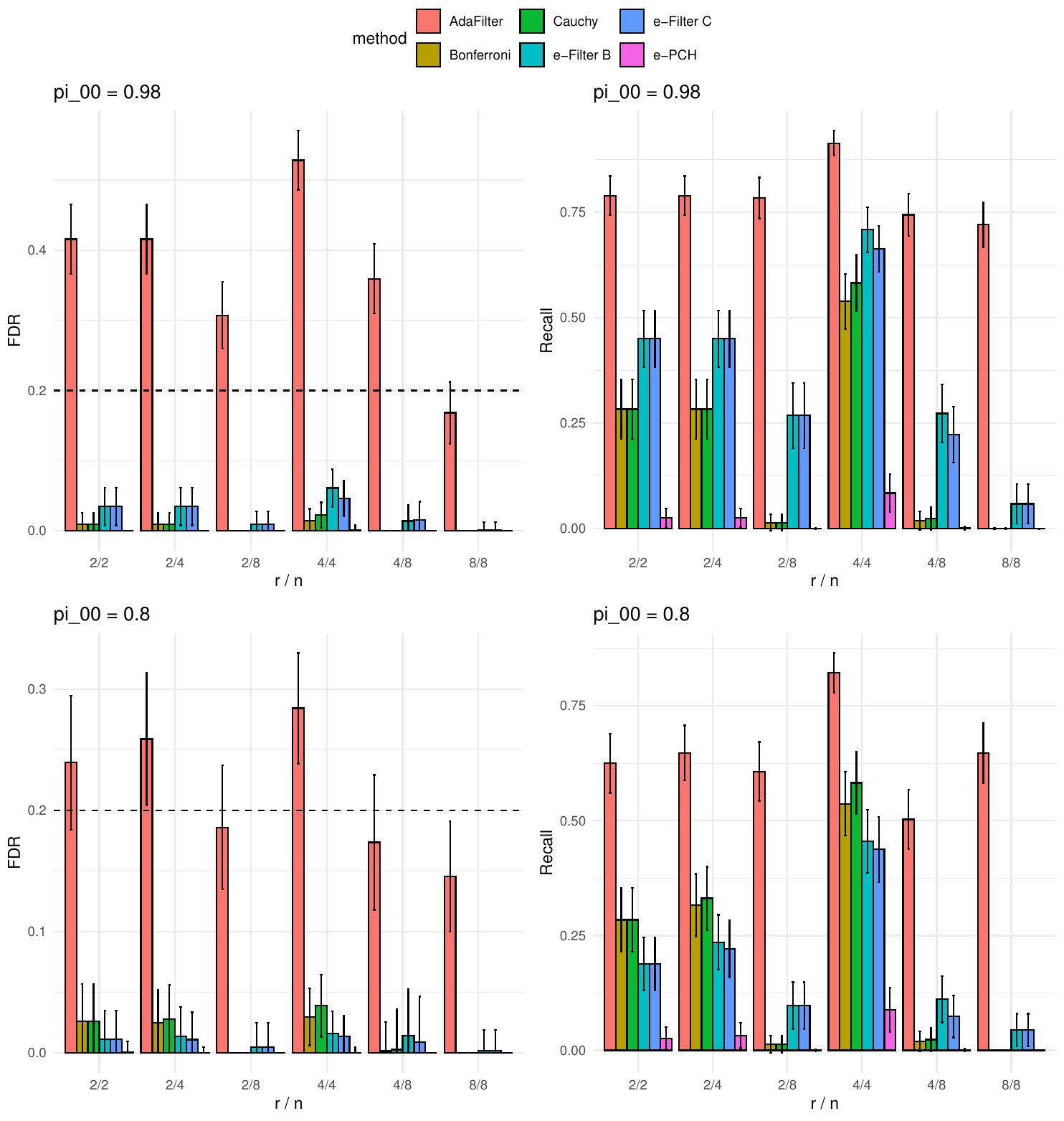}
\caption{Scenario 4: Average FDR and Recall of various methods across $B=500$ repetitions of the data generation scheme. Here $\alpha=0.2$. {The error bars represent one standard deviation above and below the average FDR (Recall) from $B$ repetitions.}  Table \ref{tab:scenario_4_fdr} provides the summary over the $r$ by $n$ combinations.}
\label{fig:scenario_4_fdr_rbyn}
\end{figure}
\begin{figure}[!h]
\centering
\includegraphics[width=1.05\linewidth]{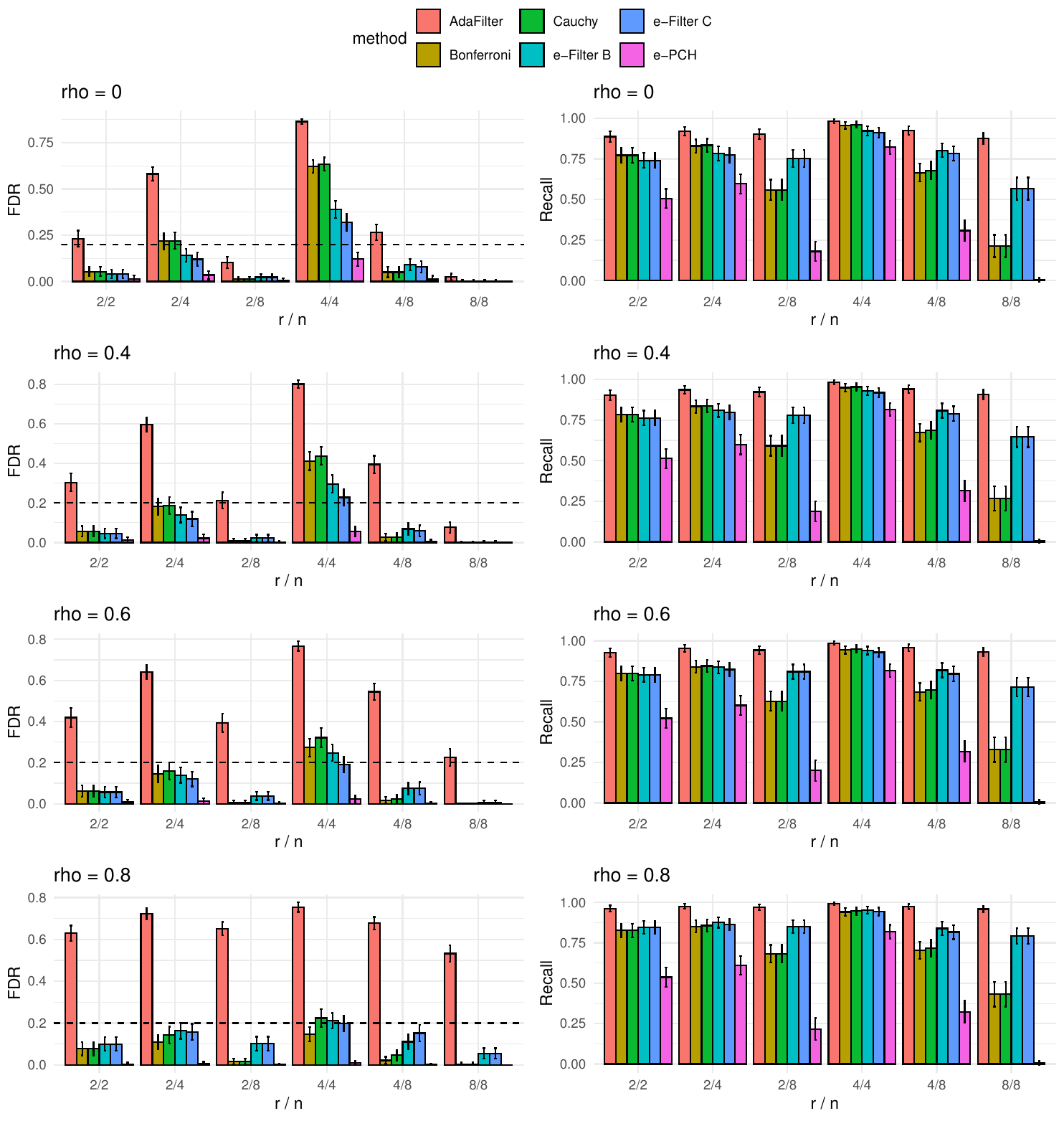}
\caption{Scenario 5: Average FDR and Recall of various methods across $B=500$ repetitions of the data generation scheme. Here $\alpha=0.2$. {The error bars represent one standard deviation above and below the average FDR (Recall) from $B$ repetitions.} Table \ref{tab:setting_5_fdr} provides the summary over the $r$ by $n$ combinations.}
\label{fig:scenario_5_rbyn}
\end{figure}
\end{document}